\title{On Hoare-McCarthy Algebras}
\author{
	Jan A.\ Bergstra\thanks{J.A. Bergstra acknowledges
	support from NWO (project Thread Algebra for Strategic
	Interleaving).} \and
	Alban Ponse \\
\\
  {\small
	  Section Theory of Computer Science,
	  Informatics Institute,
	  University of Amsterdam}\\
	{\small Url: 
	\url{www.science.uva.nl/~{janb,alban}}
	}
}
\date{}
\newcommand{\RP}{\ensuremath{\mathit{RP}}}
\newcommand{\WM}{\ensuremath{\mathit{WM}}}
\newcommand{\lcon}{\ensuremath{\leadsto}}
\newcommand{\CPS}{\axname{{CTS}}}
\newcommand{\CCE}{\ensuremath{{CceTh}}}
\newcommand{\HMA}{\ensuremath{\mathbf{HMA}}}
\newcommand{\BA}{\ensuremath{\mathbb{A}}}
\newcommand{\NT}{\ensuremath{\mathcal{T}}}
\newcommand{\oT}{\ensuremath{\mathbb{T}}}
\newcommand{\SigmaStateless}{\ensuremath{\Sigma^A_{\textit{ce}}}}
\newcommand{\SigmaState}{\ensuremath{\Sigma^A_{\textit{sce}}}}
\newcommand{\SigmaStateful}{\ensuremath{\Sigma^A_{\textit{spa}}}}
\newcommand{\SigmaHMA}{\ensuremath{\Sigma^A_{\textit{ce}}}}
\newcommand{\leftand}{~
     \mathbin{\setlength{\unitlength}{1ex}
     \begin{picture}(1.4,1.8)(-.3,0)
     \put(-.6,0){$\wedge$}
     \put(-.53,-0.36){\circle{0.6}}
     \end{picture}
     }}
\newcommand{\fr}{\ensuremath{{fr}}}
\newcommand{\rp}{\ensuremath{{rp}}}
\newcommand{\con}{\ensuremath{{cr}}}
\newcommand{\wmem}{\ensuremath{{wm}}}
\newcommand{\mem}{\ensuremath{{mem}}}
\newcommand{\stat}{\ensuremath{{st}}}
\newcommand{\tr}{\ensuremath{T}}
\newcommand{\fa}{\ensuremath{F}}
\newcommand{\lef}{\ensuremath{\triangleleft}}
\newcommand{\rig}{\ensuremath{\triangleright}}
\newcommand{\NC}{\ensuremath{{\mathcal C}}}
\newtheorem{theorem}{Theorem}
\newtheorem{lemma}{Lemma}  
\newtheorem{proposition}{Proposition}  
\newtheorem{corollary}{Corollary}  
\newtheorem{remark}{Remark}  
\newtheorem{definition}{Definition}  
\theoremstyle{definition}
\newtheorem*{la}{Lemma}
\newcommand{\CP}{\axname{CP}}
\newcommand{\NS}{\ensuremath{\mathbb S}}
\newcommand{\Nplus}{\ensuremath{\mathbb N}^+}
\newcommand{\axname}[1]{\textup{\ensuremath{\textrm{#1}}}}
\newcommand{\apply}{\bullet}
\newcommand{\reply}{\:\mathbf{!}\:}
\begin{document}

\maketitle

\begin{abstract}
We discuss an algebraic approach to propositional logic with side effects. 
To this end, we
use Hoare's conditional [1985], which is a ternary connective
comparable to if-then-else.
Starting from McCarthy's notion of sequential
evaluation [1963] we discuss a number of valuation
congruences and we introduce Hoare-McCarthy algebras
as the structures that characterize these congruences.
\end{abstract}

{\small \tableofcontents}

\section{Introduction}
In the paper~\cite{BP10} we introduced 
\emph{proposition algebra}, an account of propositional
logic with side effects in an algebraic, equational style.
We define several semantics, all of which identify less than
conventional propositional logic (PL), and the one that 
identifies least is named \emph{free valuation congruence}.

Free valuation congruence can be roughly explained 
as follows: consider
valuation functions defined on strings of propositional 
variables (atoms), then two propositional 
statements $P$ and $Q$ are free valuation equivalent 
if under all such
valuations they yield the same Boolean value, i.e., either
\tr\ (true) or \fa\ (false). For example, the associativity 
of conjunction is preserved under free valuation equivalence,
and $P\wedge\fa$ is free valuation equivalent with \fa\
(both evaluate to \fa).
However, free valuation
equivalence is not a congruence: continuing the
last example and assuming
evaluation proceeds from left to right and $a$ and $b$
are atoms,
\[(a\wedge\fa)\vee b
\quad\text{and}\quad \fa\vee b\]
yield different evaluation results
for any valuation function $f$ with $f(b)=\tr$
and $f(ab)=\fa$ because irrespective of the value of $f(a)$,
$(a\wedge\fa)$ evaluates under $f$ to \fa, and the evaluation
of $b$ in $(a\wedge\fa)\vee b$ is then determined by $f(ab)$,
while $\fa\vee b$ yields under valuation $f$ the value $f(b)$.
The requirement that
propositional statements are equal only if in each context
they yield the same value indeed admits the possibility 
to model side effects. Free valuation \emph{congruence},
defined as the largest
congruence contained in free valuation
equivalence, identifies less than free valuation equivalence
and is the semantical notion we are interested in.
As an example, associativity of conjunction is 
preserved under free valuation congruence.
So, in free valuation equivalence, 
the evaluation of an atom in a propositional statement 
depends on the evaluation history (i.e., the
atoms previously evaluated in that statement).
Although we failed to find a precise definition of a 
``side effect'', we 
use as a working hypothesis
that this kind of dependency models the occurrence of
side effects. 

As implied above,
the \emph{order} of evaluation is crucial in 
proposition algebra.
This immediately implies that the conventional connectives
$\wedge$ and $\vee$ are not appropriate 
because their symmetry is lost:
while in PL the propositional statements
\[\fa \wedge P\quad\text{and}\quad P\wedge\fa\]
are identified, they are not free valuation congruent: 
if evaluation proceeds from left to
right, the evaluation of $P$ in $P\wedge \fa$ may yield
a side effect that is not created upon the evaluation
of $\fa \wedge P$ (in the latter $P$ is not evaluated,
although both statements evaluate to \fa).

A logical connective that incorporates a fixed order
of evaluation ``by nature'' is Hoare's ternary connective 
\[x\lef y\rig z,\]
introduced in the paper~\cite{Hoa85}
as the \emph{conditional}.\footnote{Not to be 
 confused with Hoare's \emph{conditional}
 introduced in in his 1985~book on CSP~\cite{Hoa85a} 
 and in his well-known 1987 paper \emph{Laws of Programming}
 \cite{HHH87} 
 for expressions $P\lef b\rig Q$ with $P$ and $Q$ programs and 
 $b$ a Boolean expression;  these sources do
 not refer to~\cite{Hoa85} that appeared in 1985.}
A more common expression for the conditional $x\lef y\rig z$
is
\[
\emph{if $y$ then $x$ else $z$}
\]
with $x$, $y$ and $z$ ranging over 
propositional statements. 
However, in order to reason 
systematically with conditionals, a notation
such as $x\lef y\rig z$ seems indispensable, and equational
reasoning appears to be the most natural and elegant
type of reasoning.
Note that a left-sequential conjunction $x\wedge y$
can be 
expressed as $y\lef x\rig\fa$. In this paper we 
restrict to the conditional as the only primitive
connective; in the papers~\cite{BP10,BP10a}
we use the notation ${\leftand}$ (taken from~\cite{BBR95})
for left-sequential conjunction
and elaborate on the connection between sequential
binary connectives and the conditional; we return to this 
point in our conclusions (Section~\ref{sec:Con}). 
In~\cite{Hoa85}, Hoare proves that propositional 
logic can be equationally
characterized over the signature
$\Sigma_{\CP}=\{\tr,\fa,\_\lef\_\rig\_\}$
and provides a set of elegant axioms to this end,
including those in Table~\ref{tab:CP}. 

\begin{table}
\centering
\hrule
\begin{align*}
\label{CP1}
\tag{CP1} x \lef \tr \rig y &= x\\
\label{CP2}\tag{CP2}
x \lef \fa \rig y &= y\\
\label{CP3}\tag{CP3}
\tr \lef x \rig \fa  &= x\\
\label{CP4}\tag{CP4}
\qquad
    x \lef (y \lef z \rig u)\rig v &= 
	(x \lef y \rig v) \lef z \rig (x \lef u \rig v)
\end{align*}
\hrule
\caption{The set \CP\ of axioms for proposition algebra}
\label{tab:CP}
\end{table}

In \cite{BP10} we 
define varieties of so-called
\emph{valuation algebras} in order to provide a 
semantic framework for proposition algebra.
These varieties serve the interpretation of a logic
over $\Sigma_{\CP}$ by means of sequential 
evaluation:
in the evaluation of $t_1\lef t_2\rig t_3$, first
$t_2$ is evaluated, and the result of this evaluation
determines further evaluation; upon $\tr$,
$t_1$ is evaluated and determines the final
evaluation result ($t_3$ is not evaluated); upon
\fa, $t_3$ is evaluated and determines the final
evaluation result 
($t_1$ is not evaluated).\footnote{Sequential
  evaluation is also called
  \emph{short-circuit}, \emph{minimal} or
  \emph{McCarthy evaluation}, and
  can be traced back to McCarthy's seminal
  paper~\cite{McC63}.}
The interpretation of propositional statements
that is defined by each of the varieties discussed 
in~\cite{BP10} 
satisfies the axioms in Table~\ref{tab:CP}, and
the interpretation of propositional statements
defined by
the most distinguishing variety is axiomatized by
exactly these four axioms. We
write \CP\ for this set of axioms (where CP 
abbreviates conditional propositions) and
$=_\fr$ (free valuation congruence)
for the associated valuation
congruence. Thus for each pair of closed terms
$t,t'$ over $\Sigma_{\CP}$, i.e., 
terms that do not contain 
variables, but that 
of course may contain atoms (propositional
variables),
\begin{equation}
\label{id:een}
\CP\vdash t=t'\iff t=_\fr t'.
\end{equation}
In~\cite{Chris} it is shown that \CP\ is an
independent axiomatization, and also that \CP\ is 
$\omega$-complete if the set $A$ of 
atoms involved contains at least two elements.
A further introduction to the semantics
defined in~\cite{BP10} can be found 
in Section~\ref{sec:Con}.

In this paper we provide an alternative semantics
for proposition algebra.
We define a particular
type of two-sorted algebras that capture both 
axiomatic derivability and semantic congruence 
at the same time. We call these algebras 
\emph{Hoare-McCarthy algebras} 
(HMAs for short) and for a number of valuation
congruences we prove the existence of a `canonical'
HMA in which axiomatic derivability and semantic
congruence coincide. Thus, our first typical result 
is
\begin{equation}
\label{id:twee}
\CP\vdash t=t'\iff \BA^{sc}\models t=t',
\end{equation}
where $\BA^{sc}$ is the canonical HMA referred to above. 
Here the
direction $\Longrightarrow$ indicates soundness of 
the axiom set \CP\ (which appears to hold in each HMA),
and the other direction indicates completeness.
Thus, the combination of \eqref{id:een} en \eqref{id:twee}
shows that we can characterize free valuation congruence
in a single HMA. A further discussion about the semantics
defined in~\cite{BP10} and the semantics defined in 
this paper and a comparison of these
can be found in Section~\ref{sec:Con}.

In Sections~\ref{sec:rp}-\ref{sec:stat} we consider
classes of valuation functions defined on 
(subsets of) $A^+$
with the property that 
\[f(a_1...a_na_{n+1})\in\{\tr,\fa\}\]
gives the reply of valuation $f$ on atom $a_{n+1}$
after $a_1$ up to $a_n$ have been evaluated, so both
$a_{n+1}$ and the valuation 
history $a_1... a_n$ determine the result of evaluation.
The class of all valuation
functions defines \emph{structural congruence}
(which coincides with free valuation congruence),
these function all have
domain $A^+$ (each valuation history is
significant), and the class of valuation functions
that defines \emph{static} congruence only
considers functions that have
$A$ as their domain (no valuation history is significant; 
this is equivalent to PL).
For $|A|>1$, domains 
that are strictly in between these two are
$A^{\con}$, the set of strings in which no atom has the 
same neighbour, 
and $A^{core}$, the set of strings in which each atom
occurs at most once. Note that if $A$ is finite,
$A^{\con}$ is infinite and
$A^{core}$ is finite, and if
$A=\{a\}$ then $A^{\con}=A^{core}=A$.
We define \emph{contractive} congruence using
$\{\tr,\fa\}^{A^{\con}}$ as its class of valuation functions,
and \emph{memorizing} congruence
with help of
$\{\tr,\fa\}^{A^{core}}$. We distinguish two more
congruences: \emph{repetition-proof}
congruence which is based on a subset of
the function space $\{\tr,\fa\}^{A^+}$, and \emph{weakly memorizing}
congruence which is based on a subset of 
the function space $\{\tr,\fa\}^{A^{core}}$.
For all congruences mentioned, we provide
complete axiomatizations, and in Section~\ref{sec:Con}
we relate these results to similar results proved
in~\cite{BP10}.

In some forthcoming definitions and proofs we use 
the empty string, which we always denote by $\epsilon$.
Furthermore, we use $\equiv$ to denote syntactic equivalence.

\section{Proposition algebras and HMAs}
\label{sec:1}

In this section we define proposition algebras and Hoare-McCarthy 
algebras.

Throughout this paper let $A$ be a non-empty,
denumerable set of atoms
(propositional variables). 
Define $C$ as the sort 
of conditional expressions with signature 
\[\SigmaStateless=\{a:C,~\tr:C,~\fa:C,~
.\lef.\rig.:C\times C\times C\rightarrow C\mid a\in A\},\]
thus each atom in $A$
is a constant of sort $C$. In $\SigmaStateless$,
\emph{ce} stands 
for ``conditional expressions''. 
We write $\NT_{\SigmaStateless}$ for the set of closed terms
over $\SigmaStateless$, and $\oT_{\SigmaStateless}$ for the 
set of all terms. Given an expression $t_1\lef t_2\rig t_3$
we will sometimes refer to $t_2$ as the \emph{central 
condition}. We assume that conditional composition
satisfies the axioms in Table~\ref{tab:CP}. We refer to
this set of axioms with \CP.

\begin{definition}
\label{def:PA}
A $\SigmaStateless$-algebra is a 
\textbf{proposition algebra} 
if it is a model of \CP.
\end{definition}

A non-trivial initial algebra $I(\SigmaStateless,\CP)$
exists. This can be easily shown in the setting of term
rewriting~\cite{Terese}.
Directing all \CP-axioms from left
to right yields a strongly normalizing TRS (term rewriting
system) for closed terms: 
define
a weight fuction $w:\NT_{\SigmaStateless}\rightarrow 
\Nplus$ by
\begin{align*}
w(a)&=2\quad\text{for all }a\in A\\
w(\tr)&=2\\
w(\fa)&=2\\
w(x\lef y\rig z)&=(w(x)\cdot w(z))^{w(y)}
\end{align*}
Clearly, for all rewrite rules $l\rightarrow r$ and closed
substitutions $\sigma$ we have
$w(\sigma(l))>w(\sigma(r))$. 
It is also not difficult to see that
this TRS is weakly confluent, the
critical pairs $\langle t,t'\rangle$
stem from the following combinations:
\begin{align*}
&\eqref{CP1},~\eqref{CP3}\text{ on }\tr\lef\tr\rig \fa:&&
\langle \tr,~\tr\rangle,\\
&\eqref{CP1},~\eqref{CP4}\text{ on }x\lef(y\lef\tr\rig u)\rig v:&&
\langle x\lef y\rig v,~(x\lef y\rig v)\lef\tr\rig(x\lef u\rig v)\rangle,\\
&\eqref{CP2},~\eqref{CP3}\text{ on }\tr\lef\fa\rig \fa:&&
\langle \fa,~\fa\rangle,\\
&\eqref{CP2},~\eqref{CP4}\text{ on }x\lef(y\lef\fa\rig u)\rig v:
&&
\langle x\lef u\rig v,~(x\lef y\rig v)\lef\fa\rig(x\lef u\rig v)\rangle,\\
&\eqref{CP3},~\eqref{CP4}\text{ on }x\lef(\tr\lef z\rig \fa)\rig v:
&&
\langle x\lef z\rig v,~(x\lef \tr\rig v)\lef z\rig(x\lef \fa\rig v)\rangle,\\
&\eqref{CP3},~\eqref{CP4}\text{ on }\tr\lef(y\lef z\rig u)\rig \fa:
&&
\langle y\lef z\rig u,~(\tr\lef y\rig\fa)\lef z\rig(\tr\lef u\rig\fa)\rangle,
\end{align*}
and $\eqref{CP4},~\eqref{CP4}\text{ on }x\lef(w\lef(y\lef z\rig u)\rig r)\rig v$:
\[
\langle 
(x\lef w\rig v)\lef (y\lef z\rig u)\rig (x\lef r\rig v),
~x\lef((w\lef y\rig r)\lef z\rig(w\lef u\rig r))\rig v\rangle
\]
with
common reduct
\[((x\lef w \rig v)\lef y \rig (x\lef r\rig v ))\lef z\rig{}
((x\lef w \rig v)\lef u \rig (x\lef r\rig v )).
\]
Hence we have a ground-complete TRS, and a closed term $t$ is a
normal form if, and only if, $t\in A\cup\{\tr,\fa\}$, or $t$ satisfies
the following property:
\[
\text{ If $t_1\lef t_2\rig t_3$ is a subterm of $t$, then $t_2 
\in A$ and it is not the case that $t_1\equiv \tr$ and $t_3\equiv\fa$.}
\]
However, the
normal forms resulting from this TRS
are not particularly
suitable for systematic reasoning, and we introduce
another class of closed terms for this purpose.

\begin{definition}
\label{def:bf}
A term $t\in\NT_{\SigmaStateless}$ is a 
\textbf{basic form} if for $a\in A$,
\[t::=\tr\mid\fa\mid t\lef a \rig t. \]
\end{definition}

\begin{lemma}
\label{lem:nf}
For each closed term $t\in\NT_{\SigmaStateless}$
there exists a unique basic form $t'$ with $\CP\vdash t=t'$.
\end{lemma}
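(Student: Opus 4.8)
The plan is to prove the lemma in two parts: existence of a basic form provably equal to $t$, and uniqueness. For existence, I would argue by induction on the structure of $t \in \NT_{\SigmaStateless}$. The base cases $t \equiv \tr$, $t\equiv\fa$, and $t\equiv a$ for $a\in A$ are handled directly: $\tr$ and $\fa$ are already basic forms, and $a$ equals the basic form $\tr\lef a\rig\fa$ by \eqref{CP3}. For the induction step $t\equiv t_1\lef t_2\rig t_3$, I would first apply the induction hypothesis to obtain basic forms $t_1',t_2',t_3'$ with $\CP\vdash t_i=t_i'$, so it suffices to bring $t_1'\lef t_2'\rig t_3'$ into basic form. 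This requires a nested induction --- on the structure of the basic form $t_2'$ sitting in the central condition --- using \eqref{CP1}, \eqref{CP2}, and \eqref{CP4}: if $t_2'\equiv\tr$ use \eqref{CP1}, if $t_2'\equiv\fa$ use \eqref{CP2}, and if $t_2'\equiv s_1\lef a\rig s_2$ apply \eqref{CP4} to push the conditional composition inward, yielding $(t_1'\lef s_1\rig t_3')\lef a\rig(t_1'\lef s_2\rig t_3')$, and then recurse into the two new central conditions $s_1$ and $s_2$ (which are structurally smaller basic forms) and finally into the outer arguments. A suitable termination measure --- for instance the one suggested by the weight function $w$ already in the excerpt, or simply the number of symbols in $t_2'$ together with the sizes of $t_1',t_3'$ --- makes this recursion well-founded.

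For uniqueness, I would show that if $t'$ and $t''$ are basic forms with $\CP\vdash t'=t''$, then $t'\equiv t''$. The natural route is semantic: invoke the soundness direction of \eqref{id:een}, namely $\CP\vdash t'=t''\implies t'=_\fr t''$, and then prove that distinct basic forms are \emph{not} free valuation congruent by exhibiting a separating valuation. Concretely, I would argue by induction on the combined structure of $t'$ and $t''$: if one is a constant ($\tr$ or $\fa$) and the other is not, or they are different constants, a trivial valuation separates them; if $t'\equiv t_1'\lef a\rig t_2'$ and $t''\equiv t_1''\lef b\rig t_2''$ with $a\not\equiv b$, separate them via a valuation whose first query is the atom and whose behaviour thereafter is chosen so the Boolean outcomes differ; and if $a\equiv b$ but, say, $t_1'\not\equiv t_1''$, apply the induction hypothesis to get a valuation separating $t_1'$ from $t_1''$ and prepend the response "$\tr$ on $a$" to it. Alternatively, uniqueness can be obtained purely syntactically by relating basic forms to the TRS normal forms from the excerpt --- every basic form can be rewritten to the normal form by eliminating subterms of shape $\tr\lef a\rig\fa$, this rewriting is confluent and terminating, and the resulting normal form determines the basic form --- but the semantic argument is cleaner and reuses \eqref{id:een}.

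The main obstacle I anticipate is the existence step, specifically setting up the nested/well-founded induction so that the repeated applications of \eqref{CP4} provably terminate: each use of \eqref{CP4} duplicates $t_1'$ and $t_3'$, so a naive "size of $t$" induction fails, and one must see that the relevant measure is something like $w(t)$ (which strictly decreases under \eqref{CP4} applied left-to-right, as already observed in the excerpt for all rewrite rules) or a lexicographic combination tracking the central condition's structure. Once the right measure is identified the calculations are routine. For uniqueness, the only delicate point is carefully choosing the separating valuations and checking they are genuinely defined on the relevant strings in $A^+$; this is straightforward but requires attention to the evaluation order baked into the conditional.
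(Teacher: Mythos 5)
Your existence argument is sound and, in substance, matches what the paper does: the paper normalizes $t$ with the ground-complete TRS obtained by orienting the \CP-axioms left-to-right and then expands each argument-position atom $a$ into $\tr\lef a\rig\fa$, while you run essentially the same computation as a structural induction. Your worry about termination under \eqref{CP4} is unnecessary, though: in the inner induction the measure is just the structure of the central condition $t_2'$, and \eqref{CP4} replaces it by the strictly smaller $s_1$ and $s_2$ while leaving $t_1',t_3'$ untouched, so plain structural induction on $t_2'$ suffices and no weight function is needed.

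The uniqueness argument you prefer, however, has a genuine gap. You propose to refute $\CP\vdash t'=t''$ for distinct basic forms $t',t''$ by exhibiting a valuation $f\in\{\tr,\fa\}^{A^+}$ under which they yield different Boolean outcomes. No such valuation exists in general: $\tr$ and $\tr\lef a\rig\tr$ are distinct basic forms, yet both evaluate to $\tr$ under \emph{every} valuation, so they are free valuation \emph{equivalent}; likewise $\tr\lef a\rig\tr$ and $\tr\lef b\rig\tr$ defeat your ``different central atoms'' case. They are nevertheless not free valuation \emph{congruent} (and not \CP-derivably equal) --- this distinction between equivalence and congruence is the central point of the paper, and separating such pairs requires either a surrounding context or, as in the paper's proof of Theorem~\ref{thm:1}, the state-transformer component: there one shows $(\tr\lef a\rig\tr)\apply f\neq\tr\apply f$ for a suitable $f$, using $\apply$ and not just the reply. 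So your base case ``a trivial valuation separates them'' is false, and the induction built on it collapses. Your syntactic fallback --- that collapsing subterms $\tr\lef a\rig\fa$ to $a$ puts basic forms in bijection with the TRS normal forms, whose uniqueness is already established --- is the correct repair and is exactly the route the paper takes (in the opposite direction); if you keep the semantic route instead, you must separate distinct basic forms by both $\reply$ and $\apply$ as in Theorem~\ref{thm:1}, not by evaluation outcome alone.
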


\begin{proof}
Let $t''$ be the unique normal form of $t$. 
Replace in $t''$ each subterm that is a 
single atom $a$ by $\tr\lef a\rig\fa$. This
results in a unique basic form $t'$ and clearly 
$\CP\vdash t=t'$.
\end{proof}

Let $S$ be a non-empty sort of states with constant $c$.
We extend the signature $\SigmaStateless$ to
\[\SigmaState=\SigmaStateless\cup
\{c:S,~.\lef.\rig.:S\times C\times S\rightarrow S\},\]
where \emph{sce} stands for ``states and
conditional expressions''.

\begin{definition}
\label{def:2PA}
A $\SigmaState$-algebra is a \textbf{two-sorted proposition algebra} 
if its $\SigmaStateless$-reduct is a proposition
algebra, and if it satisfies
the following axioms where $x,y,z$ range over conditional
expressions and $s,s'$ range over states:
\begin{align}
\label{2S1}\tag{TS1}
s\lef\tr\rig s'&=s,\\
\label{2S2}\tag{TS2}
s\lef\fa\rig s'&=s',\\
\label{2S3}\tag{TS3}
x\ne\tr\wedge x\ne\fa&\rightarrow 
s\lef x\rig s'=c.	
\end{align}
\end{definition}

Later on (after the next definition) we comment on these axioms.

\begin{proposition}
\label{prop:2s}
If $\CP\vdash t=t'$, then
$s\lef t\rig s'=s\lef t'\rig s'$
holds in each two-sorted proposition algebra.
\end{proposition}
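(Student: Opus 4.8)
The plan is to obtain the statement directly from the definition of a two-sorted proposition algebra together with the fact that equality is compatible with all operations; somewhat pleasantly, the axioms \eqref{2S1}--\eqref{2S3} are not needed for this particular direction.

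First I would fix a two-sorted proposition algebra $\mathcal{B}$. By Definition~\ref{def:2PA} its $\SigmaStateless$-reduct is a proposition algebra, hence by Definition~\ref{def:PA} a model of \CP. So if $\CP\vdash t=t'$, then $t=t'$ holds in that reduct, i.e., for every assignment $\rho$ of the variables occurring in $t,t'$ to the $C$-sort of $\mathcal{B}$ the elements denoted by $t$ and by $t'$ under $\rho$ coincide; write $\rho(t)=\rho(t')$ for this common element. Next I would push this equality through the state-indexed conditional: extend $\rho$ by assigning arbitrary elements $p,p'$ of the $S$-sort to the state variables $s,s'$; then, using the operation $\_\lef\_\rig\_$ of type $S\times C\times S\to S$, the term $s\lef t\rig s'$ denotes $p\lef\rho(t)\rig p'$ and $s\lef t'\rig s'$ denotes $p\lef\rho(t')\rig p'$, and these are equal because $\rho(t)=\rho(t')$. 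Since $\rho$, $p$ and $p'$ were arbitrary, $s\lef t\rig s'=s\lef t'\rig s'$ holds in $\mathcal{B}$.

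Equivalently, one may argue by induction on the length of a derivation of $t=t'$ from \CP: the reflexivity, symmetry, transitivity and context/substitution rules are immediate, and for each axiom instance among \eqref{CP1}--\eqref{CP4} one invokes that it already holds in the $\SigmaStateless$-reduct and then closes under the $S$-sorted conditional as above. There is no genuine obstacle here; the only point needing a moment's care is the overloading of $\_\lef\_\rig\_$ across the two sorts, so that one should note that $t$ and $t'$, being terms over $\SigmaStateless$, use only the $C$-sorted conditional, whence ``$t=t'$'' is meaningful in the reduct and transfers to $\mathcal{B}$. The axioms \eqref{2S1}--\eqref{2S3} play no role in this implication; they become relevant only when one later wants to turn such implications into equivalences for a suitable canonical two-sorted proposition algebra.
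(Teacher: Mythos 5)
Your proof is correct; the paper in fact states Proposition~\ref{prop:2s} without any proof, treating it as immediate, and your argument (soundness of \CP\ in the $\SigmaStateless$-reduct plus the fact that the $S$-sorted operation $\_\lef\_\rig\_$, being a function, preserves equality of its middle argument) is exactly the evident justification, including the correct observation that \eqref{2S1}--\eqref{2S3} are not needed.
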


So, the state set of a two-sorted proposition 
algebra can be seen as one that is
equipped with an if-then else construct and conditions that
stem from \CP.  
We extend the signature $\SigmaState$ to
\[\SigmaStateful=\SigmaState\cup
\{\apply:C\times S\rightarrow S,
~\reply:C\times S\rightarrow C\},\]
where \emph{spa} stands for ``stateful proposition algebra''
(see below).
The operator $\apply$ is called ``apply'' and the
operator $\reply$ is called ``reply'' and we further
assume that these operators bind stronger than
conditional composition.  The apply
and reply operator are taken from~\cite{BM09}. 

\begin{definition}
\label{def:SPA}
A $\SigmaStateful$-algebra is a \textbf{stateful proposition algebra}, 
SPA for short, 
if its reduct to $\SigmaState$ is a two-sorted proposition 
algebra, and
if it satisfies
the following axioms where $x,y,z$ range over conditional
expressions and $s$ ranges over states:
\begin{align}
\label{SPA1}
\tag{SPA1}
\tr\reply s&=\tr,\\
\label{SPA2}
\tag{SPA2}
\fa\reply s&=\fa,\\
\label{SPA3}
\tag{SPA3}
(x\lef y\rig z)\reply s&=
x\reply (y\apply s)\lef y\reply s\rig
z\reply(y\apply s),\\
\label{SPA4}
\tag{SPA4}
\tr\apply s&=s,\\
\label{SPA5}
\tag{SPA5}
\fa\apply s&=s,\\
\label{SPA6}
\tag{SPA6}
(x\lef y\rig z)\apply s&=
x\apply (y\apply s)\lef y\reply s\rig
z\apply (y\apply s),\\
\label{SPA7}
\tag{SPA7}
x\reply s=\tr&\vee x\reply s =\fa,\\
\label{SPA8}
\tag{SPA8}
\forall s(x\reply s=y\reply s\wedge x\apply s=y\apply s)&\rightarrow
x=y.
\end{align}
We refer to \eqref{SPA7} as \textbf{two-valuedness} and
we write $\CPS$ (for \textup{C}P and \textup{T}S and \textup{S}PA) 
for the set that contains all fifteen axioms involved.
\end{definition}

In a stateful proposition algebra $\NS$
with domain $C'$ of conditional expressions
and domain $S'$ of states, a conditional
expression $t$ can be associated with 
a `valuation function'
$t\reply:S'\rightarrow\{\tr,\fa\}$ (the evaluation 
of $t$ in some initial state) and
a `state transformer' 
$t\apply{}:S'\rightarrow S'$.

We note that the axioms of a SPA are consistent with those of
a two-sorted proposition algebra, and that
the special instances 
\[s\lef\tr\rig s=s\quad\text{and}\quad
s\lef\fa\rig s=s\]
of axioms~\eqref{2S1} and \eqref{2S2} are derivable:
first note that $s=\tr\apply(\tr\apply s)$ by axiom~\eqref{SPA4}, 
$\tr=\tr\reply s$ by \eqref{SPA1}, and
$\tr\lef\tr\rig\tr=\tr$
by CP-axiom~\eqref{CP1}, and thus
\begin{align*}
\CPS\vdash s\lef \tr \rig s
&=\tr\apply(\tr\apply s)\lef \tr\reply s\rig \tr\apply(\tr\apply s)\\
&=(\tr\lef \tr\rig \tr)\apply s\\
&=\tr\apply s\\
&= s.
\end{align*}
(Note that more derivable \CP-identities can be used to prove 
this fact, e.g., $\tr\lef\tr\rig\fa=\tr$.)
In a similar way one can derive
$s\lef\fa\rig s=s$. 

\begin{definition}
\label{def:hma}
A \textbf{Hoare-McCarthy algebra}, HMA for short, is the 
$\SigmaStateless$-reduct of
a stateful proposition algebra.
\end{definition}

For each HMA \BA\ we have by definition
$\BA\models \CP$. In Theorem~\ref{thm:1} below
we prove the existence of an HMA that characterizes \CP\
in the sense that a closed equation is valid only if it is
derivable from \CP. 

Recall $\NT_{\SigmaHMA}$ is the set of closed terms
over $\SigmaHMA$. We define 
\emph{structural congruence}, notation 
\[=_{sc}\]
on $\NT_{\SigmaHMA}$ as the congruence generated by
$\CP$.

\begin{theorem}
\label{thm:1}
An HMA that characterizes \CP\ exists: there is an
HMA $\BA^{sc}$ such that for all 
$t,t'\in\NT_{\SigmaHMA}$,
$\CP\vdash t=t'\iff \BA^{sc}\models t=t'$.
\end{theorem}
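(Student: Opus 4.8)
The plan is to construct the canonical HMA $\BA^{sc}$ explicitly as the $\SigmaStateless$-reduct of a concrete stateful proposition algebra $\NS$, and then to verify the two implications separately. Soundness (the $\Longrightarrow$ direction) is immediate from Definition~\ref{def:hma} together with Proposition~\ref{prop:2s}, since every HMA models \CP\ by definition. So the real content is the $\Longleftarrow$ direction: if $\BA^{sc}\models t=t'$ then $\CP\vdash t=t'$. For this I would exploit Lemma~\ref{lem:nf}: it suffices to show that distinct basic forms are distinguished in $\BA^{sc}$, because every closed term is \CP-provably equal to a unique basic form.

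First I would build the states. The natural choice, given the discussion in the introduction about valuation functions on $A^+$, is to take the state sort $S'$ to be (essentially) the set of valuation functions $f:A^+\to\{\tr,\fa\}$ — or more precisely, pairs consisting of such an $f$ together with a finite evaluation history $\sigma\in A^*$ recording which atoms have already been evaluated; the distinguished constant $c$ can be any fixed dummy state used to absorb the ill-typed cases forced by axiom~\eqref{2S3}. On this state set one defines $a\apply(f,\sigma)=(f,\sigma a)$ and $a\reply(f,\sigma)=f(\sigma a)$ for atoms $a$, and extends these to all conditional expressions by the recursion dictated by \eqref{SPA3} and \eqref{SPA6}; \eqref{SPA1}, \eqref{SPA2}, \eqref{SPA4}, \eqref{SPA5} fix the base cases on $\tr,\fa$. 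One then checks that \eqref{SPA7} (two-valuedness) holds by construction, and — the point requiring care — that \eqref{SPA8} holds, i.e.\ that the map sending a conditional expression to its (reply, apply) behaviour is injective on the carrier $C'$ we choose. The clean way to arrange this is to take $C'$ itself to be the set of basic forms, with conditional composition interpreted as "form the composite conditional expression and rewrite to its basic form"; then \eqref{SPA8} amounts to: two basic forms with the same reply/apply behaviour on all states are syntactically identical. The two-sorted axioms \eqref{2S1}--\eqref{2S3} must also be checked on the state sort, which is routine given the definitions.

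The key separation step is then: if $t$ and $t'$ are distinct basic forms, exhibit a valuation $f$ (and history, which can be taken to be $\epsilon$) under which $t\reply(f,\epsilon)\ne t'\reply(f,\epsilon)$, or under which their apply-behaviours differ. I would prove this by induction on the structure of basic forms. If one of $t,t'$ is $\tr$ and the other $\fa$, any $f$ works. If $t\equiv t_1\lef a\rig t_2$ and $t'\equiv t_1'\lef a'\rig t_2'$ with $a\not\equiv a'$, choose $f$ with $f(a)=\tr$ and $f(a')=\fa$ (possible since $a\not\equiv a'$) and recurse into $t_1$ versus $t_2'$ using fresh history $a=a'$ — here one must be slightly careful about whether distinctness already shows up in the replies or only deeper in the trees, which is exactly where having the history component of the state pays off. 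If $a\equiv a'$ but the subterms differ, split on the value $f(a)$ and recurse on the appropriate pair of (necessarily distinct) subterms, prepending $a$ to the history. The induction goes through because every basic form of positive height begins by testing an atom, so after reading that atom's reply we are reduced to comparing strictly smaller basic forms under an extended history.

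The main obstacle I anticipate is \eqref{SPA8}: making sure the chosen carrier $C'$ is rich enough to satisfy the other SPA axioms (so that $\NS$ is genuinely a SPA and hence $\BA^{sc}$ genuinely an HMA) while still being quotiented finely enough — or rather, as here, not quotiented at all, being literally the basic forms — so that the extensionality axiom holds. Concretely one has to verify that interpreting $\_\lef\_\rig\_$ by "compose-then-normalize-to-basic-form" is well defined and satisfies \CP\ on basic forms; this follows from Lemma~\ref{lem:nf} and the ground-completeness of the TRS established above, but it must be stated. Once $\NS$ is in hand and \eqref{SPA8} is confirmed via the separation argument, the theorem follows: $\BA^{sc}\models t=t'$ forces the basic forms of $t$ and $t'$ to coincide, hence $\CP\vdash t=t'$ by Lemma~\ref{lem:nf}.
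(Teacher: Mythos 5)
Your proposal follows essentially the same route as the paper's proof: states built from valuation functions on $A^+$ (the paper uses the shifted function $(a\apply f)(\sigma)=f(a\sigma)$ where you keep the evaluation history explicit as a pair, and it takes the quotient $\NT_{\SigmaHMA}/_{=_{sc}}$ as carrier where you take the basic forms; these choices are equivalent via Lemma~\ref{lem:nf}), soundness for free from the definition of an HMA, and axiom~\eqref{SPA8} established by a separation induction on basic forms, constructing a distinguishing valuation in each case. The one step that is wrong as written is the case of distinct head atoms $a\not\equiv a'$: ``recursing into $t_1$ versus $t_2'$'' cannot work, because those subterms may well be identical (e.g.\ both $\tr$), so no recursion distinguishes anything; what the paper does here --- and what your history component indeed buys you, as you half-observe --- is that the \emph{apply} outputs already differ, since the evaluation trace of $t$ begins with $a$ and that of $t'$ with $a'$, so taking $f$ constantly $\tr$ on strings starting with $a$ and constantly $\fa$ on strings starting with $a'$ makes the two resulting states disagree (at argument $a$ in the paper's encoding, or outright in the history component in yours). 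With that case stated properly your argument is the paper's proof.
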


\begin{proof}
We construct the $\SigmaStateful$-algebra $\NS^{sc}$
with
$C'=\NT_{\SigmaHMA}/_{=_{sc}}$ as its set of
conditional expressions and the function space
\[S'=\{\tr,\fa\}^{A^+}\]
as its set of states. For 
each state $f$
and atom $a\in A$ define 
$a\reply f=f(a)$ and $a\apply f$ as the function defined 
for $\sigma\in A^+$ by 
\[(a\apply f)(\sigma)=f(a\sigma).\]
The state constant $c$ is given an arbitrary interpretation, and
the axioms~\eqref{2S1}--\eqref{2S3}
define $.\lef .\rig.:S'\times C'\times S'$ in $\NS^{sc}$.
The axioms~\eqref{SPA1}--\eqref{SPA6}
fully determine
the functions $\reply$ and $\apply$, and this is well-defined:
if $t=_{sc}t'$ then for all $f$, $t\reply f=t'\reply f$ and
$t\apply f=t'\apply f$ (this follows by inspection of the \CP\
axioms). The axiom~\eqref{SPA7} holds by construction of $S'$.
In order to prove that $\NS^{sc}$ is a SPA 
it remains to be shown that 
axiom~\eqref{SPA8} holds, i.e.,
for all $t,t'\in\NT_{\SigmaHMA}$,
\[\forall f(t\reply f=t'\reply f\wedge
t\apply f=t'\apply f)\rightarrow t=_{sc}t'.\]
This follows by contraposition.
By Lemma~\ref{lem:nf} we may assume that $t$ and $t'$ are
basic forms, and we apply induction on the
complexity of $t$.

\begin{enumerate}
\item
If $t\equiv \tr$, then $t'\equiv\fa$ yields $t\reply f
\ne t'\reply f$ for any $f$,  and if 
$t'\equiv t_1\lef a\rig t_2$ then consider 
$f$ with $f(a)=\tr$ and 
$f(a\sigma)=\fa$ for
$\sigma\in A^+$. We find $t\apply f=f$ and
$t'\apply f\ne f$ because $(t'\apply f)(a)=(t_1\apply f)(a\sigma)=\fa$.

\item
If $t\equiv\fa$ a similar argument applies.

\item
If $t\equiv t_1\lef a\rig t_2$, then the case 
$t'\in\{\tr,\fa\}$
can be dealt with as above. 

If $t'\equiv t_3\lef a\rig t_4$
then assume $t_1\lef a\rig t_2\neq_{sc}t_3\lef a\rig t_4$ because
$t_1\neq_{sc}t_3$. By induction there exists $f$ with
$t_1\apply f\neq t_3\apply f$ or
$t_1\reply f\neq t_3\reply f$.
Take some $g$ such that $a\apply g=f$ and $a\reply g=\tr$,
then $g$ distinguishes
$t_1\lef a\rig t_2$ and $t_3\lef a\rig t_4$. 
If $t_1=_{sc}t_3$, then a similar argument applies
for $t_2\neq_{sc}t_4$.

If $t'\equiv t_3\lef b\rig t_4$ with $a$ and $b$ different,
then 
$(t_1\lef a\rig t_2)\apply f\ne (t_3\lef b\rig t_4)\apply f$ 
for $f$ defined
by $f(a)=f(a\sigma)=\tr$ and $f(b)=f(b\sigma)=\fa$ because
$((t_1\lef a\rig t_2)\apply f)(a)=
(t_1\apply (a\apply f))(a)=
f(a\rho a)=\tr$, and
$((t_3\lef b\rig t_4)\apply f)(a)=(t_4\apply (b\apply f))(a)=
f(b\rho' a)=\fa$ (where
$\rho,\rho'$ possibly equal $\epsilon$).

\end{enumerate}
So $\NS^{sc}$ is a SPA. Define the HMA $\BA^{sc}$ 
as the $\SigmaHMA$-reduct of $\NS^{sc}$.  The
validity of axiom~\eqref{SPA8} proves $\Longleftarrow$
as stated in the theorem (the implication $\Longrightarrow$
holds by definition of a SPA).
\end{proof}

Observe that $\BA^{sc}\cong I(\SigmaHMA,\CP)$. By the
proof of the above theorem we find for all 
$t,t'\in\NT_{\SigmaHMA}$,
\begin{equation}
\label{eq:hma}
\CP\vdash t=t'\iff\NS^{sc}\models t=t'.
\end{equation}
We have the following (trivial) corollary on the 
quasivariety of SPAs, the first of a 
number of corollaries in which certain quasivarieties
of SPAs are characterized.

\begin{corollary}
\label{cor:0}
Let $\NC_{\fr}$ be the class of all SPAs. Then for all
$t,t'\in\NT_{\SigmaHMA}$,
\[\NC_{\fr}\models t=t'\iff \CP\vdash t=t'.\]
\end{corollary}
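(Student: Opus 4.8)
The plan is that this corollary follows immediately from Theorem~\ref{thm:1} together with the observation \eqref{eq:hma}, so the proof amounts to combining two facts that have already been established; I do not expect any genuine obstacle, which matches the authors' remark that the corollary is trivial.

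First I would treat the implication $\CP\vdash t=t'\Longrightarrow\NC_{\fr}\models t=t'$, i.e.\ soundness. Let $\NS$ be an arbitrary SPA. By Definition~\ref{def:SPA} its reduct to $\SigmaState$ is a two-sorted proposition algebra, and by Definition~\ref{def:2PA} the further reduct to $\SigmaStateless=\SigmaHMA$ is a proposition algebra, hence a model of \CP. Since $t$ and $t'$ are closed terms over $\SigmaHMA$, the validity of every \CP-derivable closed equation in this reduct gives $\NS\models t=t'$. As $\NS$ was arbitrary, $\NC_{\fr}\models t=t'$.

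For the converse, assume $\NC_{\fr}\models t=t'$. The $\SigmaStateful$-algebra $\NS^{sc}$ constructed in the proof of Theorem~\ref{thm:1} was shown there to be a SPA, hence $\NS^{sc}\in\NC_{\fr}$, so $\NS^{sc}\models t=t'$; by \eqref{eq:hma} this yields $\CP\vdash t=t'$. Putting the two implications together gives the stated equivalence.

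The only point that warrants a moment's care is that the soundness direction uses merely the $\SigmaStateless$-fragment of a SPA; this is legitimate precisely because $t,t'\in\NT_{\SigmaHMA}$ and $\SigmaHMA=\SigmaStateless$, so no axioms beyond \CP\ are needed in that direction. There is no hidden difficulty here.
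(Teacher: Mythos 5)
Your proof is correct and follows exactly the paper's argument: the forward direction from every SPA being (by definition, via its reducts) a model of \CP, and the converse from $\NS^{sc}\in\NC_{\fr}$ together with \eqref{eq:hma}. You merely spell out in more detail what the paper states in one line.
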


\begin{proof}
By the facts that $\NS^{sc}\in\NC_{\fr}$ and 
that each SPA satisfies \CP\ by definition.
\end{proof}

\section{Not all proposition algebras are HMAs}
\label{sec:2}
In this section we show that not all proposition algebras are HMAs.
Then we formulate a sufficient condition under which 
a proposition algebra is an HMA.

If we add to $\CP$ the equation 
\[
\tr\lef x\rig \tr=\tr, 
\]
a non-trivial initial algebra 
$I(\SigmaHMA,\CP+\langle\, \tr\lef x\rig \tr=\tr\,\rangle)$
exists:
again we can define normal forms by directing all axioms
from left
to right; this yields a strongly normalizing TRS
by the weight function $w$ defined in the previous section.
It is also not difficult to see that
this TRS is weakly confluent, the 
critical pairs not dealt with before arise from the
following combinations:
\begin{align*}
&\eqref{CP1}\text{ and }\langle\, \tr\lef x\rig \tr=\tr\,\rangle
\text{ on }\tr\lef\tr \rig \tr:&&
\langle \tr,~\tr\rangle,\\
&\eqref{CP2}\text{ and }\langle\, \tr\lef x\rig \tr=\tr\,\rangle
\text{ on }\tr\lef\fa \rig \tr:&&
\langle \tr,~\tr\rangle,\\
&\eqref{CP4}\text{ and }\langle\, \tr\lef x\rig \tr=\tr\,\rangle
\text{ on }\tr\lef(y\lef z\rig u)\rig \tr:&&
\langle 
(\tr\lef y\rig \tr)\lef z\rig (\tr\lef u\rig \tr),~\tr\rangle,\\
&\eqref{CP4}\text{ and }\langle\, \tr\lef x\rig \tr=\tr\,\rangle
\text{ on }x\lef(\tr\lef z\rig \tr)\rig u:
\end{align*}
\[\qquad
\langle 
(x\lef \tr\rig u)\lef z\rig(x\lef \tr\rig u),~x\lef \tr\rig u\rangle.\]

It is easily seen that all these pairs have a common reduct,
hence, also this TRS is ground-complete. Furthermore,
observe that both $\tr\lef a\rig b$ and 
$\tr\lef b\rig a$
are normal forms.
Note the following consequence in 
$\CP+\langle\, \tr\lef x\rig \tr=\tr\,\rangle$:
\begin{align}
\nonumber
x&=x\lef(\tr\lef y\rig\tr)\rig x\\
\nonumber
&=(x\lef\tr\rig x)\lef y\rig(x\lef\tr\rig x)\\
\label{X}
&=x\lef y\rig x.
\end{align}

Now consider the conditional equation
\begin{equation}
\label{eq:oh}
((\tr\lef x\rig \tr=\tr)\wedge(\tr\lef y\rig \tr=\tr))
\rightarrow \tr\lef x\rig y=\tr\lef y\rig x.
\end{equation}

\begin{lemma}
\label{lem:oh}
Each HMA satisfies the conditional equation \eqref{eq:oh}.
\end{lemma}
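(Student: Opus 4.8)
The plan is to unfold the statement of \eqref{eq:oh} inside an arbitrary HMA $\BA$. Since $\BA$ is by definition the $\SigmaHMA$-reduct of some SPA $\NS$, it suffices to argue in $\NS$: we must show that whenever conditional expressions $x,y$ satisfy $\tr\lef x\rig\tr=\tr$ and $\tr\lef y\rig\tr=\tr$ in $\NS$, then $\tr\lef x\rig y=\tr\lef y\rig x$. The tool for equating two conditional expressions in a SPA is axiom~\eqref{SPA8}: it is enough to prove, for every state $s$, that $(\tr\lef x\rig y)\reply s=(\tr\lef y\rig x)\reply s$ and $(\tr\lef x\rig y)\apply s=(\tr\lef y\rig x)\apply s$. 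So the real content is a computation with the $\reply$/$\apply$ axioms \eqref{SPA1}--\eqref{SPA6} together with two-valuedness \eqref{SPA7}.

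First I would extract what the hypotheses say semantically. Applying \eqref{SPA3} to $\tr\lef x\rig\tr=\tr$ and using \eqref{SPA1} gives, for every $s$, that $\tr\reply(x\apply s)\lef x\reply s\rig\tr\reply(x\apply s)=\tr$, i.e.\ $\tr\lef x\reply s\rig\tr=\tr$ in the conditional-expression sort; by two-valuedness $x\reply s\in\{\tr,\fa\}$, and in either case CP-axiom \eqref{CP1} or \eqref{CP2} collapses this to $\tr=\tr$, so this half is vacuous. The useful half comes from \eqref{SPA6} applied to $\tr\lef x\rig\tr=\tr$: using \eqref{SPA4} twice, $\tr\apply(x\apply s)\lef x\reply s\rig\tr\apply(x\apply s)=\tr\apply s$, that is $(x\apply s)\lef x\reply s\rig(x\apply s)=s$; by two-valuedness and \eqref{2S1}/\eqref{2S2} the left-hand side equals $x\apply s$ regardless of the value of $x\reply s$, so the hypothesis is equivalent to: $x\apply s=s$ for all $s$. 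Symmetrically the second hypothesis is equivalent to $y\apply s=s$ for all $s$. In words, $x$ and $y$ are state-preserving.

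Now the computation is routine. Using \eqref{SPA3}, $(\tr\lef x\rig y)\reply s=\tr\reply(x\apply s)\lef x\reply s\rig y\reply(x\apply s)=\tr\lef x\reply s\rig y\reply s$ (using $x\apply s=s$ and \eqref{SPA1}); by two-valuedness this equals $x\reply s$ if $x\reply s=\tr$ and equals $y\reply s$ if $x\reply s=\fa$. Similarly $(\tr\lef y\rig x)\reply s=\tr\lef y\reply s\rig x\reply s$, which equals $y\reply s$ if $y\reply s=\tr$ and $x\reply s$ if $y\reply s=\fa$. A short case split on the two Boolean values of $x\reply s$ and $y\reply s$ shows both expressions equal $x\reply s$ whenever $x\reply s=\tr$, equal $y\reply s$ whenever $x\reply s=\fa$, and in the remaining case both reduce to the common value; so $(\tr\lef x\rig y)\reply s=(\tr\lef y\rig x)\reply s$ for all $s$. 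For the $\apply$ side, \eqref{SPA6} gives $(\tr\lef x\rig y)\apply s=\tr\apply(x\apply s)\lef x\reply s\rig y\apply(x\apply s)=s\lef x\reply s\rig s$, which by \eqref{2S1}/\eqref{2S2} and two-valuedness is just $s$; likewise $(\tr\lef y\rig x)\apply s=s$. Hence \eqref{SPA8} applies and $\tr\lef x\rig y=\tr\lef y\rig x$ holds in $\NS$, so in $\BA$.

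The only mildly delicate point is the first paragraph's reduction of the hypotheses to "$x$ and $y$ are state-preserving"; once that is in hand the rest is a mechanical unfolding guarded by two-valuedness. I would present the hypothesis-reduction step carefully and then dispatch the $\reply$ and $\apply$ equalities with a brief case analysis, invoking \eqref{SPA8} at the end.
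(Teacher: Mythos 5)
Your proposal is correct and follows essentially the same route as the paper: extract from the hypotheses (via \eqref{SPA6}, \eqref{SPA4}, two-valuedness and \eqref{2S1}/\eqref{2S2}) that $x$ and $y$ are state-preserving, then show $(\tr\lef x\rig y)$ and $(\tr\lef y\rig x)$ agree under $\reply$ by a Boolean case split and under $\apply$ trivially, and conclude with \eqref{SPA8}. Your extra observation that the $\reply$-half of each hypothesis is vacuous is a correct refinement the paper leaves implicit.
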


\begin{proof}
Let HMA \BA\ be the $\SigmaHMA$-reduct of some stateful
proposition algebra
$\NS$ with domains $C'$ and $S'$ and assume
$\BA,\sigma\models (\tr\lef x\rig \tr=\tr)\wedge (\tr\lef y\rig \tr=\tr)$
for some assignment $\sigma$. 
Writing $\sigma(x)=t$ and $\sigma(y)=t'$, it follows
that $\forall s\in S'(t\apply s=s=t'\apply s)$:
\begin{align*}
(\tr\lef t\rig \tr)\apply s
&=\tr\apply t\apply s\lef
t\reply s\rig
\tr\apply t\apply s\\
&=t\apply s\lef
t\reply s\rig t\apply s.
\end{align*}
By assumption $(\tr\lef t\rig \tr)\apply s=\tr\apply s$ and 
by $\tr\apply s=s$ we find by 
axiom \eqref{SPA7}
and axiom~\eqref{2S3}
that $t\apply s=s$. In a similar way it follows
that $t'\apply s=s$. 

Then for all $s\in S'$,
\begin{align*}
(\tr\lef t\rig t')\reply s
&=\tr\reply (t\apply s)\lef
t\reply s\rig
t'\reply(t\apply s)\\
&=\tr\lef
t\reply s\rig t'\reply s, 
\end{align*}
and by symmetry, 
$(\tr\lef t'\rig t)\reply s=\tr\lef t'\reply s\rig t\reply s$.
Now $(\tr\lef t\rig t')\reply s=(\tr\lef t'\rig t)\reply s$
follows by case distinction, using axioms~\eqref{SPA7}, 
\eqref{CP1} and \eqref{CP2}. Furthermore, by~\eqref{SPA7} 
and~\eqref{2S1}, 
\begin{align*}
(\tr\lef t\rig t')\apply s
&=\tr\apply t\apply s\lef
t\reply s\rig
t'\apply t\apply s\\
&=s\lef
t\reply s\rig s\\
&=s,
\end{align*}
and in a similar way it follows that 
$(\tr\lef t'\rig t)\apply s= s$, thus 
$(\tr\lef t\rig t')\apply s=(\tr\lef t'\rig t)\apply s$.
By~\eqref{SPA8}, $\tr\lef t\rig t'=\tr\lef t'\rig t$ and
thus
 $\BA,\sigma\models  \tr\lef x\rig y=\tr\lef y\rig x$, 
as was to be proved.
\end{proof}

In a setting with two different atoms, not each 
proposition algebra is an HMA.

\begin{theorem}
\label{thm:oh}
For $|A|>1$ there exist proposition algebra's that are no HMAs.
\end{theorem}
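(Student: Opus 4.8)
The plan is to exhibit a concrete proposition algebra that violates the conditional equation~\eqref{eq:oh}, which by Lemma~\ref{lem:oh} is satisfied by every HMA; this immediately separates the two classes. The natural candidate is the initial algebra $I(\SigmaHMA,\CP+\langle\,\tr\lef x\rig\tr=\tr\,\rangle)$, whose ground-completeness and normal forms have already been established in the running text just above the statement. In this algebra, for any atom $a$ one has $\tr\lef a\rig\tr=\tr$ by the extra axiom, so the hypotheses of~\eqref{eq:oh} are met by taking $x:=a$ and $y:=b$ for two distinct atoms $a,b$ (which exist since $|A|>1$). The conclusion of~\eqref{eq:oh} would force $\tr\lef a\rig b=\tr\lef b\rig a$, so it suffices to show this equation fails in the initial algebra.

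First I would recall from the preceding discussion that $\tr\lef a\rig b$ and $\tr\lef b\rig a$ are both normal forms of the rewriting system for $\CP+\langle\,\tr\lef x\rig\tr=\tr\,\rangle$, and that this TRS is ground-complete; hence two closed terms are provably equal iff they have the same normal form. Since $a\not\equiv b$, the two normal forms $\tr\lef a\rig b$ and $\tr\lef b\rig a$ are syntactically distinct, so $\CP+\langle\,\tr\lef x\rig\tr=\tr\,\rangle\not\vdash\tr\lef a\rig b=\tr\lef b\rig a$, i.e.\ these two elements are distinct in the initial algebra. Therefore the initial algebra is a model of $\CP$ (indeed of $\CP$ plus an extra equation) in which~\eqref{eq:oh} fails, so by Lemma~\ref{lem:oh} it is not an HMA. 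This is the whole argument; packaging it requires only that one verify the normal-form claim, which is a direct appeal to the ground-completeness established earlier, plus the trivial observation that the initial algebra of an extension of $\CP$ is still a proposition algebra.

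The only point demanding any care is the justification that $\tr\lef a\rig b$ and $\tr\lef b\rig a$ really are normal forms and really are distinct modulo the theory: one must check that no left-hand side of a rewrite rule (including the new rule $\tr\lef x\rig\tr\to\tr$, which requires the two outer arguments to be \emph{identical}) matches a subterm of $\tr\lef a\rig b$ when $a\not\equiv b$. This is immediate from the normal-form characterisation already recorded in the text, so I expect no genuine obstacle; the ``hard part'' is essentially bookkeeping. An alternative, if one preferred not to invoke the rewriting machinery, would be to build a two-element proposition algebra by hand realising $\tr\lef a\rig b\ne\tr\lef b\rig a$, but reusing the already-constructed ground-complete initial algebra is cleaner and shorter.
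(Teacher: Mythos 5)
Your proposal is correct and follows essentially the same route as the paper: both take the initial algebra of $\CP$ extended with $\tr\lef x\rig\tr=\tr$, note that the premise of the conditional equation of Lemma~\ref{lem:oh} holds for two distinct atoms while its conclusion $\tr\lef a\rig b=\tr\lef b\rig a$ fails because these are distinct normal forms of the ground-complete TRS, and conclude via Lemma~\ref{lem:oh}. The only difference is that you spell out the appeal to ground-completeness and the normal-form check a bit more explicitly than the paper does.
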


\begin{proof}
Consider the initial algebra 
$I(\SigmaHMA,\CP+\langle\, \tr\lef x\rig \tr=\tr\,\rangle)$.
Clearly this algebra satisfies $\tr\lef a\rig\tr=\tr
=\tr\lef b\rig\tr$, and therewith an instance of the premise of 
conditional equation~\eqref{eq:oh}, but not its conclusion 
$\tr\lef a\rig b=\tr\lef b\rig a$ because these terms are different 
normal forms. 
By Lemma~\ref{lem:oh}, 
$I(\SigmaHMA,\CP+\langle\, \tr\lef x\rig \tr=\tr\,\rangle)$ is not an HMA.
\end{proof}

Let $\HMA_A$ be the class of $\SigmaHMA$-algebra's
that are HMAs.
The diagram of $\BA\in\HMA_A$, notation
$\Delta_\BA$, is defined by
\[\Delta_\BA=\{t=t'\mid t,t'\in\NT_{\SigmaHMA},
~\BA\models t=t'\}\cup
\{t\neq t'\mid t,t'\in\NT_{\SigmaHMA},~\BA\models t\neq t'\}.\]
Let $\CCE(\BA)$ be the closed conditional equational 
theory of \BA\ 
and let $\CCE(\HMA_A)$ be the set of closed conditional equations true in
all HMAs, thus 
\[\CCE(\HMA_A)=\bigcap_{\BA\in\HMA_A}\CCE(\BA).\]

\begin{theorem}
Let $\BA$ be some minimal $\SigmaHMA$-algebra. If 
$\BA\models \CCE(\HMA_A)$ then $\BA\in\HMA_A$.
\end{theorem}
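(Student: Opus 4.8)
The theorem is a Birkhoff-type characterization: a minimal $\SigmaHMA$-algebra that satisfies all closed conditional equations true in every HMA is itself an HMA. Since an HMA is by definition the $\SigmaHMA$-reduct of a SPA, the goal is to manufacture, from $\BA$, a full SPA whose $\SigmaHMA$-reduct is (isomorphic to) $\BA$. The plan is to use $\BA^{sc}$ — the canonical HMA of Theorem~\ref{thm:1}, which is $I(\SigmaHMA,\CP)$ — as a scaffold, since it already carries a SPA structure $\NS^{sc}$ with state space $S' = \{\tr,\fa\}^{A^+}$, and then transport that structure along the unique homomorphism from $\BA^{sc}$ onto $\BA$.

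\emph{Step 1: a canonical surjection.} Because $\BA$ is minimal and satisfies $\CP$ (every closed instance of a $\CP$-axiom is a closed equation true in all HMAs, hence in $\CCE(\HMA_A)$), there is a unique surjective $\SigmaHMA$-homomorphism $h\colon \BA^{sc}\twoheadrightarrow \BA$, factoring through a congruence $\theta$ on $\NT_{\SigmaHMA}/_{=_{sc}}$. The key point to extract here is: $\theta$ is generated by a set of \emph{closed} equations $t=t'$, and each such equation lies in $\CCE(\HMA_A)$ — indeed, if $h(t)=h(t')$ then $\BA\models t=t'$, and I must argue this equation is already true in \emph{every} HMA. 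This is where the hypothesis $\BA\models\CCE(\HMA_A)$ is used in the contrapositive direction: actually the cleaner route is to observe that $\ker h$, viewed as a set of closed equations, is \emph{contained} in $\CCE(\BA)$, and I want it contained in $\CCE(\HMA_A)$ — so I should instead define $\theta$ directly as the congruence on $\BA^{sc}$ corresponding to $\CCE(\HMA_A)$ restricted to equations, show $\BA^{sc}/\theta$ is minimal and satisfies $\CCE(\HMA_A)$, and then identify $\BA$ with $\BA^{sc}/\theta$ via the inequation part of the diagram $\Delta_\BA$.

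\emph{Step 2: pushing the SPA structure through the congruence.} Given $\theta$ on the conditional-expression sort of $\NS^{sc}$, I want to extend it to a congruence $\Theta$ on the whole two-sorted/stateful algebra $\NS^{sc}$ so that the quotient is still a SPA. On the state sort $S'=\{\tr,\fa\}^{A^+}$ I would quotient by the equivalence: $f\sim g$ iff $t\reply f = t\reply g$ and $t\apply f \mathbin{\Theta_S} t\apply g$ for all closed $t$ — but this must be set up so that it is a fixpoint; the standard trick is to take $\Theta_S$ to be the largest congruence on $\NS^{sc}$ whose restriction to sort $C$ is contained in $\theta$. Then axioms \eqref{SPA1}--\eqref{SPA6} and \eqref{2S1}--\eqref{2S3} survive any quotient (they are equations/conditional equations and quotients preserve them), \eqref{SPA7} survives because two-valuedness is preserved by homomorphic image onto a nontrivial algebra (or one builds $\Theta_S$ so that $\tr,\fa$ stay distinct), and \eqref{SPA8} — the extensionality/separation axiom — is exactly what the quotient construction is designed to force: by taking $\Theta_S$ \emph{largest} subject to $C$-restriction $\subseteq\theta$, any two $C$-elements agreeing under all reply/apply become $\theta$-related. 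The resulting quotient is a SPA, and its $\SigmaHMA$-reduct is $\BA^{sc}/\theta \cong \BA$.

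\emph{The main obstacle} is Step 1, specifically the claim that the $C$-sort congruence needed to cut $\BA^{sc}$ down to $\BA$ is generated by equations in $\CCE(\HMA_A)$ rather than merely in $\CCE(\BA)$. The subtle content of the theorem is that minimality plus closure under \emph{all} closed conditional consequences true in every HMA is enough to land back inside the class; the conditional (not just equational) nature of $\CCE(\HMA_A)$ is essential, since Theorem~\ref{thm:oh} shows the purely equational theory does not suffice — so the argument must genuinely use that $\Delta_\BA$'s inequations, together with the conditional equations of $\CCE(\HMA_A)$, pin down $\theta$ uniquely. Concretely, I expect the proof to run: let $\theta$ be the kernel of the canonical map $\BA^{sc}\to\BA$; show $\BA^{sc}/\theta$, equipped with the transported SPA structure from Step 2, is a SPA, so $\BA^{sc}/\theta\in\HMA_A$; hence $\CCE(\BA^{sc}/\theta)\supseteq\CCE(\HMA_A)$, and combined with $\BA\models\CCE(\HMA_A)$ and minimality of both algebras, conclude $\BA^{sc}/\theta\cong\BA$ (matching equalities \emph{and} inequalities via $\Delta_\BA$), whence $\BA\in\HMA_A$.
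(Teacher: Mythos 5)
There is a genuine gap, and it sits exactly where you flag ``the main obstacle'': your construction never actually uses the hypothesis $\BA\models\CCE(\HMA_A)$, and without it the theorem is false. Take $\BA=I(\SigmaHMA,\CP+\langle\,\tr\lef x\rig\tr=\tr\,\rangle)$ with $|A|>1$: this is a minimal proposition algebra, so your Step~1 produces a surjection $h\colon\BA^{sc}\twoheadrightarrow\BA$ with kernel $\theta$, and your Step~2 is a purely algebraic transport of structure that would apply to this $\theta$ just as well as to any other --- yet by Theorem~\ref{thm:oh} this $\BA$ is \emph{not} an HMA. So Step~2 cannot be correct as stated. Concretely, the maximality argument for \eqref{SPA8} does not go through: if $(t,t')\notin\theta$ but the classes $[t],[t']$ agree under all $\reply$ and $\apply$ in the quotient, maximality of $\Theta$ subject to $\Theta_C\subseteq\theta$ gives no contradiction, since enlarging $\Theta$ by $(t,t')$ is forbidden precisely because $(t,t')\notin\theta$; \eqref{SPA8} simply fails. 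Moreover, the first half of Step~1 asserts that each equation in $\ker h$ lies in $\CCE(\HMA_A)$, which is false in general (take $\BA=\BA^{\stat}$: its equational theory strictly exceeds the \CP-theorems, which are exactly the equations valid in all HMAs), and the proposed repair --- letting $\theta$ be the congruence induced by the equations of $\CCE(\HMA_A)$ --- collapses to the identity on $\BA^{sc}$ and no longer yields $\BA$. There is also no reason a SPA over $\BA$ must arise as a quotient of $\NS^{sc}$ at all; the needed state space may be unrelated to $\{\tr,\fa\}^{A^+}/{\sim}$.

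The paper's proof is organized quite differently and shows where the hypothesis must enter. It proves by compactness that $\Delta_\BA\cup\CPS$ is consistent: for finite sets $D$ of true closed equations and $D'$ of true closed inequations of $\BA$, and for each $\neg e'\in D'$, the conditional equation $\bigwedge_{e\in D}e\rightarrow e'$ fails in $\BA$ and hence is not in $\CCE(\HMA_A)$ --- this is the (only essential) use of the hypothesis --- so some SPA satisfies $\CPS\cup D\cup\{\neg e'\}$; a disjoint union of state sets over all $e'\in\neg D'$ then satisfies $\CPS\cup D\cup D'$. A model of $\Delta_\BA\cup\CPS$ obtained from compactness has a minimal subalgebra whose $\SigmaHMA$-reduct is $\BA$, so $\BA\in\HMA_A$. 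If you want to salvage your quotient idea you would have to show that the hypothesis forces $\theta$ to respect the state dynamics (e.g.\ that conditional equations such as \eqref{eq:oh} hold in $\BA$), and it is not clear this can be done without essentially reproducing the compactness argument.
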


\begin{proof}
Using compactness we prove that $\Delta_\BA\cup\CPS$ 
is consistent. Consider finite subsets 
$D$ and $D'$ of the positive
respectively negative part of $\Delta_\BA$.

If $D'=\emptyset$, then extend $\BA$ to
a two-sorted model $\NS^s$ by adding a state set $S'=\{s\}$
and defining the function $s\lef t\rig s$ by the
axioms~\eqref{2S1}--\eqref{2S3}
(of course, the interpretation of the state constant $c$
is $s$).
Furthermore, define in $\NS^s$ the functions
$\reply$ and $\apply$ by
$a\reply s=\tr$ and 
$a\apply s=s$ for all $a\in A$,
and the other cases by axioms \eqref{SPA1}--\eqref{SPA6}
(so $t\apply s=s$ for all $t$).
Finally, if for closed terms $t$ and $t'$,
$t\reply s=t'\reply s$,
extend $D$ with $t=t'$.
Now observe that
the axioms of \CP\ are valid in $\NS^s$
because $\BA\models \CCE(\HMA_A)$.
Furthermore, axiom~\eqref{SPA7} is trivially valid.
Axiom~\eqref{SPA8} is valid by construction, 
so $\NS^s\models\CPS\cup D$.

If $D'\neq\emptyset$, then let $e'$ 
be such that $\neg e'\in D'$.
Let $E=\bigwedge_{e\in D}e$ and write $\neg D'$ for
the set of equations whose negation is in $D'$, so 
$e\in \neg D'$ if and only if $\neg e\in D'$.
Then $E\rightarrow e'\not\in\CCE(\HMA_A)$
because $\BA\not\models E\rightarrow e'$.
Thus there exists a model $\NS_{e'}$ of
$\CPS\cup E\cup\{\neg e'\}$. We can consider a 
disjoint union $\NS^*$ of all $\NS_{e'}$ for
$e'\in D'$, where we forget all $c$'s (the state
constant that guarantees that $S$ is a non-empty sort).
Here the state sets are taken disjoint and 
for $D'=\{\neg e_1',...,\neg e_n'\}$,
$S_{\NS^*}=S_{\NS_{e_1'}}\cup...\cup S_{\NS_{e_n'}}$.
The disjoint union then found is again a model of
$\CPS\cup E$ and it satisfies $\neg e'$ for each
$e'\in \neg D'$. Finally, $c$ is given an arbitrary
interpretation. We find that
$\NS^*\models \CPS\cup E\cup\{\neg e'\mid e'\in \neg D'\}$.

By compactness this proves the consistency of 
$\Delta_{\BA} \cup\CPS $. 
Let $\NS$ be a $\SigmaStateful$-algebra with
$\NS\models \Delta_\BA\cup\CPS$.
Then the minimal subalgebra $\NS'$  of \NS\ is
a model of \CPS\ and its reduct to $\SigmaHMA$
satisfies $\Delta_\BA$. 
So $\NS'\restriction \SigmaHMA\cong\BA$,
whence $\BA\in\HMA_A$.
\end{proof}

\section{Repetition-proof congruence}
\label{sec:rp}
In this section we consider \emph{repetition-proof congruence}
defined by the axioms of $\CP$ and 
these axiom schemes ($a\in A$):
\begin{align*}
\label{CPrp1}\tag{CPrp1} \qquad
(x\lef a\rig y)\lef a\rig z&=
(x\lef a\rig x)\lef a\rig z,\\
\label{CPrp2}\tag{CPrp2} \qquad
x\lef a\rig (y\lef a\rig z)&= x \lef a\rig (z\lef a\rig z).
\end{align*}
Typically,
the valuation of successive equal atoms yields the same reply.

We write $\CP_{\rp}$ for this set of axioms. 
Let \emph{repetition-proof congruence}, notation
$=_{\rp}$, be the congruence on $\NT_{\SigmaHMA}$ generated 
by the axioms of $\CP_{\rp}$.  
\begin{definition}
\label{def:rpbf}
A term $t\in\NT_{\SigmaStateless}$ is an 
\textbf{rp-basic form} if for $a\in A$,
\[t::=\tr\mid\fa\mid t_1\lef a \rig t_2 \]
and $t_i$ ($i=1,2$) is an \rp-basic form with
the restriction that the central
condition (if present) is either
different from $a$, or 
$t_i\equiv t_i'\lef a\rig t_i'$ with $t_i'$ 
 an \rp-basic form.
\end{definition}

\begin{lemma}
\label{lem:rp}
For each $t\in\NT_{\SigmaHMA}$ there exists an \rp-basic form
$t'$ with $\CP_{\rp}\vdash t=t'$.
\end{lemma}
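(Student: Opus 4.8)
The plan is to mimic the structure of Lemma~\ref{lem:nf}: first move to an ordinary basic form, then iteratively eliminate the ``forbidden'' patterns of repeated atoms using \eqref{CPrp1} and \eqref{CPrp2}, and argue termination via a suitable complexity measure. First I would invoke Lemma~\ref{lem:nf} to obtain a basic form $t_0$ with $\CP\vdash t=t_0$ (hence $\CP_{\rp}\vdash t=t_0$, since $\CP\subseteq\CP_{\rp}$). So it suffices to prove that every basic form can be rewritten, using the axioms of $\CP_{\rp}$, to an \rp-basic form. I would do this by induction on the structure of the basic form $t_0$. The base cases $\tr,\fa$ are immediate. For $t_0\equiv t_1\lef a\rig t_2$, the induction hypothesis gives \rp-basic forms $t_1',t_2'$ with $\CP_{\rp}\vdash t_i=t_i'$, so $\CP_{\rp}\vdash t_0 = t_1'\lef a\rig t_2'$. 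It then remains to repair the two top-level conditions: if $t_1'$ has central condition $a$ but is not already of the shape $u\lef a\rig u$, rewrite its outermost occurrence $v\lef a\rig w$ (where $v$ is the left subterm of $t_1'$) using \eqref{CPrp1} to $v\lef a\rig v$; symmetrically, if $t_2'$ has central condition $a$ and is not of the shape $u\lef a\rig u$, use \eqref{CPrp2} to replace its right subterm.

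The subtlety is that applying \eqref{CPrp1} to $t_1'$ discards the original right branch and replaces it by a copy of the left branch $v$; one must check that this does not destroy the \rp-basic-form property of the pieces and does not re-introduce a bad pattern. Since $v$ is itself an \rp-basic form (being a subterm of the \rp-basic form $t_1'$), the result $v\lef a\rig v$ is an \rp-basic form, and then $(v\lef a\rig v)\lef a\rig z$ is exactly an allowed shape for the left argument of a top-level $a$-conditional, per Definition~\ref{def:rpbf}. The same reasoning, mirrored, handles \eqref{CPrp2}. After these at most two repairs, $t_1'\lef a\rig t_2'$ (or its repaired version) satisfies the restriction in Definition~\ref{def:rpbf}, completing the induction.

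The step I expect to require the most care is making the induction genuinely well-founded: naively rewriting $t_1'$ might itself trigger further rewriting lower down, so I would either (i) set up the induction so that the hypothesis already delivers $t_1',t_2'$ in \rp-basic form and argue that the single top-level repair by \eqref{CPrp1}/\eqref{CPrp2} introduces no new violations anywhere (using that $v$ is already an \rp-basic form, so no deeper cleanup is needed), or (ii) exhibit a decreasing measure — e.g. the multiset of subterm sizes at positions where a condition equals the condition of its parent — and appeal to a term-rewriting termination argument analogous to the weight-function argument used earlier in the paper. Option (i) is cleaner and is the route I would take; the only thing to verify explicitly is that the substitutions in \eqref{CPrp1} and \eqref{CPrp2} are applied at the outermost offending position so that the replacement branch is drawn from an already-normalized subterm. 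Uniqueness is not claimed here (unlike Lemma~\ref{lem:nf}), so no further work on that front is needed.
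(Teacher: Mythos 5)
Your proposal is correct, and its decisive step --- repairing the top level with one application of \eqref{CPrp1} and/or \eqref{CPrp2}, noting that the duplicated branch $v$ is itself an \rp-basic form so that $v\lef a\rig v$ satisfies the restriction of Definition~\ref{def:rpbf} and no deeper cleanup is triggered --- is exactly the repair the paper performs. The scaffolding differs, however. The paper does not route through Lemma~\ref{lem:nf}; it instead proves an auxiliary composition fact: the conditional composition $t_1\lef t_2\rig t_3$ of three \rp-basic forms is provably equal to an \rp-basic form, by structural induction on the central condition $t_2$, using \eqref{CP4} to distribute over $t_2\equiv t\lef a\rig t'$ and then applying the same top-level repair; the lemma then follows by structural induction on arbitrary closed terms. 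You pre-normalize with Lemma~\ref{lem:nf} so that every central condition is already an atom, which shortens the argument to a single induction on basic forms and makes the well-foundedness worry you raise essentially vacuous (your option~(i) is the right resolution, and matches the paper's implicit reasoning). What the paper's composition lemma buys in exchange is a reusable template: the analogous normal-form lemmas in the later sections are proved ``similarly,'' by the same three-term composition argument. Both routes are sound; yours is marginally more economical here because it leans on the already-established Lemma~\ref{lem:nf}.
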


\begin{proof}
First, we prove that the conditional composition $t_1\lef t_2\rig t_3$ of three
\rp-basic terms  can be proved equal to an \rp-basic term by
structural induction on $t_2$. If $t_2\in\{\tr,\fa\}$ this is 
trivial, and otherwise we find by induction \rp-basic forms
$t_4$ and $t_5$ with
\begin{align*}
t_1\lef(t\lef a \rig t')\rig t_3
&=(t_1\lef t\rig t_3)\lef a \rig(t_1\lef t'\rig t_3)\\
&=t_4\lef a\rig t_5.
\end{align*}
If $t_4\equiv t_6\lef a \rig t_7$ then apply axiom 
\eqref{CPrp1} on $t_4$, thus obtaining
$t_6\lef a\rig t_6$, and if $t_5\equiv t_8\lef a \rig t_9$, replace
it by $t_9\lef a\rig t_9$ (axiom \eqref{CPrp2}). 
Clearly, the resulting  term is an \rp-basic form.

With the above result, the lemma's statement follows 
easily by structural induction. 
\end{proof}

\begin{theorem}
\label{thm:rp}
For $|A|>1$,
an HMA that characterizes $\CP_{\rp}$
exists: there is an
HMA $\BA^{\rp}$ such that for all 
$t,t'\in\NT_{\SigmaHMA}$,
$\CP_{\rp}\vdash t=t'\iff \BA^{\rp}\models t=t'$.
\end{theorem}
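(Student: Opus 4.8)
The plan is to mimic the construction used for Theorem~\ref{thm:1}, but with a restricted state space that forces the repetition-proof identities to hold. First I would build a $\SigmaStateful$-algebra $\NS^{\rp}$ whose conditional expressions are the classes $\NT_{\SigmaHMA}/_{=_{\rp}}$ and whose states are those functions $f\in\{\tr,\fa\}^{A^+}$ satisfying the ``repetition-proof'' restriction that a second consecutive occurrence of an atom yields the same reply as the first: $f(\sigma a a \rho)=f(\sigma a)$ for all $\sigma\in A^*$, $a\in A$, $\rho\in A^*$ (equivalently, $f(\sigma)$ depends only on the ``stutter-reduced'' prefix). As in Theorem~\ref{thm:1}, set $a\reply f=f(a)$ and $(a\apply f)(\sigma)=f(a\sigma)$; one must check $a\apply f$ lands back in the restricted state space, which it does since prepending a letter preserves the stutter condition. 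The axioms \eqref{2S1}--\eqref{2S3} and \eqref{SPA1}--\eqref{SPA6} then determine $\lef\,\rig$ on states and $\reply,\apply$, and well-definedness on $=_{\rp}$-classes requires checking that the two new axiom schemes \eqref{CPrp1} and \eqref{CPrp2} are respected by $\reply$ and $\apply$ — this is exactly where the stutter restriction on states is used. Axiom \eqref{SPA7} holds by construction.

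The substantive part is verifying \eqref{SPA8}: if $t\reply f=t'\reply f$ and $t\apply f=t'\apply f$ for all admissible states $f$, then $t=_{\rp}t'$. I would prove the contrapositive, using Lemma~\ref{lem:rp} to assume $t,t'$ are \rp-basic forms and inducting on the complexity of $t$, paralleling the three-case analysis in the proof of Theorem~\ref{thm:1}. The base cases $t\in\{\tr,\fa\}$ are as before. For $t\equiv t_1\lef a\rig t_2$ versus $t'\equiv t_3\lef b\rig t_4$ with $a\not\equiv b$, the same separating state (assigning $\tr$ to all strings starting with $a$ and $\fa$ to all strings starting with $b$) works and is admissible. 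For $t\equiv t_1\lef a\rig t_2$ versus $t'\equiv t_3\lef a\rig t_4$ one peels off the leading test: if $t_1\neq_{\rp}t_3$, by induction get an admissible $f$ separating $t_1,t_3$, and take an admissible $g$ with $a\apply g=f$ and $a\reply g=\tr$. Here one must argue such a $g$ exists in the restricted space — define $g$ on strings not starting with $a$ freely (say $\tr$), set $g(a)=\tr$, and $g(a\sigma)=f(\sigma)$; the stutter condition on $g$ for strings through the first letter reduces to the stutter condition on $f$, which holds by induction hypothesis, \emph{except} possibly at the junction $g(aa\rho)$, which must equal $g(a)=\tr$, i.e.\ $f(a\rho)=\tr$. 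This is the point where the \rp-basic form restriction earns its keep: if the central condition of $t_1$ (resp.\ $t_3$) is $a$, then $t_1\equiv t_1'\lef a\rig t_1'$, so the values of $t_1\reply f$ and $t_1\apply f$ for such an $f$ do not actually depend on $f$'s behaviour on strings of the form $a\rho$, so one may freely set $f(a\rho)=\tr$ without disturbing the separation; if the central condition differs from $a$ (or is absent), there is no constraint at all.

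The hard part will be exactly this bookkeeping around admissible states: ensuring that at each inductive step the separating state can be chosen (or modified) to satisfy the global stutter condition without losing the property that it distinguishes the two subterms, and handling the sub-case where $t_1=_{\rp}t_3$ but $t_2\neq_{\rp}t_4$ symmetrically (using $a\reply g=\fa$, $a\apply g=f$, and now the constraint falls on $t_4$, whose \rp-basic form restriction again saves us). Once \eqref{SPA8} is established, $\NS^{\rp}$ is a SPA; define $\BA^{\rp}$ as its $\SigmaHMA$-reduct. Soundness ($\CP_{\rp}\vdash t=t'\Rightarrow\BA^{\rp}\models t=t'$) holds because $\BA^{\rp}$ is by construction a model of $\CP_{\rp}$ (the \CP\ axioms hold as in any HMA, and \eqref{CPrp1}, \eqref{CPrp2} hold by the state restriction); completeness is precisely the contrapositive of \eqref{SPA8} just proved. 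Finally I would remark, as after Theorem~\ref{thm:1}, that $\BA^{\rp}\cong I(\SigmaHMA,\CP_{\rp})$.
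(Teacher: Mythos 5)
Your overall architecture matches the paper's: quotient $\NT_{\SigmaHMA}/_{=_{\rp}}$ as the conditional expressions, a restricted function space inside $\{\tr,\fa\}^{A^+}$ as states, the shift definitions of $\apply$ and $\reply$, soundness of \eqref{CPrp1}--\eqref{CPrp2} from the state restriction, and \eqref{SPA8} by contraposition over \rp-basic forms. However, the state space you define is the wrong one, and this is not cosmetic. The repetition-proof restriction constrains only the \emph{reply} to an immediately repeated atom: the paper's $\RP$ consists of those $f$ with $f(\sigma aa)=f(\sigma a)$ and nothing more. Your formalization $f(\sigma aa\rho)=f(\sigma a)$, and equally the ``$f$ depends only on the stutter-reduced prefix'' gloss, additionally forces $a\apply(a\apply f)=a\apply f$, i.e.\ that a repeated atom leaves the \emph{state} unchanged. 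That is the contraction condition of Section~\ref{sec:cr}, and an algebra built on such states also validates \eqref{CPcr1}--\eqref{CPcr2}. Concretely, $(\tr\lef a\rig\tr)\lef a\rig\fa$ and $\tr\lef a\rig\fa$ are distinct \rp-basic forms not identified by $\CP_{\rp}$, yet on your states their replies agree everywhere and their applies differ only through $a\apply(a\apply f)$ versus $a\apply f$, so your states cannot separate them and the $\Longleftarrow$ direction of the theorem fails. In the paper's $\RP$, $f(aab)$ and $f(ab)$ are unconstrained relative to each other, and exactly that freedom separates these two terms.

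The error propagates into your key inductive step. With the correct $\RP$, lifting a separating $f$ for $t_1,t_3$ to a $g$ with $a\apply g=f$ and $a\reply g=\tr$ imposes only $g(aa)=g(a)$, i.e.\ $f(a)=\tr$; the paper handles this by observing that when $a$ is a central condition of $t_1$ or $t_3$, the \rp-basic shape $t''\lef a\rig t''$ makes $t_1\reply f$ and $t_1\apply f$ independent of the single value $f(a)$ (more precisely, of $f$ on initial $a$-blocks), so $f$ may be modified there. Your version needs $f(a\rho)=\tr$ for \emph{every} $\rho$, and your justification --- that $t_1\equiv t_1'\lef a\rig t_1'$ makes $t_1\apply f$ independent of $f$ on all strings $a\rho$ --- is false: $t_1\apply f=t_1'\apply(a\apply f)$ depends on all of those values, so overwriting them destroys the separation. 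A smaller point: the base case is not literally ``as before'', since the state used in Theorem~\ref{thm:1} ($f(a)=\tr$, $f(a\sigma)=\fa$) violates $f(aa)=f(a)$ and so is not in $\RP$; the paper instead takes $f$ constantly $\fa$ on strings beginning with $a$ and uses a second atom $b$ with $f(b)=\tr$ to witness the difference, which is where the hypothesis $|A|>1$ enters.
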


\begin{proof}
Define the function space 
\[\RP\subset\{\tr,\fa\}^{A^+}\]
by $f\in\RP$ if for all $a\in A$ and $\sigma\in A^*$,
$f(\sigma aa)=f(\sigma a)$.
Construct the $\SigmaStateful$-algebra $\NS^{\rp}$
with $\NT_{\SigmaHMA}/_{=_{\rp}}$ as its set of
conditional expressions and \RP\
as its set of states. For each state $f$
and atom $a\in A$ define 
$a\reply f=f(a)$
and $a\apply f$ by
\[(a\apply f)(\sigma)=f(a\sigma).\]
Clearly, if $f\in\RP$ then $a\apply f\in\RP$.

Similar as in the proof of Theorem~\ref{thm:1},
the state constant $c$ is given an arbitrary interpretation, and
the axioms~\eqref{2S1}--\eqref{2S3}
define the function $s\lef f\rig s'$ in $\NS^{\rp}$.
The axioms~\eqref{SPA1}--\eqref{SPA6}
fully determine
the functions $\reply$ and $\apply$, and this is well-defined:
if $t=_{\rp}t'$ then for all $f$, $t\reply f=t'\reply f$
and $t\apply f=t'\apply f$ follow by inspection of the 
$\CP_{\rp}$ axioms. We show soundness of the axiom
scheme~\eqref{CPrp1}: For all $f\in\RP$,
$a\reply (a\apply f)=a\reply f$, and thus if 
$a\reply f=\tr$,
\[(t_1\lef a\rig t_2)\reply(a\apply f)=(t_1\lef a\rig t_1)\reply(a\apply f).\] 
We derive
\begin{align*}
((t_1\lef a\rig t_2)\lef a\rig t)\reply f
&=(t_1\lef a\rig t_2)\reply(a\apply f)\lef
a\reply f\rig t\reply(a\apply f)\\
&=(t_1\lef a\rig t_1)\reply(a\apply f)\lef
a\reply f\rig t\reply(a\apply f)\\
&=((t_1\lef a\rig t_1)\lef a\rig t)\reply f,
\end{align*}
and
\begin{align*}
((t_1\lef a\rig t_2)\lef a\rig t)\apply f
&=(t_1\lef a\rig t_2)\apply(a\apply f)\lef
a\reply f\rig t\apply(a\apply f)\\
&=(t_1\lef a\rig t_1)\apply(a\apply f)\lef
a\reply f\rig t\apply(a\apply f)\\
&=((t_1\lef a\rig t_1)\lef a\rig t)\apply f.
\end{align*}
The soundness of \eqref{CPrp2} follows in a similar
way.
The axiom~\eqref{SPA7} holds by construction of \RP.
In order to prove that $\NS^{\rp}$ is a SPA 
it remains to be shown that 
axiom~\eqref{SPA8} holds, i.e.,
for all $t,t'\in\NT_{\SigmaHMA}$,
\[\forall f(t\reply f=t'\reply f\wedge
t\apply f=t'\apply f)\rightarrow t=_{\rp}t'.\]
This follows by contraposition in the same
way as in the proof of Theorem~\ref{thm:1}.
However, the restriction to \RP\ imposes some subtle
constraints,
so we give a full proof. We may assume that
both $t$ and $t'$ are \rp-basic forms. We 
apply induction on the
complexity of $t$. 
Let $a,b\in A$ with $a\ne b$.
\begin{enumerate}
\item
If $t\equiv \tr$, then if $t'\equiv\fa$ 
it follows that $t\reply f
\ne t'\reply f$ for any $f\in \RP$,  and if 
$t'\equiv t_1\lef a\rig t_2$ then consider 
some $f\in\RP$ with $f(b)=\tr$ and 
$f(a)=f(a\sigma)=\fa$ for
$\sigma\in A^+$. We find $(t\apply f)(b)=f(b)=\tr$ 
and
$(t'\apply f)(b)=(t_1\apply f)(a\sigma b)=\fa$ 
(where $\sigma$ possibly equals $\epsilon$), 
so $t'\apply f\ne t\apply f$.

\item
If $t\equiv\fa$ a similar argument applies.

\item
If $t\equiv t_1\lef a\rig t_2$, then the case 
$t'\in\{\tr,\fa\}$
can be dealt with as above. 

If $t'\equiv t_3\lef a\rig t_4$
then assume $t\neq_{\rp}t'$ because
$t_1\neq_{\rp}t_3$. By induction there exists $f\in\RP$ with
the distinguishing property
$t_1\apply f\neq t_3\apply f$ or
$t_1\reply f\neq t_3\reply f$. 
\begin{itemize}
\item
If none of $t_1$ and $t_3$ has $a$ as its central condition, 
there exists $g\in RP$ with $a\apply g=f$ and 
$a\reply g=\tr$, and such a function
$g$ distinguishes $t$ and $t'$.
\item
If at least one of $t_1$ and $t_3$ has $a$ as its central
condition,
then this $a$ and all successive $a$'s occur
in subterms of the form $t''\lef a\rig t''$ because $t$ and 
$t'$ are \rp-basic forms. Hence, we may assume that
$t_1$ and $t_3$ can be distinguished by $f'\in\RP$ with 
$f'(a)=\tr$ and $f'$ otherwise defined as $f$ (so,
$f$ and $f'$ differ at most on initial $a$-sequences).
We find that $f'$ distinguishes $t$ and $t'$.
\end{itemize}
If $t_1=_{\rp}t_3$, then a similar argument applies
for $t_2\neq_{\rp}t_4$.

If $t'\equiv t_3\lef b\rig t_4$ with $a$ and $b$ different,
then 
$(t_1\lef a\rig t_2)\apply f\ne (t_3\lef b\rig t_4)\apply f$ 
for $f$ defined
by $f(a)=f(a\sigma)=\tr$ and $f(b)=f(b\sigma)=\fa$ because
$((t_1\lef a\rig t_2)\apply f)(a)=
(t_1\apply (a\apply f))(a)=
f(a\rho a)=\tr$, and
$((t_3\lef b\rig t_4)\apply f)(a)=(t_4\apply (b\apply f))(a)=
f(b\rho' a)=\fa$ (where
$\rho,\rho'$ possibly equal $\epsilon$).

\end{enumerate}
So $\NS^{\rp}$ is a SPA. 
Define the HMA $\BA^{\rp}$ as the $\SigmaHMA$-reduct 
of $\NS^{\rp}$. The above argument on the soundness of
the axiom schemes~\eqref{CPrp1} and \eqref{CPrp2}
proves $\Longrightarrow$ as stated in the theorem,
and the
validity of axiom~\eqref{SPA8} proves $\Longleftarrow$.
We finally note that $\BA^{\rp}\cong I(\SigmaHMA,\CP_{\rp})$.
\end{proof}

In the proof above we defined the SPA $\NS^{\rp}$ and 
we found that if $|A|>1$, then 
for all $t,t'\in\NT_{\SigmaHMA}$,
\begin{equation}
\label{eq:ccrp}
\CP_{\rp}\vdash t=t' \iff \NS^{\rp}\models t=t'.
\end{equation}
If $A=\{a\}$ then $\NS^{\rp}$ 
has only two states, say $f$ and $g$ with
$f(a^{n+1})=\tr$ and $g(a^{n+1})=\fa$ and it
easily follows that 
\[\BA^{\rp}\models \tr\lef a\rig\tr=\tr,\]
so 
$\BA^{\rp}\not\cong I(\SigmaHMA,\CP_{\rp})$ in this case.
The following corollary is related to Theorem~\ref{thm:rp} 
and characterizes repetition-proof congruence
in terms of a quasivariety of SPAs that satisfy an extra condition. 

\begin{corollary}
\label{cor:rp}
Let $|A|>1$.
Let $\NC_{\rp}$ be the class of SPAs that satisfy
for all $a\in A$ and $s\in S$,
\[a\reply(a\apply s)= a\reply s.\]
Then for all
$t,t'\in\NT_{\SigmaHMA}$,\[\NC_{\rp}\models t=t'\iff \CP_{\rp}\vdash t=t'.\]
\end{corollary}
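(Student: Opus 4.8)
The plan is to derive Corollary~\ref{cor:rp} from Theorem~\ref{thm:rp} and equation~\eqref{eq:ccrp} by the same pattern used for Corollary~\ref{cor:0}. The two directions are handled separately. For $\Longleftarrow$ (completeness), I would observe that the SPA $\NS^{\rp}$ constructed in the proof of Theorem~\ref{thm:rp} belongs to the class $\NC_{\rp}$: by construction $a\reply f=f(a)$ and $(a\apply f)(\sigma)=f(a\sigma)$, so $a\reply(a\apply f)=(a\apply f)(a)=f(aa)$, and since $f\in\RP$ satisfies $f(\sigma aa)=f(\sigma a)$, taking $\sigma\equiv\epsilon$ gives $f(aa)=f(a)=a\reply f$. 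Hence $\NS^{\rp}\in\NC_{\rp}$, so $\NC_{\rp}\models t=t'$ implies $\NS^{\rp}\models t=t'$, which by \eqref{eq:ccrp} (valid since $|A|>1$) implies $\CP_{\rp}\vdash t=t'$.

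For $\Longrightarrow$ (soundness) I would show that every SPA in $\NC_{\rp}$ satisfies the axiom schemes \eqref{CPrp1} and \eqref{CPrp2}, since it already satisfies \CP\ and the \CPS-axioms by definition of an SPA. This is exactly the computation already carried out in the proof of Theorem~\ref{thm:rp}: from $a\reply(a\apply s)=a\reply s$ one gets, in the case $a\reply s=\tr$, the identity $(t_1\lef a\rig t_2)\reply(a\apply s)=(t_1\lef a\rig t_1)\reply(a\apply s)$ and likewise for $\apply$, using \eqref{SPA3} and \eqref{SPA6}; the case $a\reply s=\fa$ is handled by \eqref{2S2}/\eqref{SPA2}/\eqref{SPA5}. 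Then \eqref{SPA3} and \eqref{SPA6} expand $((t_1\lef a\rig t_2)\lef a\rig t)\reply s$ and its $\apply$-counterpart, and substituting the just-derived equalities yields the corresponding expansion of $((t_1\lef a\rig t_1)\lef a\rig t)\lef a\rig t)$; applying \eqref{SPA8} then gives \eqref{CPrp1}, and \eqref{CPrp2} follows symmetrically. Therefore any SPA in $\NC_{\rp}$ is a model of $\CP_{\rp}$, so $\CP_{\rp}\vdash t=t'$ implies $\NC_{\rp}\models t=t'$.

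Since I may invoke the proof of Theorem~\ref{thm:rp}, the cleanest writeup simply cites that both required facts — membership $\NS^{\rp}\in\NC_{\rp}$ and soundness of \eqref{CPrp1}, \eqref{CPrp2} in every member of $\NC_{\rp}$ — are established there, so the corollary is immediate. The only genuinely new observation is that the soundness argument in Theorem~\ref{thm:rp}, although stated for the particular states $f\in\RP$, uses nothing about those states beyond the equation $a\reply(a\apply s)=a\reply s$, which is precisely the defining condition of $\NC_{\rp}$; making this abstraction explicit is the one step that needs a sentence of justification rather than a bare citation. I do not expect a real obstacle here: the content is a routine adaptation of Corollary~\ref{cor:0}'s proof together with the already-performed soundness verification, and the hypothesis $|A|>1$ is needed only to guarantee \eqref{eq:ccrp}, exactly as in Theorem~\ref{thm:rp}.
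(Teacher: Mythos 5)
Your proposal is correct and follows essentially the same route as the paper: one direction comes from $\NS^{\rp}\in\NC_{\rp}$ together with \eqref{eq:ccrp}, and the other from re-running the soundness computation for \eqref{CPrp1} and \eqref{CPrp2} using only the hypothesis $a\reply(a\apply s)=a\reply s$, concluding via \eqref{SPA8}. The only slip is notational: relative to the displayed equivalence your labels $\Longrightarrow$ and $\Longleftarrow$ are interchanged (the paper calls the direction from $\NC_{\rp}\models t=t'$ to $\CP_{\rp}\vdash t=t'$ the $\Longrightarrow$ direction), but the mathematical content of both directions is right.
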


\begin{proof}
By its definition, $\NS^{\rp}\in \NC_{\rp}$, 
which by \eqref{eq:ccrp} implies $\Longrightarrow$.
For the converse, it is sufficient to show that
the axioms
\eqref{CPrp1} and \eqref{CPrp2} hold in each SPA
that is in $\NC_{\rp}$.
Let such $\NS$ be given.
Consider \eqref{CPrp1}: if for some
interpretation of $s$ in \NS, $a\reply s=\fa$ 
there is nothing to prove, and if $a\reply s=\tr$, then
$a\reply(a\apply s)=\tr$ and hence
\begin{align*}
((t_1\lef a\rig t_2)\lef a\rig t)\reply s&=
t_1\reply (a\apply (a\apply s))\\
&= ((t_1\lef a\rig t_1)\lef a\rig t)\reply s,
\end{align*}
and
\begin{align*}
((t_1\lef a\rig t_2)\lef a\rig t)\apply s&=
t_1\apply (a\apply (a\apply s))\\
&= ((t_1\lef a\rig t_1)\lef a\rig t)\apply s.
\end{align*}
The soundness of axiom \eqref{CPrp2} can be proved
in the same way.
\end{proof}

\section{Contractive congruence}
\label{sec:cr}
In this section we consider \emph{contractive  
congruence}
defined by the axioms of $\CP$ and 
these axiom schemes ($a\in A$):
\begin{align*}
\label{CPcr1}\tag{CPcr1} \qquad
(x\lef a\rig y)\lef a\rig z&=
x\lef a\rig z,\\
\label{CPcr2}\tag{CPcr2} \qquad
x\lef a\rig (y\lef a\rig z)&= x \lef a\rig z.
\end{align*}
Typically, successive equal atoms are contracted.

We write $\CP_{\con}$ for this set of axioms. 
Let \emph{contractive congruence}, notation
$=_{\con}$, be the congruence on $\NT_{\SigmaHMA}$ generated 
by the axioms of $\CP_{\con}$.  

\begin{definition}
\label{def:crbf}
A term $t\in\NT_{\SigmaStateless}$ is a
\textbf{cr-basic form} if for $a\in A$,
\[t::=\tr\mid\fa\mid t_1\lef a \rig t_2 \]
and $t_i$ ($i=1,2$) is a \con-basic form with
the restriction that the
central
condition (if present) is
different from $a$.
\end{definition}

\begin{lemma}
For each $t\in\NT_{\SigmaHMA}$ there exists a \con-basic form
$t'$ with $\CP_{\con}\vdash t=t'$.
\end{lemma}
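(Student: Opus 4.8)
The plan is to follow the same template as the proof of Lemma~\ref{lem:rp}, replacing the axiom schemes \eqref{CPrp1}--\eqref{CPrp2} by \eqref{CPcr1}--\eqref{CPcr2}. First I would establish the key sublemma: the conditional composition $t_1\lef t_2\rig t_3$ of three \con-basic forms is $\CP_{\con}$-provably equal to a \con-basic form, by structural induction on $t_2$. The base case $t_2\in\{\tr,\fa\}$ is immediate via \eqref{CP1} or \eqref{CP2}. For $t_2\equiv t\lef a\rig t'$, apply \eqref{CP4} to get $(t_1\lef t\rig t_3)\lef a\rig(t_1\lef t'\rig t_3)$, and by the induction hypothesis (applied to $t$ and to $t'$, each of strictly smaller complexity) obtain \con-basic forms $t_4$ and $t_5$ with $t_1\lef(t\lef a\rig t')\rig t_3=t_4\lef a\rig t_5$.

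The point where this proof differs from the repetition-proof case is the clean-up step. Here, if $t_4\equiv t_6\lef a\rig t_7$ has central condition $a$, I would apply \eqref{CPcr1} to replace $t_4\lef a\rig t_5$ by $t_6\lef a\rig t_5$; this reduces the complexity of the first component, so iterating (or a nested induction on the size of $t_4$) eventually yields a first component whose central condition is not $a$. Symmetrically, if $t_5\equiv t_8\lef a\rig t_9$, apply \eqref{CPcr2} to pass to $t_4\lef a\rig t_9$ and iterate on the second component. One must be a little careful that, after these rewrites, both components are still \con-basic forms and that the process terminates — it does, because each application of \eqref{CPcr1} or \eqref{CPcr2} strictly decreases the size of the component being simplified, and the simplification of one component does not reintroduce a forbidden central condition in the other. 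The result is a term $t_4'\lef a\rig t_5'$ with $t_4',t_5'$ \con-basic forms whose central conditions (if any) differ from $a$, which is by definition a \con-basic form, and all steps are $\CP_{\con}$-derivable.

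With this sublemma in hand, the lemma itself follows by a routine structural induction on $t\in\NT_{\SigmaHMA}$: atoms $a$ are rewritten to $\tr\lef a\rig\fa$ and the constants $\tr,\fa$ are already \con-basic forms, while for $t\equiv t_1\lef t_2\rig t_3$ the induction hypothesis gives \con-basic forms for $t_1$, $t_2$, $t_3$, and the sublemma turns their conditional composition into a \con-basic form, all provably equal in $\CP_{\con}$.

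The main obstacle is the termination and well-definedness of the clean-up step: unlike \eqref{CPrp1}--\eqref{CPrp2}, which merely normalize a repeated-atom subterm into the duplicated shape $t''\lef a\rig t''$, the contractive axioms genuinely delete a layer, so one has to argue that repeatedly stripping an $a$-headed first (resp.\ second) component terminates and leaves a legitimate \con-basic form. This is handled by an inner induction on term size and is not deep, but it is the step that requires the most care to state correctly.
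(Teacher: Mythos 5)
Your proposal is correct and takes essentially the same approach as the paper: the paper's own proof of this lemma is literally ``Similar to the proof of Lemma~\ref{lem:rp}'', and you have carried out exactly that adaptation, replacing the duplication step of \eqref{CPrp1}--\eqref{CPrp2} by the contraction step of \eqref{CPcr1}--\eqref{CPcr2}. One minor simplification: since $t_4$ and $t_5$ are already \con-basic forms, their immediate components cannot themselves have $a$ as central condition, so a single application of \eqref{CPcr1} (resp.\ \eqref{CPcr2}) already yields a \con-basic form and the iteration/termination argument you flag as the main obstacle is not actually needed.
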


\begin{proof}
Similar to the proof of Lemma~\ref{lem:rp}.
\end{proof}

\begin{theorem}
\label{thm:cr}
For $|A|>1$ an HMA that characterizes
$\CP_\con$ exists, i.e. there is an
HMA $\BA^{\con}$ such that for all 
$t,t'\in\NT_{\SigmaHMA}$,
$\CP_{\con}\vdash t=t'\iff \BA^{\con}\models t=t'$.
\end{theorem}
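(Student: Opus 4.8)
The plan is to mimic the construction in the proof of Theorem~\ref{thm:1} (and of Theorem~\ref{thm:rp}), replacing the state space by a function space tailored to contractive congruence. First I would define the set $A^{\con}\subseteq A^+$ of nonempty strings in which no atom equals its immediate neighbour, and take the state space to be $\{\tr,\fa\}^{A^{\con}}$. For a state $f$ and an atom $a\in A$, the reply is $a\reply f=f(a)$, and the state transformer $a\apply f$ must be defined so as to respect contraction: put $(a\apply f)(\sigma)=f(a\sigma)$ when $a\sigma\in A^{\con}$, i.e.\ when $\sigma$ does not begin with $a$, and $(a\apply f)(\sigma)=f(\sigma)$ when $\sigma$ does begin with $a$ (equivalently, $(a\apply f)(a\rho)=f(a\rho)$, collapsing a leading $a$). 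One checks $a\apply f\in\{\tr,\fa\}^{A^{\con}}$ again, so this is well-defined. The state constant $c$ gets an arbitrary interpretation; $.\lef.\rig.$ on states is forced by \eqref{2S1}--\eqref{2S3}; and $\reply,\apply$ on compound conditional expressions are forced by \eqref{SPA1}--\eqref{SPA6}.

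Next I would verify that this yields a SPA $\NS^{\con}$. Well-definedness of $\reply$ and $\apply$ on $\NT_{\SigmaHMA}/_{=_{\con}}$ requires soundness of $\CP$ together with the schemes \eqref{CPcr1} and \eqref{CPcr2}; soundness of $\CP$ is routine as before, and soundness of \eqref{CPcr1}, \eqref{CPcr2} follows from the identity $a\apply(a\apply f)=a\apply f$ on the new state space (the defining clause for $a\apply$ makes a second application of $a$ idempotent), by exactly the computation used for \eqref{CPrp1} in the proof of Theorem~\ref{thm:rp}. Two-valuedness \eqref{SPA7} holds by construction. The remaining and only real work is \eqref{SPA8}: for basic forms (here: \con-basic forms, via the preceding lemma) $t\not=_{\con}t'$ we must exhibit a state $f\in\{\tr,\fa\}^{A^{\con}}$ with $t\reply f\ne t'\reply f$ or $t\apply f\ne t'\apply f$. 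I would run the same induction on the complexity of $t$ as in Theorem~\ref{thm:1}, using a fresh atom $b\ne a$ as a ``probe'': for the case $t\equiv t_1\lef a\rig t_2$, $t'\equiv t_3\lef a\rig t_4$ with $t_1\ne_{\con}t_3$, take a distinguishing $f$ for $t_1,t_3$ supplied by induction and lift it to $g$ with $a\reply g=\tr$ and $a\apply g=f$ — such a $g\in A^{\con}$ exists because, $t$ and $t'$ being \con-basic, the central conditions of $t_1,t_3$ are $\ne a$, so there is no conflict at the junction; the case $t'\equiv t_3\lef b\rig t_4$ with $a\ne b$ is handled by a state that separates the $a$- and $b$-branches verbatim as in Theorem~\ref{thm:1}. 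Having established \eqref{SPA8}, I set $\BA^{\con}$ to be the $\SigmaHMA$-reduct of $\NS^{\con}$; soundness of $\CP_{\con}$ gives $\Longrightarrow$ and \eqref{SPA8} gives $\Longleftarrow$.

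The main obstacle is getting the definition of $a\apply f$ exactly right so that (i) it lands back in $\{\tr,\fa\}^{A^{\con}}$, (ii) it validates \eqref{CPcr1}--\eqref{CPcr2} — which is really the demand $a\apply(a\apply f)=a\apply f$ — and (iii) it does \emph{not} collapse too much, so that the separating states needed for \eqref{SPA8} still exist. The collapsing clause ``$(a\apply f)(a\rho)=f(a\rho)$'' is the delicate point: it is precisely what contractive congruence asks for, and one must check in the \eqref{SPA8} induction that whenever two \con-basic forms genuinely differ, the lifted probe state is a legal element of $A^{\con}$ and is not identified away by contraction. Once the state space and $a\apply$ are pinned down, the rest is a careful but routine replay of the earlier proofs, together with the final remark that $\BA^{\con}\cong I(\SigmaHMA,\CP_{\con})$ when $|A|>1$ (and, as with $\BA^{\rp}$, this fails for $|A|=1$, where $\BA^{\con}\models\tr\lef a\rig\tr=\tr$).
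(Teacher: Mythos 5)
Your proposal matches the paper's proof essentially step for step: the same state space $\{\tr,\fa\}^{A^{\con}}$, the identical absorbing definition of $a\apply f$ (collapse a leading $a$, otherwise prepend $a$), soundness of \eqref{CPcr1}--\eqref{CPcr2} via $a\reply(a\apply f)=a\reply f$ and $a\apply(a\apply f)=a\apply f$, and the same contraposition induction on \con-basic forms for \eqref{SPA8}, including the lifted probe state $g$ with $a\apply g=f$, $a\reply g=\tr$ and the $a\ne b$ separation. The concluding observations that $\BA^{\con}\cong I(\SigmaHMA,\CP_{\con})$ for $|A|>1$ and that this fails for $|A|=1$ also coincide with the paper.
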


\begin{proof}
Let $A^{\con}\subset A^+$ be the set of strings that
contain no consecutive occurrences of the same atom.
Construct the $\SigmaStateful$-algebra $\NS^{\con}$
with
$\NT_{\SigmaHMA}/_{=_{\con}}$ as its set of
conditional expressions and the function space
\[\{\tr,\fa\}^{A^{\con}}\]
as its set of states. For 
each state $f$ and atom $a\in A$ define 
$a\reply f=f(a)$ and $a\apply f$ by
\[(a\apply f)(\sigma)=
\begin{cases}
f(\sigma)&\text{if $\sigma=a$ or $\sigma=a\rho$},\\
f(a\sigma)&\text{otherwise}.
\end{cases}
\]
Clearly, $a\apply f\in\{\tr,\fa\}^{A^{\con}}$ if 
$f\in\{\tr,\fa\}^{A^{\con}}$.
Similar as in the proof of Theorem~\ref{thm:1},
the state constant $c$ is given an arbitrary interpretation, and
the axioms~\eqref{2S1}--\eqref{2S3}
define the function $s\lef f\rig s'$ in $\NS^{\con}$.
The axioms~\eqref{SPA1}--\eqref{SPA6}
fully determine
the functions $\reply$ and $\apply$, and this is well-defined:
if $t=_{\con}t'$ then for all $f$, $t\reply f=t'\reply f$
and
$t\apply f=t'\apply f$ follow by inspection of the 
$\CP_{\con}$ axioms. We show soundness of 
the axiom scheme~\eqref{CPcr1}:
first note that $a\reply (a\apply f)=a\reply f$ and 
$a\apply (a\apply f)=a\apply f$, and
derive
\begin{align*}
((t_1\lef a\rig t_2)\lef a\rig t)\reply f
&=(t_1\lef a\rig t_2)\reply(a\apply f)\lef
a\reply f\rig t\reply(a\apply f)\\
&=t_1\reply(a\apply(a\apply f))\lef
a\reply f\rig t\reply(a\apply f)\\
&=(t_1\lef a\rig t)\reply f,
\end{align*}
and
\begin{align*}
((t_1\lef a\rig t_2)\lef a\rig t)\apply f
&=(t_1\lef a\rig t_2)\apply(a\apply f)\lef
a\reply f\rig t\apply(a\apply f)\\
&=t_1\apply(a\apply(a\apply f))\lef
a\reply f\rig t\apply(a\apply f)\\
&=(t_1\lef a\rig t)\apply f.
\end{align*}
The soundness of \eqref{CPcr2} follows in a similar
way.
The axiom~\eqref{SPA7} holds by construction of \RP.
In order to prove that $\NS^{\con}$ is a SPA 
it remains to be shown that 
axiom~\eqref{SPA8} holds, i.e.,
for all $t,t'\in\NT_{\SigmaHMA}$,
\[\forall f(t\reply f=t'\reply f\wedge
t\apply f=t'\apply f)\rightarrow t=_{\con}t'.\]
This follows by contraposition. We may assume that
both $t$ and $t'$ are \con-basic forms, and we 
apply induction on the
complexity of $t$. 
Let $a,b\in A$ with $a\ne b$.
\begin{enumerate}
\item
If $t\equiv \tr$, then if $t'\equiv\fa$ 
it follows that $t\reply f
\ne t'\reply f$ for any $f$,  and if 
$t'\equiv t_1\lef a\rig t_2$ then consider 
some $f$ with $f(b)=\tr$ and 
$f(a)=f(a\sigma)=\fa$ for
$a\sigma\in A^\con$. 
We find $(t\apply f)(b)=f(b)=\tr$ and
$(t'\apply f)(b)=(t_1\apply f)(a\sigma b)=\fa$ 
(where $\sigma$ possibly equals $\epsilon$), 
so $t'\apply f\ne t\apply f$.

\item
If $t\equiv\fa$ a similar argument applies.

\item
If $t\equiv t_1\lef a\rig t_2$, then the case 
$t'\in\{\tr,\fa\}$
can be dealt with as above. 

If $t'\equiv t_3\lef a\rig t_4$ then assume
$t\ne_\con t'$ because $t_1\ne_\con t_3$.
Then $a$ is not a central condition in 
$t_1$ and $t_3$, and by induction there exists $f$ with
$t_1\apply f\neq t_3\apply f$ or
$t_1\reply f\neq t_3\reply f$.
Take some $g$ such that $a\apply g=f$ and $a\reply g=\tr$,
then $g$ distinguishes
$t_1\lef a\rig t_2$ and $t_3\lef a\rig t_4$. 
If $t_1=_{\con}t_3$, then a similar argument applies
for $t_2\neq_{\con}t_4$.

If $t'\equiv t_3\lef b\rig t_4$ 
then 
$(t_1\lef a\rig t_2)\apply f\ne (t_3\lef b\rig t_4)\apply f$ 
for $f$ defined
by $f(a)=f(a\sigma)=\tr$ and $f(b)=f(b\sigma)=\fa$ because
$((t_1\lef a\rig t_2)\apply f)(b)=(t_1\apply (a\apply f))(b)=
f(a\rho b)=\tr$ (where
$\rho$ possibly equals $\epsilon$), and
$((t_3\lef b\rig t_4)\apply f)(b)=(t_4\apply (b\apply f))(b)$
and this equals either
$f(b\rho' b)=\fa$ for
some $\rho'\in(A\setminus\{b\})^\con$, or $f(b)=\fa$.
\end{enumerate}
So $\NS^{\con}$ is a SPA. 
Define the HMA $\BA^{\con}$ as the $\SigmaHMA$-reduct 
of $\NS^{\con}$. The above argument on the soundness of
the axiom schemes~\eqref{CPcr1} and \eqref{CPcr2}
proves $\Longrightarrow$ as stated in the theorem,
and the
validity of axiom~\eqref{SPA8} proves $\Longleftarrow$.
Finally, we note that $\BA^{\con}\cong I(\SigmaHMA,\CP_{\con})$.
\end{proof}

In the proof above we defined
the SPA $\NS^{\con}$ and we found that if
$|A|>1$, then
for all $t,t'\in\NT_{\SigmaHMA}$,
\begin{equation}
\label{eq:ccr}
\CP_{\con}\vdash t=t' \iff \NS^{\con}\models t=t'.
\end{equation}
If $A=\{a\}$ then $A^{\con}=A$ and 
$\NS^{\con}$ as defined above has only two states,
say $f$ and $g$ with
$f(a)=\tr$ and $g(a)=\fa$. It easily follows that 
\[\BA^{\con}\models \tr\lef a\rig\tr=\tr,\]
so 
$\BA^{\con}\not\cong I(\SigmaHMA,\CP_{\con})$ if $A=\{a\}$.
The following corollary is related to Theorem~\ref{thm:cr} 
and characterizes contractive congruence in terms
of a quasivariety of SPAs that satisfy an extra condition.

\begin{corollary}
\label{cor:cr}
Let $|A|>1$. 
Let $\NC_{\con}$ be the class of SPAs that satisfy
for all $a\in A$ and $s\in S$,
\[a\reply (a\apply s)= a\reply s~\wedge~
a\apply (a\apply s)= a\apply s.\]
Then for all $t,t'\in\NT_{\SigmaHMA}$,
\[\NC_{\con}\models t=t'\iff \CP_{\con}\vdash t=t'.\]
\end{corollary}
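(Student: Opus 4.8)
The plan is to mirror the proof of Corollary~\ref{cor:rp}, using the equivalence \eqref{eq:ccr} in place of \eqref{eq:ccrp} and verifying the two contractive axiom schemes \eqref{CPcr1} and \eqref{CPcr2} in place of the repetition-proof ones.

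First I would dispatch the direction $\Longrightarrow$. In the construction of $\NS^{\con}$ in the proof of Theorem~\ref{thm:cr} it was already observed that $a\reply(a\apply f)=a\reply f$ and $a\apply(a\apply f)=a\apply f$ hold for every state $f\in\{\tr,\fa\}^{A^{\con}}$, so $\NS^{\con}\in\NC_{\con}$. Hence $\NC_{\con}\models t=t'$ implies $\NS^{\con}\models t=t'$, and by \eqref{eq:ccr} this gives $\CP_{\con}\vdash t=t'$.

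For the direction $\Longleftarrow$ it suffices to show that \eqref{CPcr1} and \eqref{CPcr2} are valid in every SPA $\NS\in\NC_{\con}$; since the $\SigmaStateless$-reduct of any SPA models $\CP$, this yields $\NS\models\CP_{\con}$ and therefore $\CP_{\con}\vdash t=t'$ implies $\NC_{\con}\models t=t'$. To check \eqref{CPcr1}, fix an interpretation of $s$ and distinguish on the value of $a\reply s$, which by two-valuedness \eqref{SPA7} is $\tr$ or $\fa$. If $a\reply s=\fa$, then using \eqref{SPA3}, \eqref{SPA6} together with \eqref{CP2} both $((t_1\lef a\rig t_2)\lef a\rig t)\reply s$ and $(t_1\lef a\rig t)\reply s$ reduce to $t\reply(a\apply s)$, and likewise for $\apply$, so nothing extra is needed. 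If $a\reply s=\tr$, then also $a\reply(a\apply s)=a\reply s=\tr$ by the first $\NC_{\con}$-condition, and by \eqref{SPA3}, \eqref{SPA6}, \eqref{CP1} the left-hand side evaluates (for $\reply$ and $\apply$) through $a\apply(a\apply s)$ whereas the right-hand side evaluates through $a\apply s$; the second $\NC_{\con}$-condition $a\apply(a\apply s)=a\apply s$ makes the two coincide. Thus $(t_1\lef a\rig t_2)\lef a\rig t$ and $t_1\lef a\rig t$ agree on $\reply$ and $\apply$ for every state, hence are equal by \eqref{SPA8}. The scheme \eqref{CPcr2} is treated the same way, with the two cases for $a\reply s$ interchanged: the case $a\reply s=\tr$ is then immediate and the case $a\reply s=\fa$ uses both $\NC_{\con}$-conditions.

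Every step is a routine case split on $a\reply s$ followed by an appeal to \eqref{SPA8}, so I do not expect a genuine obstacle; the only point deserving care is that, unlike in Corollary~\ref{cor:rp}, one must invoke \emph{both} $\NC_{\con}$-conditions, because contracting a repeated $a$ discards the intermediate state and so one genuinely needs $a\apply(a\apply s)=a\apply s$ to align the two sides.
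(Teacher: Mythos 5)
Your proposal is correct and follows essentially the same route as the paper: $\Longrightarrow$ via $\NS^{\con}\in\NC_{\con}$ and \eqref{eq:ccr}, and $\Longleftarrow$ by verifying \eqref{CPcr1} and \eqref{CPcr2} in an arbitrary member of $\NC_{\con}$ through a case split on $a\reply s$, using $a\reply(a\apply s)=a\reply s$ and $a\apply(a\apply s)=a\apply s$ exactly where the paper does. Your closing observation that, unlike for $\NC_{\rp}$, the second condition is genuinely needed to absorb the discarded intermediate state is accurate and consistent with the paper's definitions.
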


\begin{proof}
By its definition, $\NS^{\con}\in \NC_{\con}$, 
which by \eqref{eq:ccr} implies $\Longrightarrow$.
For the converse, it is sufficient to show that
the axioms
\eqref{CPcr1} and \eqref{CPcr2} hold in any SPA 
that is in $\NC_{\con}$.
Let such $\NS$ be given.
Consider \eqref{CPcr1}: 
if for some
interpretation of $s$ in \NS, $a\reply s=\fa$ 
there is nothing to prove, and if $a\reply s=\tr$, then
$a\reply(a\apply s)=\tr$ and hence
\begin{align*}
((t_1\lef a\rig t_2)\lef a\rig t)\reply s&=
t_1\reply
(a\apply (a\apply s))\\
&=t_1\reply (a\apply s)\\
&=(t_1\lef a\rig t)\reply s,
\end{align*}
and
\begin{align*}
((t_1\lef a\rig t_2)\lef a\rig t)\apply s&=
t_1\apply
(a\apply (a\apply s))\\
&=t_1\apply (a\apply s)\\
&=(t_1\lef a\rig t)\apply s.
\end{align*}
The soundness of axiom \eqref{CPcr2} can be proved
in the same way.
\end{proof}

\section{Weakly memorizing congruence}
\label{sec:wmem}
In this section we consider \emph{weakly memorizing congruence}
defined by the axioms of $\CP_\con$ and 
these axiom schemes ($a,b\in A$):
\begin{align*}
\label{CPwm1}\tag{CPwm1} \qquad
((x\lef a\rig y)\lef b\rig z)\lef a\rig v&=
(x\lef b\rig z)\lef a\rig v,\\
\label{CPwm2}\tag{CPwm2} \qquad
x\lef a\rig (y\lef b\rig (z\lef a\rig v))
&= x \lef a\rig (y\lef b\rig v).
\end{align*}
Note that for $a=b$, these axioms follow from  $\CP_\con$.
We write $\CP_{\wmem}$ for this set of axioms.
Typically, if evaluation of a series of successive
atoms yields equal replies, contraction takes place.
This is also the case if there
is more than one ``intermediate'' atom, an example is
\begin{align*}
(((x\lef a\rig y)\lef b\rig z)\lef c\rig u)\lef a \rig v
&=((((x\lef a\rig y)\lef b\rig z)\lef a\rig w)\lef
 c\rig u)\lef a\rig v\\
&=(((x\lef b\rig z)\lef a\rig w)\lef c\rig u)\lef a\rig v\\
&=((x\lef b\rig z)\lef c\rig u)\lef a \rig v.
\end{align*}

Let \emph{weakly memorizing congruence}, 
notation $=_{\wmem}$, be the congruence on $\NT_{\SigmaHMA}$ generated 
by the axioms of $\CP_{\wmem}$.  
Again we define a special type of basic forms.

\begin{definition}
\label{def:wmembf}
Let $t$ be a basic form. 
Then $pos(t)$ is the set of atoms that occur as the central
condition of $t$, or at a left-hand (positive) position in $t$:
\[pos(\tr)=pos(\fa)=\emptyset \quad\text{and}\quad
pos(t\lef a\rig t')=\{a\}\cup pos(t),\]
and $neg(t)$ is the set of atoms that occur as the central
condition of $t$, or at a right-hand (negative) position in $t$:
\[neg(\tr)=neg(\fa)=\emptyset \quad\text{and}\quad
neg(t\lef a\rig t')=\{a\}\cup neg(t').\]
Term $t\in\NT_{\SigmaStateless}$ is a
\textbf{wmem-basic form} if for $a\in A$,
\[t::=\tr\mid\fa\mid t_1\lef a \rig t_2 \]
and $t_1$ and $t_2$ are \wmem-basic forms with
the restriction that $a\not\in pos(t_1)\cup neg(t_2)$.
\end{definition}

\begin{lemma}
\label{lem:wm}
For each $t\in\NT_{\SigmaHMA}$ there exists a \wmem-basic form
$t'$ with $\CP_{\wmem}\vdash t=t'$.
\end{lemma}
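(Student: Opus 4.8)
The plan is to follow the same two-stage strategy that was used in the proof of Lemma~\ref{lem:rp}: first show that the conditional composition $t_1\lef t_2\rig t_3$ of three \wmem-basic forms can be proved equal to a \wmem-basic form, and then derive the general statement by a routine structural induction on $t$ (using Lemma~\ref{lem:nf} to reduce an arbitrary closed term to a basic form first, and then rewriting that basic form). So the heart of the argument is the reduction of $t_1\lef a\rig t_2$ — where $t_1,t_2$ are already \wmem-basic forms and $a\in A$ — to \wmem-basic form; the case where the central condition is \tr\ or \fa\ is trivial by~\eqref{CP1}, \eqref{CP2}, and in the general case the central condition of a basic form is always an atom.

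First I would observe that since $\CP_\wmem\supseteq\CP_\con$, I already have the \con-reductions~\eqref{CPcr1}, \eqref{CPcr2} available, so the only new phenomenon to deal with is the occurrence of $a$ in $pos(t_1)$ or in $neg(t_2)$ at a position separated from the root by one or more ``intermediate'' atoms. Concretely, suppose $t_1\equiv u_1\lef b\rig u_2$; I want to eliminate every positive occurrence of $a$ inside $t_1$. The key tool is axiom~\eqref{CPwm1}, generalized (as in the displayed example in the text just before Definition~\ref{def:wmembf}) to an arbitrary number of intermediate atoms: if $t_1$ contains a subterm of the form $(\,\cdots(v\lef a\rig w)\lef b_1\rig z_1\cdots\lef b_k\rig z_k)$ sitting at a leftmost path from the root of $t_1$, then by repeatedly introducing a fresh $\lef a\rig$ directly above the inner $\lef a\rig$ via~\eqref{X}-style reasoning and~\eqref{CPwm1} (and symmetrically~\eqref{CPwm2} for $neg(t_2)$) one can ``pull out'' that inner $a$-test, replacing it by $v$. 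I would make this precise by an induction on the number of atoms on the leftmost spine of $t_1$ (respectively the rightmost spine of $t_2$) lying strictly above the offending $a$-occurrence, showing each such occurrence can be removed while only decreasing a suitable complexity measure, so the process terminates in a \wmem-basic form. Doing this to $t_1$ for positive occurrences and to $t_2$ for negative occurrences, and then applying the induction hypothesis to the immediate subterms so produced, yields the required \wmem-basic form equal to $t_1\lef a\rig t_2$.

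Once the single-composition lemma is in hand, the main statement follows: given $t\in\NT_{\SigmaHMA}$, by Lemma~\ref{lem:nf} there is a basic form $t''$ with $\CP\vdash t=t''$, hence $\CP_\wmem\vdash t=t''$; then structural induction on $t''$ reduces each proper subterm to \wmem-basic form and one application of the single-composition result reassembles the whole, giving a \wmem-basic form $t'$ with $\CP_\wmem\vdash t=t'$.

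The main obstacle I expect is setting up a termination/complexity measure for the inner ``pull-out'' process that is robust under the bookkeeping: each application of~\eqref{CPwm1} replaces $((x\lef a\rig y)\lef b\rig z)\lef a\rig v$ by $(x\lef b\rig z)\lef a\rig v$, which is genuinely smaller, but to expose that redex one first has to manufacture the extra $\lef a\rig$ via the derived identity $x=x\lef y\rig x$-type manoeuvres (here, $x=x\lef a\rig x$ follows from $\CP_\con$ since~\eqref{CPcr1}, \eqref{CPcr2} give $(x\lef a\rig y)\lef a\rig x = x\lef a\rig x$ and one must check $x\lef a\rig x =_\wmem x$ is actually available — it is not an axiom, so care is needed and one may instead have to argue directly that the intermediate term one writes down is already in the congruence class without passing through $x\lef a\rig x = x$). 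Getting this intermediate manipulation right, and confirming that the ``intermediate atoms'' $b_1,\dots,b_k$ can genuinely be arbitrary (the displayed three-line computation before Definition~\ref{def:wmembf} is the prototype, and it only uses~\eqref{CPwm1} plus the ability to insert a copied $\lef a\rig$ node), is the delicate part; the outer structural inductions are entirely routine, mirroring Lemma~\ref{lem:rp}.
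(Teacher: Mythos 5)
Your overall strategy coincides with the paper's: reduce to a basic form via Lemma~\ref{lem:nf}, induct, and for $t_1\lef a\rig t_2$ with $t_1,t_2$ already \wmem-basic eliminate $a$ from $pos(t_1)$ and $neg(t_2)$ by a use of \eqref{CPwm1} and \eqref{CPwm2} generalized to several intermediate atoms. But the one step that carries all the weight --- how to manufacture the intermediate $\lef a\rig$ nodes so that \eqref{CPwm1} becomes applicable to a deeply buried occurrence of $a$ --- is exactly the step you leave open, and the mechanism you propose for it would fail. ``\eqref{X}-style reasoning'', i.e.\ $x=x\lef a\rig x$, is not derivable in $\CP_{\wmem}$ and is in fact unsound for $=_{\wmem}$: in $\NS^{\wmem}$ one computes $(\tr\lef a\rig\tr)\apply f=a\apply f$, which differs from $f=\tr\apply f$ in general, so $\tr\lef a\rig\tr\neq_{\wmem}\tr$. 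You flag this difficulty yourself, but your fallback (``argue directly that the intermediate term is in the congruence class'') is precisely the argument that is missing, so as it stands the proof does not go through.

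The paper's resolution is that \eqref{CPwm1} itself, read from right to left, performs the insertion: in $((x\lef a\rig y)\lef b\rig z)\lef a\rig v=(x\lef b\rig z)\lef a\rig v$ the branch $y$ is arbitrary, so inside a context ending in an outer $\lef a\rig v$ one may graft $\lef a\rig w$ (for a conveniently chosen $w$) onto the positive child of a $b$-node; equivalently, one replaces the central condition $b$ by $a\lef b\rig\fa$ and expands with \eqref{CP4}. Iterating this down the leftmost spine of $t_1$ (the paper's ``saturation'') places an $a$ directly above every buried positive occurrence of $a$; these are then contracted with \eqref{CPcr1}, and the inserted scaffolding is removed by \eqref{CPwm1} read left to right. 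This also disposes of your worry about a termination measure, since the saturation is a single simultaneous substitution followed by finitely many contractions. With this replacement for your insertion step, the rest of your outline (the symmetric treatment of $neg(t_2)$ using $\tr\lef b\rig a$, and the outer structural induction) matches the paper; note that no separate ``composition of three \wmem-basic forms'' lemma is needed, since Lemma~\ref{lem:nf} already guarantees that every central condition is an atom.
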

\begin{proof}
See~\cite{BP10}; this proof is repeated in
Appendix~\ref{app:Proofs}.
\end{proof}

In the following we prepare the ingredients for an
HMA that characterizes $=_\wmem$.
Recall $A^{\con}\subset A^+$ is the set of strings that
contain no consecutive occurrences of the same atom. Define 
``element-wise
left-concatenation with absorption''
$\lcon$ on 
$A\times A^\con\rightarrow A^\con$ by
\[a\lcon\sigma=\begin{cases}
a&\text{if $\sigma=a$},\\
a\lcon\rho&\text{if $\sigma=a\rho$},\\
a\sigma&\text{otherwise}.
\end{cases}
\]
Observe that for all $\sigma\in A^\con$, $a\lcon(a\lcon\sigma)
=a\lcon \sigma$.

\begin{definition}
\label{def:WM}
The function space 
$\WM\subset \{\tr,\fa\}^{A^\con}$ is defined
by the following restriction: 
$f\in\WM$ if for all $a\in A$ and $b\in A\setminus\{a\}$,
and all $\rho\in A^*$ that satisfy $\rho a\in A^\con$,
\[f(\rho a b)=f(\rho a)\Longrightarrow
\begin{cases}
f(\rho aba)=f(\rho a),
\text{ and}\\
f(\rho aba\lcon\sigma)=f(\rho ab\lcon\sigma)&\text{for all
$\sigma\in A^\con$}.
\end{cases}
\]
\end{definition}
For example, if $f\in\WM$ and $b\sigma\in A^\con$, then
\begin{equation}
\label{eq:vbwm}
f(a)=f(ab)~\Longrightarrow~ (f(abab)=f(ab)\quad\text{and}
\quad
f(abab\sigma)=f(ab\sigma)).
\end{equation}

\begin{theorem}
\label{thm:wmem}
For $|A|>1$ an HMA that characterizes
$\CP_{\wmem}$ exists, i.e. there is an
HMA $\BA^{\wmem}$ such that for all 
$t,t'\in\NT_{\SigmaHMA}$,
$\CP_{\wmem}\vdash t=t'\iff \BA^{\wmem}\models t=t'$.
\end{theorem}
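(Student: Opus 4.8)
The plan is to mimic the proofs of Theorems~\ref{thm:1}, \ref{thm:rp} and \ref{thm:cr}: construct a $\SigmaStateful$-algebra $\NS^{\wmem}$ whose conditional expressions form $\NT_{\SigmaHMA}/_{=_{\wmem}}$ and whose states are the function space $\WM$, then verify it is a SPA, and finally set $\BA^{\wmem}$ to be its $\SigmaHMA$-reduct. Concretely, for $f\in\WM$ and $a\in A$ put $a\reply f=f(a)$, and let $a\apply f$ be defined on $\sigma\in A^\con$ by $(a\apply f)(\sigma)=f(a\lcon\sigma)$ (using the absorbing left-concatenation). The first small check is that $\WM$ is closed under $a\apply{}$, so that this is well-typed; this should follow from the defining clause of $\WM$ together with the absorption identity $a\lcon(a\lcon\sigma)=a\lcon\sigma$ noted just before Definition~\ref{def:WM}, with a bit of care in translating the hypothesis $f(\rho ab)=f(\rho a)$ through the concatenation. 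As before, the state constant $c$ gets an arbitrary interpretation, \eqref{2S1}--\eqref{2S3} define $s\lef f\rig s'$, and \eqref{SPA1}--\eqref{SPA6} determine $\reply$ and $\apply$ on all conditional expressions. Well-definedness modulo $=_{\wmem}$ reduces to checking soundness of the axiom schemes \eqref{CPcr1}, \eqref{CPcr2}, \eqref{CPwm1}, \eqref{CPwm2} in the $\reply$- and $\apply$-readings; \eqref{CPcr1} and \eqref{CPcr2} go exactly as in Theorem~\ref{thm:cr} (since $\CP_{\wmem}\supseteq\CP_{\con}$), and \eqref{SPA7} holds by construction of $\WM\subseteq\{\tr,\fa\}^{A^\con}$.

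The genuinely new computation is the soundness of \eqref{CPwm1} and \eqref{CPwm2}. For \eqref{CPwm1}, one evaluates $((x\lef a\rig y)\lef b\rig z)\lef a\rig v$ under an arbitrary state $f$: the outer $a$ is read off by $a\reply f=f(a)$; if $f(a)=\fa$ both sides reduce to $v\apply(a\apply f)$ (resp.\ $v\reply(a\apply f)$) and there is nothing to prove, while if $f(a)=\tr$ one must show $((x\lef a\rig y)\lef b\rig z)\apply(a\apply f)=(x\lef b\rig z)\apply(a\apply f)$ and similarly for $\reply$. Writing $g=a\apply f$, this in turn splits on $b\reply g=g(b)=f(ab)$: if $f(ab)\neq f(a)$ the inner $a$-branch of $x\lef a\rig y$ is evaluated after $b$ has switched the reply, and the two sides already agree; if $f(ab)=f(a)$ the defining clause of $\WM$ (in the form \eqref{eq:vbwm} when $\rho=\epsilon$, and its general shape $f(\rho aba\lcon\sigma)=f(\rho ab\lcon\sigma)$ otherwise) is exactly what makes $(x\lef a\rig y)$ evaluated in state $b\apply g$ agree with $x$ evaluated there, term by term on the relevant arguments. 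The identity \eqref{CPwm2} is the ``negative'' mirror and is handled symmetrically. The key is to translate every nested $\apply$ into a single lookup $f(\text{word})$ via the two facts $(a\apply f)(\sigma)=f(a\lcon\sigma)$ and $a\lcon(a\lcon\sigma)=a\lcon\sigma$, after which the $\WM$-condition applies verbatim.

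It remains to verify \eqref{SPA8}, i.e.\ for all $t,t'\in\NT_{\SigmaHMA}$,
\[
\forall f\in\WM\ (t\reply f=t'\reply f\ \wedge\ t\apply f=t'\apply f)\ \rightarrow\ t=_{\wmem}t',
\]
which I prove by contraposition. By Lemma~\ref{lem:wm} I may take $t$ and $t'$ to be \wmem-basic forms and induct on the complexity of $t$. The base cases $t\in\{\tr,\fa\}$ and the cases where the central conditions of $t$ and $t'$ differ, or where one of $t,t'$ is in $\{\tr,\fa\}$, go as in Theorems~\ref{thm:1} and~\ref{thm:cr} (choose an $f$ that is constant $\tr$ on a fresh atom $b$ and suitably valued on the relevant prefixes, so that the two apply-images differ at $b$). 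The interesting case is $t\equiv t_1\lef a\rig t_2$ and $t'\equiv t_3\lef a\rig t_4$ with, say, $t_1\neq_{\wmem}t_3$: since these are \wmem-basic forms, $a\notin pos(t_1)\cup pos(t_3)$, and by induction some $f\in\WM$ distinguishes $t_1$ and $t_3$ (via $\reply$ or $\apply$); I then need to lift $f$ to $g\in\WM$ with $a\reply g=\tr$ and $a\apply g=f$.

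The main obstacle — and the reason this theorem needs a ``full proof'' rather than a reference back — is precisely this lifting step: given $f\in\WM$ defined on $A^\con$, I must build $g\in\WM$ on $A^\con$ with $g(a)=\tr$ and $g(a\lcon\sigma)=f(\sigma)$ for all $\sigma\in A^\con$. The natural definition is $g(\sigma)=f(\sigma)$ when $\sigma$ does not start with $a$, $g(a)=\tr$, and $g(a\sigma)=f(\sigma)$ when $a\sigma\in A^\con$; one must then check that $g$ satisfies the $\WM$-closure condition for every $(\rho,a',b')$. For prefixes $\rho$ not beginning with $a$ this is inherited from $f$; the delicate instances are $\rho=\epsilon$ (where the hypothesis $g(a'b')=g(a')$ with $a'=a$ reads $f(b')=\tr$ and one needs $g(aba)=g(a)$ and $g(aba\lcon\sigma)=g(ab\lcon\sigma)$, which amounts to $f(ba)=\tr$ and $f(ba\lcon\sigma)=f(b\lcon\sigma)$, not obviously available) and $\rho$ beginning with $a$ (where the condition on $g$ translates to the $\WM$-condition on $f$ at the stripped prefix). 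I expect to handle the $\rho=\epsilon$ obstruction by choosing, in the inductive step, a distinguishing $f$ for $t_1,t_3$ that has already been normalised so that it never commits to $f(a)=\tr$ on the empty prefix — equivalently, by first replacing $f$ with $a\apply f$-style variants as in the repetition-proof proof (``$f$ and $f'$ differ at most on initial $a$-sequences''), exploiting that $a\notin pos(t_1)\cup pos(t_3)$ so that the values of $f$ on words beginning with $a$ are irrelevant to how $t_1$ and $t_3$ evaluate. Once the lifting is secured, $g$ distinguishes $t$ from $t'$, the symmetric sub-case $t_1=_{\wmem}t_3$, $t_2\neq_{\wmem}t_4$ is dual, and we conclude $\NS^{\wmem}$ is a SPA; taking $\BA^{\wmem}$ to be its $\SigmaHMA$-reduct, soundness of the \wmem-schemes gives $\Longrightarrow$ and \eqref{SPA8} gives $\Longleftarrow$, with $\BA^{\wmem}\cong I(\SigmaHMA,\CP_{\wmem})$.
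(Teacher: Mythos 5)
Your construction and argument coincide with the paper's proof: the same state space $\WM$ with $a\reply f=f(a)$ and $(a\apply f)(\sigma)=f(a\lcon\sigma)$, the same soundness computation for \eqref{CPwm1}/\eqref{CPwm2} via the $\WM$-condition (the paper packages it as the conditional equation $f(a)=f(ab)\Rightarrow a\apply(b\apply(a\apply f))=b\apply(a\apply f)$), and the same contraposition-plus-induction on \wmem-basic forms for \eqref{SPA8}. The ``lifting'' difficulty you isolate is exactly the point the paper addresses with its remark that the $\WM$-restriction on $g$ is harmless because $a\notin pos(t_1)\cup pos(t_3)$, and your proposed resolution (adjusting $f$ on the constrained values, which do not affect how $t_1$ and $t_3$ are distinguished) is the same one, spelled out slightly more explicitly.
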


\begin{proof}
Construct the $\SigmaStateful$-algebra $\NS^{\wmem}$
with
$\NT_{\SigmaHMA}/_{=_{\wmem}}$ as its set of
conditional expressions and 
$\WM$ (Definition~\ref{def:WM})
as its set of states. 
We first argue that $\WM$
is suitable as state set.
Define for $f\in \WM$,
$a\reply f=f(a)$ and 
for $\sigma\in A^\con$, 
\[(a\apply f)(\sigma)=
f(a\lcon \sigma).
\]
This is well-defined: if $f\in\WM$ then it easily follows
that for all
$a\in A$, $a\apply f\in \WM$. We note that for all
$a\in A$ and $f\in WM$, $a\apply(a\apply f)=a\apply f$, and
also
\begin{equation}
\label{eq:vbwm2}
f(a)=f(ab)~\Longrightarrow~ a\apply(b\apply(a\apply f))=
b\apply(a\apply f).
\end{equation}
The latter conditional equation follows immediately from
Definition~\ref{def:WM}.

Similar as in the proof of Theorem~\ref{thm:1},
the state constant $c$ is given an arbitrary interpretation, and
the axioms~\eqref{2S1}--\eqref{2S3}
define the function $s\lef f\rig s'$ in $\NS^{\wmem}$.
The axioms~\eqref{SPA1}--\eqref{SPA6}
fully determine the functions $\reply$ and $\apply$, and this
is well-defined:
if $t=_{\wmem}t'$ then for all $f\in\WM$, 
$t\reply f=t'\reply f$ and $t\apply f=t'\apply f$
follow by inspection of the 
$\CP_{\wmem}$ axioms. We show soundness of 
the axiom \eqref{CPwm1}.
Assume $a\ne b$ and
$f(a)=f(ab)$,
then $f(aba)=f(a)$ and by equation~\eqref{eq:vbwm2} (case $f(a)=\tr$),
\begin{align*}
&(((t_1\lef a\rig t_2)\lef b\rig t_3)\lef a\rig t)\reply f\\
&=[(t_1\lef a\rig t_2)\reply(b\apply(a\apply f))\lef b\reply (a\apply f)
\rig t_3\reply(b\apply(a\apply f))]\lef a\reply f\rig t\reply(a\apply f)\\
&=[t_1\reply(a\apply(b\apply(a\apply f)))\lef b\reply(a\apply f)\rig 
t_3\reply(b\apply(a\apply f))]\lef a\reply f\rig t\reply(a\apply f)\\
&=[t_1\reply(b\apply(a\apply f))\lef b\reply(a\apply f)\rig t_3\reply
(b\apply(a\apply f))]\lef a\reply f\rig t\reply(a\apply f)\\
&=((t_1\lef b\rig t_3)\lef a\rig t)\reply f,
\end{align*}
and in a similar way
$(((t_1\lef a\rig t_2)\lef b\rig t_3)\lef a\rig t)\apply f
=(t_1\lef a\rig t_3)\apply f$ follows.
The cases $a=b$ and $f(a)\ne f(ab)$ are trivial.
Soundness of 
the axiom \eqref{CPwm2} follows in a similar way.
The axiom~\eqref{SPA7} holds by construction of \RP.
In order to prove that $\NS^{\wmem}$ is a SPA 
it remains to be shown that 
axiom~\eqref{SPA8} holds, i.e.,
for all $t,t'\in\NT_{\SigmaHMA}$,
\[\forall f(t\reply f=t'\reply f\wedge
t\apply f=t'\apply f)\rightarrow t=_{\wmem}t'.\]
This follows by contraposition. We may assume that
both $t$ and $t'$ are \wmem-basic forms, and we 
apply induction on the
complexity of $t$. 
Let $a,b\in A$ with $a\ne b$.
\begin{enumerate}
\item
If $t\equiv \tr$, then if $t'\equiv\fa$ 
it follows that $t\reply f
\ne t'\reply f$ for any $f$,  and if 
$t'\equiv t_1\lef a\rig t_2$ then consider 
some $f$ with $f(a)=f(b)=\tr$ and 
$f(ab)=f(a\sigma b)=\fa$ for all appropriate
$\sigma$. We find $(t\apply f)(b)=f(b)=\tr$ 
and
$(t'\apply f)(b)=(t_1\apply f)(a\sigma b)=\fa$ 
(where $\sigma$ possibly equals $\epsilon$), so $t'\apply f\ne t\apply f$.

\item
If $t\equiv\fa$ a similar argument applies.

\item
If $t\equiv t_1\lef a\rig t_2$, then the case 
$t'\in\{\tr,\fa\}$
can be dealt with as above. 

If $t'\equiv t_3\lef a\rig t_4$ then assume
$t\ne_\wmem t'$ because $t_1\ne_\wmem t_3$.
Then $a\not\in pos(t_1)\cup pos(t_3)$, and by induction there is $f$ with
$t_1\apply f\neq t_3\apply f$ or
$t_1\reply f\neq t_3\reply f$.
Take $g$ such that $a\apply g=f$ and $a\reply g=\tr$,
then $g$ distinguishes
$t_1\lef a\rig t_2$ and $t_3\lef a\rig t_4$. Note that
the restriction obtained by $a\apply g=f$ and $a\reply g=\tr$
that is imposed by Definition~\ref{def:WM}, i.e.,
for all $b\in A\setminus\{a\}$,
\[g(a b)=g(a)\Longrightarrow
\begin{cases}
g(aba)=g(a),
\text{ and}\\
g(aba\lcon\sigma)=g(ab\lcon\sigma)&\text{for all
$\sigma\in A^\con$}.
\end{cases}
\]
is not relevant because of $a\not\in pos(t_1)\cup pos(t_3)$,
and hence values of $g(aba\lcon\sigma)$ play not a role
in the above-mentioned distinction.

If $t_1=_{\wmem}t_3$, then a similar argument applies
for $t_2\neq_{\wmem}t_4$.

If $t'\equiv t_3\lef b\rig t_4$ then 
$(t_1\lef a\rig t_2)\apply f\ne (t_3\lef b\rig t_4)\apply f$ 
for $f$ defined by $f(a)=f(a\sigma)=\tr$ and 
$f(b)=f(b\sigma')=\fa$ for all appropriate
$\sigma,\sigma'$ because
$((t_1\lef a\rig t_2)\apply f)(b)=(t_1\apply (a\apply f))(b)=
f(a\rho b)=\tr$ (where
$\rho$ possibly equals $\epsilon$), and
$((t_3\lef b\rig t_4)\apply f)(b)=(t_4\apply (b\apply f))(b)$
and this equals either
$f(b\rho' b)=\fa$ for
some $\rho'\ne\epsilon$, or $f(b)=\fa$.
\end{enumerate}
So $\NS^{\wmem}$ is a SPA. 
Define the HMA $\BA^{\wmem}$ as the $\SigmaHMA$-reduct 
of $\NS^{\wmem}$. The above argument on the soundness of
the axiom schemes~\eqref{CPwm1} and \eqref{CPwm2}
proves $\Longrightarrow$ as stated in the theorem,
and the
validity of axiom~\eqref{SPA8} proves $\Longleftarrow$.
We finally note that 
$\BA^{\wmem}\cong I(\SigmaHMA,\CP_{\wmem})$.
\end{proof}

In the proof above we defined
the SPA $\NS^{\wmem}$ and we found that if
$|A|>1$, then
for all $t,t'\in\NT_{\SigmaHMA}$,
\begin{equation}
\label{eq:cwm}
\CP_{\wmem}\vdash t=t' \iff \NS^{\wmem}\models t=t'.
\end{equation}
If $A=\{a\}$ then $A^{\con}=A$ and 
$\NS^{\wmem}$ as defined above has only two states,
say $f$ and $g$ with
$f(a)=\tr$ and $g(a)=\fa$. It easily follows that 
\[\BA^{\wmem}\models \tr\lef a\rig\tr=\tr,\]
so 
$\BA^{\wmem}\not\cong I(\SigmaHMA,\CP_{\wmem})$ if $A=\{a\}$.
The following corollary is related to Theorem~\ref{thm:wmem} 
and characterizes weakly memorizing congruence in terms
of a quasivariety of SPAs that satisfy two extra conditions.

\begin{corollary}
\label{cor:wmem}
Let $|A|>1$. 
Let $\NC_{\wmem}$ be the class of SPAs that satisfy
for all $a,b\in A$ and $s\in S$,
\begin{align*}
&a\reply (a\apply s)= a\reply s~\wedge~
a\apply (a\apply s)= a\apply s,\\
&b\reply(a\apply s)=a\reply s\rightarrow
(a\reply(b\apply (a\apply s))=a\apply s ~\wedge~
a\apply(b\apply(a\apply s))=b\apply(a\apply s)).
\end{align*}
Then for all $t,t'\in\NT_{\SigmaHMA}$,
\[\NC_{\wmem}\models t=t'\iff \CP_{\wmem}\vdash t=t'.\]
\end{corollary}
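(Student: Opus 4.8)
The plan is to mimic the pattern of Corollaries~\ref{cor:rp} and~\ref{cor:cr}. The direction $\Longrightarrow$ is immediate: we have already shown (equation~\eqref{eq:cwm}) that $\CP_{\wmem}\vdash t=t' \iff \NS^{\wmem}\models t=t'$, so it suffices to check that the SPA $\NS^{\wmem}$ belongs to $\NC_{\wmem}$. The first conjunct holds because $a\apply(a\apply f)=a\apply f$ for all $f\in\WM$ (noted in the proof of Theorem~\ref{thm:wmem}) and hence also $a\reply(a\apply f)=a\apply f$ applied to the string $a$ gives $a\reply(a\apply f)=(a\apply f)(a)=f(a\lcon a)=f(a)=a\reply f$. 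The second conjunct is exactly the content of equation~\eqref{eq:vbwm2} together with its ``reply'' counterpart, reading $b\reply(a\apply f)=a\reply f$ as $f(ab)=f(a)$: Definition~\ref{def:WM} then yields $f(aba)=f(a)$, i.e.\ $a\reply(b\apply(a\apply f))=f(aba)=f(a)=(a\apply f)(a)= a\apply f$ (as a value; more precisely the first conjunct of the conclusion in~\ref{cor:wmem} should be read as $a\reply(b\apply(a\apply s))=a\reply(a\apply s)$, which is what Definition~\ref{def:WM} gives), and $a\apply(b\apply(a\apply f))=b\apply(a\apply f)$ is precisely~\eqref{eq:vbwm2}.

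For the converse $\Longleftarrow$, by the reasoning already used in Corollary~\ref{cor:rp} it is enough to show that every SPA $\NS\in\NC_{\wmem}$ satisfies the axiom schemes~\eqref{CPwm1} and~\eqref{CPwm2} (it satisfies $\CP$ by definition of SPA, and it satisfies $\CP_{\con}$ because the first line of the defining condition of $\NC_{\wmem}$ is exactly the condition defining $\NC_{\con}$ in Corollary~\ref{cor:cr}, so by that corollary's argument \eqref{CPcr1} and \eqref{CPcr2} hold). Fix $\NS\in\NC_{\wmem}$, atoms $a,b$, states $s$, and consider an interpretation of~\eqref{CPwm1}. If $a\reply s=\fa$ then both sides of~\eqref{CPwm1} reduce via~\eqref{SPA3}/\eqref{SPA6} and~\eqref{2S2}/\eqref{SPA2} to $t\reply s$ resp.\ $t\apply s$ and there is nothing to prove; likewise if $b\reply(a\apply s)\ne a\reply s$. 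So assume $a\reply s=\tr$ and $b\reply(a\apply s)=a\reply s$; then the $\NC_{\wmem}$ condition gives $a\reply(b\apply(a\apply s))=a\reply(a\apply s)=a\reply s=\tr$ and $a\apply(b\apply(a\apply s))=b\apply(a\apply s)$. Now unfold the left-hand side of~\eqref{CPwm1} under $\reply$ using \eqref{SPA3} twice: it equals
\[
\bigl(t_1\reply(a\apply(b\apply(a\apply s)))\lef b\reply(a\apply s)\rig t_3\reply(b\apply(a\apply s))\bigr)\lef a\reply s\rig t\reply(a\apply s),
\]
and substituting $a\apply(b\apply(a\apply s))=b\apply(a\apply s)$ this is
\[
\bigl(t_1\reply(b\apply(a\apply s))\lef b\reply(a\apply s)\rig t_3\reply(b\apply(a\apply s))\bigr)\lef a\reply s\rig t\reply(a\apply s),
\]
which is exactly $((t_1\lef b\rig t_3)\lef a\rig t)\reply s$ after refolding with~\eqref{SPA3}. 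The identical manipulation with $\apply$ in place of $\reply$, using~\eqref{SPA6} and the state identity $a\apply(b\apply(a\apply s))=b\apply(a\apply s)$, gives equality of the two sides under $\apply$. Since this holds for every interpretation of $s$, axiom~\eqref{SPA8} (which holds in the SPA $\NS$) yields $\NS\models\eqref{CPwm1}$. The treatment of~\eqref{CPwm2} is symmetric, using the same two conditions of $\NC_{\wmem}$; and the degenerate subcase $a=b$ of either scheme already follows from $\CP_{\con}$, as remarked in the text.

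The only mild subtlety — the ``main obstacle'', such as it is — is bookkeeping about what $b\reply(a\apply s)=a\reply s$ really says and matching it against Definition~\ref{def:WM}: in the canonical model $\NS^{\wmem}$ this equation unwinds to $f(ab)=f(a)$ precisely when we evaluate the two sides as Boolean values, and one must make sure the hypothesis ``$f(\rho a b)=f(\rho a)$'' of Definition~\ref{def:WM} is instantiated at $\rho=\epsilon$ to extract the clause $f(aba)=f(a)$ and the family $f(aba\lcon\sigma)=f(ab\lcon\sigma)$, the latter being what forces $a\apply(b\apply(a\apply f))=b\apply(a\apply f)$ as functions (not merely at the argument $a$). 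For the abstract direction there is no obstacle at all: the conditions defining $\NC_{\wmem}$ were chosen exactly so that the $\reply$/$\apply$ unfoldings of~\eqref{CPwm1} and~\eqref{CPwm2} collapse as above, and everything else is the routine case analysis on $a\reply s$ already carried out in the earlier corollaries. Hence I expect the proof to be short, reading essentially ``$\Longrightarrow$: $\NS^{\wmem}\in\NC_{\wmem}$ by~\eqref{eq:vbwm2} and Definition~\ref{def:WM}, then apply~\eqref{eq:cwm}; $\Longleftarrow$: each $\NS\in\NC_{\wmem}$ satisfies $\CP_{\con}$ by Corollary~\ref{cor:cr} and satisfies~\eqref{CPwm1},~\eqref{CPwm2} by the unfolding just described, hence $\CP_{\wmem}\vdash t=t'$ whenever $\NC_{\wmem}\models t=t'$.''
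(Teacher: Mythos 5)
Your proof is correct and follows essentially the same route as the paper's: the direction $\Longrightarrow$ via $\NS^{\wmem}\in\NC_{\wmem}$ and \eqref{eq:cwm}, and the converse by verifying \eqref{CPwm1} and \eqref{CPwm2} in an arbitrary SPA of $\NC_{\wmem}$ through the case split on the reply values and the state identity $a\apply(b\apply(a\apply s))=b\apply(a\apply s)$, with the $\CP_{\con}$ axioms covered by $\NC_{\wmem}\subseteq\NC_{\con}$. You also rightly note that the first conjunct of the conclusion in the corollary's second condition is a sort-level typo for $a\reply(b\apply(a\apply s))=a\reply s$, which is exactly how the paper's own proof reads it.
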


\begin{proof}
By its definition, $\NS^{\wmem}\in \NC_{\wmem}$, 
which by \eqref{eq:cwm} implies $\Longrightarrow$.
For the converse, it is sufficient to show that
the axioms~\eqref{CPwm1} and \eqref{CPwm2} 
hold in each SPA 
that is in $\NC_{\wmem}$ because $\NC_{\wmem}\subseteq
\NC_\con$.
Let such $\NS$ be given.
Consider \eqref{CPwm1}: 
if for some
interpretation of $s$ in \NS, $a\reply s=\fa$ 
there is nothing to prove, and if 
$a\reply s=b\reply(a\apply s)=\tr$ and thus
$a\reply (b\apply(a\apply s))=\tr$, then
\begin{align*}
(((t_1\lef a\rig t_2)\lef b\rig t_3)\lef a\rig t)\reply s&=
t_1\reply
(a\apply(b\apply (a\apply s)))\\
&=t_1\reply (b\apply(a\apply s))\\
&=((t_1\lef b\rig t_3)\lef a\rig t)\reply s,
\end{align*}
and
\begin{align*}
(((t_1\lef a\rig t_2)\lef b\rig t_3)\lef a\rig t)\apply s&=
t_1\apply
(a\apply (b\apply(a\apply s)))\\
&=t_1\apply (b\apply(a\apply s))\\
&=((t_1\lef b\rig t_3)\lef a\rig t)\apply s.
\end{align*}
The soundness of axiom 
\eqref{CPwm2} can be proved
in a similar way.
\end{proof}

\section{Memorizing  congruence}
\label{sec:mem}
In this section we consider \emph{memorizing congruence}.
We define $\CP_{\mem}$ as the extension of 
\CP\ with the axiom  
\begin{align*}
\label{CPmem}\tag{CPmem}
x\lef y\rig(z\lef u\rig(v\lef y\rig w))&= x\lef y\rig(z\lef u\rig w).
\end{align*}

Axiom \eqref{CPmem}
defines how the central condition $y$ may recur in a
propositional statement, and thus defines a general
form of contraction.
The symmetric variants of \eqref{CPmem}, i.e.,
\begin{align}
\label{eq:mini}
x\lef y\rig ((z\lef y\rig u)\lef v\rig w)&=x\lef y\rig (u\lef v\rig w),\\
\label{eq:mini1}
(x\lef y\rig (z\lef u\rig v)) \lef u\rig w&=(x\lef y\rig z)\lef u\rig w,\\
\label{eq:mini2}
((x\lef y\rig z)\lef u\rig v) \lef y\rig w&=(x\lef u\rig v)\lef y\rig w,
\end{align} 
all follow easily with 
$y\lef x\rig z=z\lef(\fa\lef x\rig\tr)\rig y$ (which is derivable in \CP), 
e.g., a proof of~\eqref{eq:mini} is as follows:
\begin{align*}
x\lef y\rig ((z\lef y\rig u)\lef v\rig w)
&=x \lef y\rig (w\lef(\fa\lef v\rig\tr)\rig(z\lef y\rig u))\\
&=x \lef y\rig (w\lef(\fa\lef v\rig\tr)\rig u)\\
&=x\lef y\rig (u\lef v\rig w).
\end{align*}

Let \emph{memorizing congruence}, notation $=_{\mem}$,
be the congruence on $\NT_{\SigmaHMA}$ generated 
by the axioms of $\CP_{\mem}$.  

\begin{definition}
\label{def:membf}
A term $t\in\NT_{\SigmaStateless}$ is a
\textbf{mem-basic form over $A'\subset A$}
if for $a\in A'$,
\[t::=\tr\mid\fa\mid t_1\lef a \rig t_2 \]
and $t_i$ ($i=1,2$) is a \mem-basic form 
over $A'\setminus\{a\}$.
\end{definition}

E.g., for $A=\{a\}$ the set of all \mem-basic forms is 
$\{B,\;B\lef a\rig B'\mid 
B,B'\in\{\tr,\fa\}\}$, and for $A=\{a,b\}$ it is
\begin{align*}
\{B,\;t_1\lef a\rig t_2,t_3\lef b\rig t_4\mid~&
B\in\{\tr,\fa\},\\
&t_1,t_2 \text{ \mem-basic forms over $\{b\}$, }\\
&
t_3,t_4 \text{ \mem-basic forms over $\{a\}$}\}.
\end{align*} 
For $|A|=n$, the number of \mem-basic forms is 
$a_n = n(a_{n-1})^2 + 2$ with $a_0=2$, so the first few
values are $6, 74, 16430$.

\begin{lemma}
\label{lem:mem}
For each $t\in\NT_{\SigmaHMA}$ there exists a \mem-basic form
$t'$ with $\CP_{\mem}\vdash t=t'$.
\end{lemma}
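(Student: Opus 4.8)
The plan is to follow the same pattern as in the proofs of Lemmas~\ref{lem:rp} and~\ref{lem:wm}: first show that the conditional composition $t_1\lef t_2\rig t_3$ of three \mem-basic forms can be rewritten, using $\CP_{\mem}$, to a \mem-basic form, and then obtain the general statement by a straightforward structural induction on $t$ (the cases $t\equiv\tr$ and $t\equiv\fa$ being trivial, and $t\equiv t_1\lef a\rig t_2$ following by applying the just-proved fact to the \mem-basic forms supplied by the induction hypothesis for $t_1$ and $t_2$). So the crux is the auxiliary claim.

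For the auxiliary claim I would argue by induction on the central condition $t_2$. If $t_2\in\{\tr,\fa\}$ this is immediate by \eqref{CP1}/\eqref{CP2}. If $t_2\equiv s_1\lef a\rig s_2$, then by \eqref{CP4},
\[
t_1\lef(s_1\lef a\rig s_2)\rig t_3=(t_1\lef s_1\rig t_3)\lef a\rig(t_1\lef s_2\rig t_3),
\]
and by the induction hypothesis $t_1\lef s_1\rig t_3$ and $t_1\lef s_2\rig t_3$ can be proved equal to \mem-basic forms $u$ and $v$, so it suffices to bring $u\lef a\rig v$ into \mem-basic form over the relevant atom set. For this, the essential task is to \emph{remove every occurrence of $a$ as a central condition inside $u$ and inside $v$}: by Definition~\ref{def:membf} a \mem-basic form over $A'\ni a$ has $t_i$ a \mem-basic form over $A'\setminus\{a\}$, so what we must achieve is that after the outermost $\lef a\rig$ no further $a$ appears as a central condition anywhere below. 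Using \eqref{CPmem} together with its symmetric variants \eqref{eq:mini}, \eqref{eq:mini1}, \eqref{eq:mini2} (all derivable in $\CP_{\mem}$ as shown in the excerpt), one shows by an inner induction on the structure of $u$ (resp.\ $v$) that any subterm in which $a$ reappears as a central condition can be contracted away: if $u\equiv u_1\lef a\rig u_2$ occurs as a left-subterm of $u\lef a\rig v$, then the pattern $(\cdots\lef a\rig\cdots)\lef a\rig v$ is matched by \eqref{eq:mini2}-type rewriting (or directly by the special case obtained from $\CP_\con$-like reasoning, since $x\lef a\rig(\cdots\lef a\rig\cdots)$ and $(\cdots\lef a\rig\cdots)\lef a\rig x$ are handled by \eqref{CPmem} and \eqref{eq:mini2}), and more deeply nested recurrences of $a$ are removed by \eqref{CPmem}/\eqref{eq:mini}. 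I would organize this as: first eliminate $a$ from the ``spine'' positions, then push the elimination through using the four contraction laws, finally invoke the outer induction hypothesis once more on the subterms to restore \mem-basic form over $A'\setminus\{a\}$. A termination/well-foundedness remark (e.g.\ a decreasing measure such as the number of $a$-labelled central conditions, or the weight function $w$ from Section~\ref{sec:1}) justifies that this process stops.

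The step I expect to be the main obstacle is precisely the bookkeeping in this inner elimination: one must be careful that removing an inner recurrence of $a$ does not, via \eqref{CP4}-normalization needed to keep things in basic-form shape, reintroduce $a$ elsewhere, and that the induction hypothesis of the \emph{outer} induction (on $t_2$) is only applied to strictly smaller central conditions. Matching the right symmetric variant of \eqref{CPmem} to each configuration --- $a$ recurring on a positive path versus a negative path, directly below the outer $a$ versus separated by other atoms --- is the delicate part; the example computation displayed before Lemma~\ref{lem:wm}'s analogue (the three-line derivation with atoms $a,b,c$) is the template, and the general case is a routine but careful generalization of it. Everything else (the trivial base cases, the final structural induction assembling the lemma, and checking that the atom-set annotation $A'\setminus\{a\}$ is respected) is mechanical.
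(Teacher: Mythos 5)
Your overall architecture (an auxiliary claim that the conditional composition of mem-basic forms is provably equal to a mem-basic form, assembled by structural induction) is the standard one, and you have correctly isolated the crux: given mem-basic forms $u$ and $v$, every central-condition occurrence of $a$ inside $u$ and inside $v$ must be contracted away before $u\lef a\rig v$ qualifies as a mem-basic form. But that crux is exactly what you do not prove, and it is not routine. The laws \eqref{CPmem}, \eqref{eq:mini}, \eqref{eq:mini1} and \eqref{eq:mini2} each match a recurrence of $a$ separated from the outer occurrence by \emph{exactly one} intermediate conditional; an occurrence of $a$ lying deeper in $u$, reached by an arbitrary sequence of left/right turns, matches none of them directly. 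So ``more deeply nested recurrences of $a$ are removed by \eqref{CPmem}/\eqref{eq:mini}'' is an assertion rather than an argument, and the termination measure you invoke does not help until you exhibit a rewrite step that actually decreases it in the deep case. (Such steps exist --- one can insert a dummy occurrence $\lef a\rig w$ one level down by reading \eqref{eq:mini1}/\eqref{eq:mini2} from right to left, contract, and then remove the dummy, as in the displayed example for $\CP_{\wmem}$ in Section~\ref{sec:wmem} --- but choosing the correct variant at each positive/negative turn is precisely the bookkeeping you defer, and it is the whole content of the lemma.)

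The paper sidesteps this entirely. After normalizing $t$ to a basic form (Lemma~\ref{lem:nf}), so that the outer central condition is an atom $a$ and no further appeal to \eqref{CP4} is needed, it proves the single substitution identity $\CP_{\mem}\vdash t_1\lef a\rig t_2=[\tr/a]t_1\lef a\rig t_2$ by induction on $t_1$: when the central condition of $t_1$ is $a$ itself, the head occurrence is contracted by the $u=\tr$ instance of \eqref{eq:mini2}; when it is $b\ne a$, one inserts $\lef a\rig\tr$ behind both branches of $t_1$ using \eqref{eq:mini1} and \eqref{eq:mini2} read backwards, applies the induction hypothesis to the smaller branches, and removes the insertions again. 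Dually, $t_1\lef a\rig t_2=t_1\lef a\rig[\fa/a]t_2$ follows using \eqref{CPmem} and \eqref{eq:mini}. Simplifying the substituted constants with \eqref{CP1} and \eqref{CP2} then yields basic forms in which $a$ does not occur at all, and the outer induction finishes. I recommend you replace your unspecified ``inner elimination'' by this substitution lemma: it is the one genuinely new idea needed here, and without it (or an equally explicit treatment of arbitrarily deep recurrences) your proof has a gap at its central step.
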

\begin{proof}
See~\cite{BP10}; this proof is repeated in
Appendix~\ref{app:Proofs}.
\end{proof}

\begin{definition}
\label{def:Acore}
Let $A^{core}\subset A^+$ be the set of strings 
in which each element of $A$ occurs at most 
once.\footnote{If $|A|=n$ then $|A^{core}|=b_n$ 
  with $b_1=1$ and
  $b_n=n(b_{n-1}+1)$. (The first few $b_n$-values are
  $1,4,15,64,325,...$).}
\end{definition}
  
 We first argue that $M=\{\tr,\fa\}^{A^{core}}$
is suitable as state set of a SPA that characterizes
$\CP_{\mem}$.
Define for $f\in M$ the following:
$a\reply f=f(a)$ and 
for $\sigma\in A^{core}$, 
\[(a\apply f)(\sigma)=
\begin{cases}
f(a)&\text{if $\sigma=a$ or $\sigma=\rho a$},\\
f(a(\sigma-a))&\text{otherwise, where 
$(\sigma-a)$ is as $\sigma$ but with $a$ left out}.
\end{cases}
\]
For example, $(a\apply )f(a)=(a\apply f)(ba)=f(a)$ and 
$(a\apply f)(b)=(a\apply f)(ab)=f(ab)$.
Observe that 
\[(t'\lef t\rig t')\apply f=t'\apply (t\apply f)\]
because
if $t\reply f=\tr$ then 
$(t'\lef t\rig t')\apply f=t'\apply (t\apply f)$ 
and this also holds if $t\reply f=\fa$; now apply
axiom~\eqref{SPA7}.

\begin{lemma}
\label{lem:mem2}
For all $f\in M$ and
$t,t'\in\NT_{\SigmaHMA}$,
\begin{equation}
\label{eq:prof}
t\reply(t'\apply(t \apply f))=t\reply f~\wedge~ 
t\apply(t'\apply(t \apply f))=t'\apply(t\apply f).
\end{equation}
\end{lemma}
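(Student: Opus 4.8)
The plan is to reduce both conjuncts to a single underlying fact about the state set $M$, namely that applying a conditional expression to a state, then some other conditional, then the same conditional again, leaves the ``$a$-coordinates'' for atoms $a$ already evaluated by $t$ unchanged — this is the semantic content of memorizing contraction. First I would use Lemma~\ref{lem:mem} to reduce to the case where $t$ and $t'$ are \mem-basic forms, and then proceed by structural induction on $t$. The base cases $t\equiv\tr$ and $t\equiv\fa$ are immediate: by \eqref{SPA1}, \eqref{SPA2} we get $t\reply g=t\reply f$ trivially for every $g$, and by \eqref{SPA4}, \eqref{SPA5} we have $t\apply g=g$ for every $g$, so \eqref{eq:prof} collapses to $t'\apply(t\apply f)=t'\apply(t\apply f)$.

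For the inductive step $t\equiv t_1\lef a\rig t_2$ with $t_1,t_2$ \mem-basic forms over $A\setminus\{a\}$, I would first establish the key coordinate identity: for any $f\in M$ and any conditional expression $u$,
\[
\bigl(u\apply((t_1\lef a\rig t_2)\apply f)\bigr)(\tau)=f(a)\quad\text{whenever $\tau\in A^{core}$ contains $a$,}
\]
or more precisely that the value of $(t_1\lef a\rig t_2)\apply f$ on any string containing $a$ is $f(a)$, and that this property is preserved under a further $\apply$ by the displayed definition of $a\apply$ on strings of the form $\rho a$. From this, together with the already-noted fact that $(t'\lef t\rig t')\apply f=t'\apply(t\apply f)$ and two-valuedness \eqref{SPA7}, one reads off $t\reply(t'\apply(t\apply f))=(t'\apply(t\apply f))$ evaluated appropriately $=f(a)=t\reply f$ for the reply part. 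For the apply part one has to descend into $t_1$ (or $t_2$, according to the value of $a\reply f$); here the induction hypothesis applied to $t_1$ over the smaller atom set, combined with the observation that the intermediate state $t'\apply(t\apply f)$ and $t\apply f$ agree on all strings not mentioning $a$ after one more application of $a\apply$, gives the desired equality $t\apply(t'\apply(t\apply f))=t'\apply(t\apply f)$.

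The main obstacle I expect is the bookkeeping in the apply-part of the inductive step: one must show that $a\apply$ behaves idempotently enough that re-applying $t=t_1\lef a\rig t_2$ after $t'$ does not disturb coordinates, and this requires carefully tracking how the definition of $(a\apply f)(\sigma)$ — which deletes an interior occurrence of $a$ from $\sigma$ and pins $a$-final strings to $f(a)$ — interacts with the states $t\apply f$ and $t'\apply(t\apply f)$. Concretely, the point is that $a\apply(t'\apply(a\apply g))=t'\apply(a\apply g)$ for all $g\in M$, which should follow directly from the case analysis in the definition of $a\apply$ (strings ending in $a$, strings containing $a$ internally, strings avoiding $a$), and then the conditional structure of $t$ is handled by \eqref{SPA6} together with \eqref{SPA7}. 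Once this idempotency-modulo-$t'$ fact is in hand, the induction closes routinely.
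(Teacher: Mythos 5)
Your overall shape (structural induction on $t$, with an auxiliary ``coordinate'' fact about how $\apply$ treats strings involving an already-evaluated atom) is close to the paper's, but there are two genuine problems.

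First, the opening reduction to \mem-basic forms via Lemma~\ref{lem:mem} is circular. Lemma~\ref{lem:mem} gives $\CP_{\mem}\vdash t=t''$ for some \mem-basic $t''$, and its proof uses \eqref{CPmem}. To transfer \eqref{eq:prof} from $t''$ back to $t$ you need $t\reply f=t''\reply f$ and $t\apply f=t''\apply f$ for all $f\in M$, i.e., soundness of \eqref{CPmem} with respect to the concrete $\reply$ and $\apply$ on $M$ --- but that soundness is exactly what Lemma~\ref{lem:mem2} is invoked to establish in the proof of Theorem~\ref{thm:3}. The paper avoids this by inducting on arbitrary closed terms $t::=\tr\mid\fa\mid a\mid t_1\lef t_2\rig t_3$; the whole weight falls on the atomic case $t\equiv a$, and the composite case is dispatched with the identity $(t'\lef t\rig t')\apply f=t'\apply(t\apply f)$ and the induction hypothesis applied to the central condition $t_2$ as a full term, so no normal-form preprocessing is needed.

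Second, your ``key coordinate identity'' is false as stated: it is not the case that $((t_1\lef a\rig t_2)\apply f)(\tau)=f(a)$ for every $\tau\in A^{core}$ containing $a$. Already $(a\apply f)(ab)=f(ab)$, which need not equal $f(a)$; only strings \emph{ending} in $a$ are pinned to $f(a)$, and it is this weaker invariant (that the value at $\rho a$ is preserved, up to renaming $\rho$, under further applications of $\apply$) that the paper isolates as Lemma~\ref{lem:mem1} and proves by induction on $t'$. Relatedly, the idempotency fact $a\apply(t'\apply(a\apply g))=t'\apply(a\apply g)$ does not ``follow directly from the case analysis in the definition of $a\apply$'': for atomic $t'$ it is indeed a three-way case analysis on strings, but for composite $t'$ it needs its own structural induction on $t'$ (with a case split on the reply of the central condition of $t'$), which is where most of the work in the paper's proof of the case $t\equiv a$ lies. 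With the invariant corrected to ``strings ending in $a$'' and the reduction to basic forms dropped, your argument would essentially become the paper's.
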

\begin{proof}
See Appendix~\ref{app:Proofs}.
\end{proof}

\begin{theorem}
\label{thm:3}
For $|A|>1$ an HMA that characterizes $\CP_{\mem}$ 
exists, i.e. there is an
HMA $\BA^{\mem}$ such that for all 
$t,t'\in\NT_{\SigmaHMA}$,
$\CP_{\mem}\vdash t=t'\iff \BA^{\mem}\models t=t'$.
\end{theorem}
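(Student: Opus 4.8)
The plan is to mirror the construction used for Theorems~\ref{thm:1}, \ref{thm:rp}, \ref{thm:cr} and \ref{thm:wmem}, now taking $M=\{\tr,\fa\}^{A^{core}}$ as the state set. Concretely, I would build the $\SigmaStateful$-algebra $\NS^{\mem}$ with $\NT_{\SigmaHMA}/_{=_{\mem}}$ as its set of conditional expressions and $M$ as its set of states, with $a\reply f$ and $a\apply f$ as defined just before Lemma~\ref{lem:mem2}, the state constant $c$ interpreted arbitrarily, the operations $s\lef f\rig s'$ fixed by \eqref{2S1}--\eqref{2S3}, and $\reply,\apply$ on composite conditional expressions fixed by \eqref{SPA1}--\eqref{SPA6}. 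The first task is well-definedness: one checks that $a\apply f\in M$ whenever $f\in M$ (immediate from the definition, since $a(\sigma-a)\in A^{core}$), and that $t=_{\mem}t'$ implies $t\reply f=t'\reply f$ and $t\apply f=t'\apply f$ for all $f$, i.e.\ soundness of \eqref{CPmem}. For that soundness step I would invoke Lemma~\ref{lem:mem2}: writing $y\reply f\in\{\tr,\fa\}$ and using axiom~\eqref{SPA7}, both sides of \eqref{CPmem} reduce, after pushing $\reply$ and $\apply$ through with \eqref{SPA1}--\eqref{SPA6}, to expressions in which the inner occurrence of $y$ is applied/replied in the state $u\apply(y\apply f)$; then $y\reply\bigl(u\apply(y\apply f)\bigr)=y\reply f$ and $y\apply\bigl(u\apply(y\apply f)\bigr)=u\apply(y\apply f)$ from \eqref{eq:prof} collapse the nested conditional exactly as \eqref{CPmem} demands, with the trivial case handled by two-valuedness.

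Axiom~\eqref{SPA7} holds by construction of $M$, so the remaining work is \eqref{SPA8}: for all $t,t'\in\NT_{\SigmaHMA}$,
\[
\forall f\in M\,(t\reply f=t'\reply f\wedge t\apply f=t'\apply f)\rightarrow t=_{\mem}t'.
\]
I would prove this by contraposition exactly as in the earlier theorems, using Lemma~\ref{lem:mem} to assume $t$ and $t'$ are \mem-basic forms and inducting on the complexity of $t$. The base cases $t\equiv\tr$ and $t\equiv\fa$ against $t'\equiv t_1\lef a\rig t_2$ are distinguished by a state $f$ with $f(a)=\fa$ and $f(b)=\tr$ for some $b\ne a$ (available since $|A|>1$), so that $(t'\apply f)(b)=f(a(\sigma-a)\ldots)=\fa$ while $(t\apply f)(b)=\tr$. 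In the inductive case $t\equiv t_1\lef a\rig t_2$, the subcase $t'\equiv t_3\lef b\rig t_4$ with $a\ne b$ is handled by a state separating on whether the first evaluated atom is $a$ or $b$; the subcase $t'\equiv t_3\lef a\rig t_4$ with (say) $t_1\ne_{\mem}t_3$ uses induction to get a distinguishing $f$ for $t_1,t_3$ and then lifts it to some $g$ with $a\apply g=f$, $a\reply g=\tr$. Here I expect the genuine subtlety: because \mem-basic forms over $A'$ never reuse a central condition, $a$ does not occur at all inside $t_1$ or $t_3$, so the value of $f$ on strings is only ever queried on strings not containing $a$; one must check that choosing $g(a\rho)=f(\rho)$ (and $g$ arbitrary elsewhere in a way consistent with $A^{core}$) indeed both lands in $M$ and is not constrained in a way that spoils the distinction --- this is the analogue of the careful argument in the \wmem\ proof, and it is the main obstacle.

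Once \eqref{SPA8} is verified, $\NS^{\mem}$ is a SPA; I then set $\BA^{\mem}$ to be its $\SigmaHMA$-reduct, so $\BA^{\mem}$ is by definition an HMA. The direction $\CP_{\mem}\vdash t=t'\Rightarrow\BA^{\mem}\models t=t'$ is the soundness of \eqref{CPmem} established above (together with soundness of \CP, which holds in every HMA), and the direction $\BA^{\mem}\models t=t'\Rightarrow\CP_{\mem}\vdash t=t'$ is precisely the content of \eqref{SPA8}. I would close, as in the earlier theorems, by remarking that $\BA^{\mem}\cong I(\SigmaHMA,\CP_{\mem})$, noting (as in the $|A|=1$ discussions) that this isomorphism can fail when $|A|=1$, which is why the hypothesis $|A|>1$ is imposed.
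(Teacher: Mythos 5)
Your proposal matches the paper's proof essentially step for step: the same SPA $\NS^{\mem}$ over $M=\{\tr,\fa\}^{A^{core}}$, the same appeal to Lemma~\ref{lem:mem2} (equation~\eqref{eq:prof}) for the soundness of \eqref{CPmem}, and the same contrapositive induction over \mem-basic forms for axiom~\eqref{SPA8}, including the lifting of a distinguishing state $f$ for $t_1,t_3$ to a state $g$ with $a\apply g=f$ and $a\reply g=\tr$, which is exactly the step the paper also relies on (using that $a$ cannot recur inside a \mem-basic form with central condition $a$). The only deviations are cosmetic (e.g.\ choosing $f(a)=\fa$ rather than $f(a)=\tr$ in the base case), so this is the paper's argument in all essentials.
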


\begin{proof}
Construct the $\SigmaStateful$-algebra $\NS^{\mem}$
with
$\NT_{\SigmaHMA}/_{=_{\mem}}$ as the set of
conditional expressions and the function space
$M$ as defined 
above as the set of states. Furthermore, adopt
the definitions of $a\reply f$ and $a\apply f$
given above.

Similar as in the proof of Theorem~\ref{thm:1},
the state constant $c$ is given an arbitrary interpretation, and
the axioms~\eqref{2S1}--\eqref{2S3}
define the function $s\lef f\rig s'$ in $\NS^{\mem}$.
The axioms~\eqref{SPA1}--\eqref{SPA6}
fully determine
the functions $\reply$ and $\apply$, and this is well-defined:
if $t=_{\mem}t'$ then for all $f$, $t\reply f=t'\reply f$ and
$t\apply f=t'\apply f$.
We show soundness of the axiom \eqref{CPmem}:
consider an arbitrary closed
instance $t_1\lef t_2\rig(t_3\lef t_4\rig(t_5\lef t_2\rig t_6))
= t_1\lef t_2\rig(t_3\lef t_4\rig t_6)$. 
A sufficient property to conclude for all states $f$ that
\begin{align*}
(t_1\lef t_2\rig(t_3\lef t_4\rig(t_5\lef t_2\rig t_6)))
\reply f&= (t_1\lef t_2\rig(t_3\lef t_4\rig t_6))\reply f,
\\
(t_1\lef t_2\rig(t_3\lef t_4\rig(t_5\lef t_2\rig t_6)))
\apply f&= (t_1\lef t_2\rig(t_3\lef t_4\rig t_6))\apply f
\end{align*}
is the validity of equation \eqref{eq:prof}
(read $t_2$ for $t$ and $t_4$ for $u$), which was
proved in Lemma~\ref{lem:mem2}.
The axiom~\eqref{SPA7} holds by construction of \RP.
In order to prove that $\NS^{\mem}$ is a SPA 
it remains to be shown that 
axiom~\eqref{SPA8} holds, i.e.,
for all $t,t'\in\NT_{\SigmaHMA}$,
\[\forall f(t\reply f=t'\reply f\wedge
t\apply f=t'\apply f)\rightarrow t=_{\mem}t'.\]
This follows by contraposition. We may assume that
both $t$ and $t'$ are \mem-basic forms, and we 
apply induction on the
complexity of $t$. 
Let $a,b\in A$ with $a\ne b$.
\begin{enumerate}
\item
If $t\equiv \tr$, then if $t'\equiv\fa$ 
it follows that $t\reply f
\ne t'\reply f$ for any $f$,  and if 
$t'\equiv t_1\lef a\rig t_2$ then consider 
some $f$ with $f(a)=f(b)=\tr$ and 
$f(ab)=f(a\sigma b)=\fa$ for all appropriate
$\sigma$. We find $(t\apply f)(b)=f(b)=\tr$ 
and
$(t'\apply f)(b)=(t_1\apply f)(a\sigma b)=\fa$ 
(where $\sigma$ possibly equals $\epsilon$), 
so $t'\apply f\ne t\apply f$.

\item
If $t\equiv\fa$ a similar argument applies.

\item
If $t\equiv t_1\lef a\rig t_2$, then the case 
$t'\in\{\tr,\fa\}$
can be dealt with as above. 

If $t'\equiv t_3\lef a\rig t_4$ then assume
$t\ne_\mem t'$ because $t_1\ne_\mem t_3$.
Then $a$ does not occur in any of the $t_i$, and by induction there is $f$ with
$t_1\apply f\neq t_3\apply f$ or
$t_1\reply f\neq t_3\reply f$.
Take $g$ such that $g\restriction A\setminus\{a\}=
f\restriction \setminus\{a\}$ and $a\apply g=f$ and $a\reply g=\tr$,
then $g$ distinguishes
$t_1\lef a\rig t_2$ and $t_3\lef a\rig t_4$. 
If $t_1=_{\mem}t_3$, then a similar argument applies
for $t_2\neq_{\mem}t_4$.

If $t'\equiv t_3\lef b\rig t_4$ then
$(t_1\lef a\rig t_2)\apply f\ne (t_3\lef b\rig t_4)\apply f$ 
for $f$ defined by $f(a)=f(a\sigma)=\tr$ and 
$f(b)=f(b\sigma')=\fa$ for all appropriate
$\sigma,\sigma'$ because
$((t_1\lef a\rig t_2)\apply f)(b)=(t_1\apply (a\apply f))(b)=
f(a\rho b)=\tr$ (where
$\rho$ possibly equals $\epsilon$), and
$((t_3\lef b\rig t_4)\apply f)(b)=(t_4\apply (b\apply f))(b)$
and this equals either
$f(b\rho' )=\fa$ for
some $\rho'\ne\epsilon$, or $f(b)=\fa$.
\end{enumerate} 
So $\NS^{\mem}$ is a SPA. 
Define the HMA $\BA^{\mem}$ as the $\SigmaHMA$-reduct 
of $\NS^{\mem}$. The above argument on the soundness of
the axiom \eqref{CPmem}
proves $\Longrightarrow$ as stated in the theorem,
and the
validity of axiom~\eqref{SPA8} proves $\Longleftarrow$.
Observe that
$\BA^{\mem}\cong I(\SigmaHMA,\CP_{\mem})$.
\end{proof}

\begin{remark}
If $A=\{a\}$ then $\NS^{\mem}$ as defined
above has only two states,
say $f$ and $g$ with
$f(a)=\tr$ and $g(a)=\fa$. It then easily follows that 
$\BA^{\mem}\models \tr\lef a\rig\tr=\tr$ so in that case
$\BA^{\mem}\not\cong I(\SigmaHMA,\CP_{\mem})$. 

Furthermore, if
$A=\{a,b\}$ it easily follows that $\NS^{\mem}\not\models
a\leftand b=
b\leftand a$: take $f$ such that $f(a)=f(ab)=\tr$ and
$f(b)=\fa$.
\end{remark}

If $|A|>1$, then 
it follows from the proof of Theorem~\ref{thm:3} that
for all
$t,t'\in\NT_{\SigmaHMA}$,
\begin{equation}
\label{eq:cmem}
\CP_{\mem}\vdash t=t' \iff \NS^{\mem}\models t=t'.
\end{equation}

The following corollary is related to Theorem~\ref{thm:3} 
and characterizes memorizing congruence in terms
of a quasivariety of SPAs that satisfy an extra condition.
\begin{corollary}
\label{cor:mem}
Let $|A|>1$. Let
$\NC_{\mem}$ be the class of SPAs that satisfy
for all $a\in A$ and $s\in S$,
\begin{align}
\label{eq:mem}
a\reply(x\apply( a\apply s))=a\reply s~\wedge~&
a\apply (x\apply (a\apply s))=x\apply (a\apply s).
\end{align}
(Note that with $x=\tr$ this yields the axiom scheme from 
Corollary~\ref{cor:cr} that characterizes contractive congruence.)
Then for all $t,t'\in\NT_{\SigmaHMA}$,
\[\NC_{\mem}\models t=t'\iff \CP_{\mem}\vdash t=t'.\]
\end{corollary}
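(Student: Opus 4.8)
The plan is to mirror the structure of the preceding corollaries (Corollary~\ref{cor:rp}, \ref{cor:cr}, \ref{cor:wmem}). The direction $\Longrightarrow$ is immediate: by construction $\NS^{\mem}\in\NC_{\mem}$ — one checks that the state operations of $\NS^{\mem}$ satisfy \eqref{eq:mem}, which is exactly the content of Lemma~\ref{lem:mem2} (with the closed term $x$ substituted for $t'$ and $a$ for $t$, noting that $a\reply f=f(a)\in\{\tr,\fa\}$ so the case split on $t\reply f$ collapses correctly) — and then \eqref{eq:cmem} gives $\NC_{\mem}\models t=t'\Rightarrow\NS^{\mem}\models t=t'\Rightarrow\CP_{\mem}\vdash t=t'$ for closed $t,t'$.

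For the converse it suffices, as in the earlier corollaries, to show that every SPA $\NS$ satisfying \eqref{eq:mem} is a model of the axiom \eqref{CPmem}, since then $\NC_{\mem}$-validity of a closed equation forces $\CP_{\mem}$-derivability (every SPA already models \CP). So I would fix $\NS\in\NC_{\mem}$, an arbitrary interpretation $s$ of states, and arbitrary conditional expressions (interpreting $t_1,\dots,t_6,t_2,t_4$), and compute both $(t_1\lef t_2\rig(t_3\lef t_4\rig(t_5\lef t_2\rig t_6)))\reply s$ and $\apply s$ using \eqref{SPA1}--\eqref{SPA6}, then show they equal the corresponding values of $t_1\lef t_2\rig(t_3\lef t_4\rig t_6)$. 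By two-valuedness \eqref{SPA7} it is enough to treat the two cases $t_2\reply s=\fa$ (both sides collapse via \eqref{SPA2}-style reasoning to something not mentioning the inner $t_5\lef t_2\rig t_6$ at all, or rather where $t_2$'s recurrence is irrelevant) and $t_2\reply s=\tr$. In the latter case, after unfolding the outer and middle conditionals, the remaining obligation is precisely that $t_5\lef t_2\rig t_6$ and $t_6$ have the same reply and apply-image on the state $t_4\apply(t_2\apply s)$; and by \eqref{eq:mem} instantiated at $a:=t_2$, $s:=s$, $x:=t_4$ we get $t_2\reply(t_4\apply(t_2\apply s))=t_2\reply s=\tr$, so $t_5\lef t_2\rig t_6$ evaluates its central condition to $\tr$ there and reduces to $t_5$...

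—but that would give $t_5$, not $t_6$, so in fact the second conjunct of \eqref{eq:mem}, namely $a\apply(x\apply(a\apply s))=x\apply(a\apply s)$, is the one doing the real work: it says the state reached after $t_2$, then $t_4$, then $t_2$ again equals the state after $t_2$ then $t_4$, so the inner $\lef t_2\rig$ makes no difference to the \emph{state} passed on, and combined with the reply computation one sees the whole inner subterm $t_5\lef t_2\rig t_6$ contributes, through the enclosing context $t_1\lef t_2\rig(t_3\lef t_4\rig(-))$, exactly what $t_6$ contributes. I would write this out as a short chain of equalities for $\reply$ and a parallel chain for $\apply$, exactly in the style of the computations inside the proofs of Theorem~\ref{thm:3} and Corollary~\ref{cor:wmem}. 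The soundness of the derivable symmetric variants \eqref{eq:mini}--\eqref{eq:mini2} then comes for free since they are \CP-consequences of \eqref{CPmem}, so nothing extra is needed.

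The main obstacle is purely bookkeeping: getting the nesting of $\apply$ and $\reply$ right when unfolding a depth-three conditional, and making sure that \eqref{eq:mem} is applied with the correct substitution so that the inner recurrence of $t_2$ is genuinely eliminated in \emph{both} the reply component and the state component simultaneously. There is no conceptual difficulty beyond what Lemma~\ref{lem:mem2} already encapsulates; indeed one could shorten the argument by observing that the property of $\NS$ needed is literally \eqref{eq:prof} with closed terms allowed for $t'$, which is what \eqref{eq:mem} asserts, so the proof of Theorem~\ref{thm:3}'s soundness step transfers verbatim with $\NS^{\mem}$ replaced by an arbitrary $\NS\in\NC_{\mem}$.
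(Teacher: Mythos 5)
There is a genuine gap in your converse direction. The defining condition \eqref{eq:mem} of $\NC_{\mem}$ is quantified over \emph{atoms} $a\in A$ only (with $x$ ranging over arbitrary conditional expressions), whereas in a closed instance of \eqref{CPmem} the recurring central condition $t_2$ is an arbitrary closed term. Your key step instantiates \eqref{eq:mem} at $a:=t_2$, which is not a legal instantiation: when $t_2$ is compound, nothing in the hypothesis directly gives $t_2\reply(t_4\apply(t_2\apply s))=t_2\reply s$ or $t_2\apply(t_4\apply(t_2\apply s))=t_4\apply(t_2\apply s)$. The missing idea --- and the actual content of the paper's proof --- is to lift the atomic hypothesis to all closed terms: one shows by structural induction on $t$ that every $\NS\in\NC_{\mem}$ satisfies
\[t\reply(t'\apply(t\apply s))=t\reply s\quad\text{and}\quad t\apply(t'\apply(t\apply s))=t'\apply(t\apply s),\]
i.e., property \eqref{eq:pro}. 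The cases $t\in\{\tr,\fa,a\}$ are immediate (the atom case being exactly \eqref{eq:mem}), and the case $t\equiv t_1\lef t_2\rig t_3$ needs a case split on $t_2\reply s$ together with several applications of the induction hypothesis, using identities such as $t'\apply(t_1\apply(t_2\apply s))=(t'\lef t_1\rig t')\apply(t_2\apply s)$ to bring the terms into a shape where the IH applies. Without this induction the argument does not go through. Your closing remark that the soundness step of Theorem~\ref{thm:3} ``transfers verbatim'' is also not quite right: that step rests on Lemma~\ref{lem:mem2}, whose atomic base case is proved using the concrete representation of states as functions on $A^{core}$ (via Lemma~\ref{lem:mem1}) and hence does not transfer to an abstract $\NS\in\NC_{\mem}$; only the inductive step transfers, with the atomic case now supplied by \eqref{eq:mem}.

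A secondary point: your case analysis is inverted. Since $x\lef\fa\rig y=y$, the subterm $t_5\lef t_2\rig t_6$ sits in the branch of the outer conditional that is selected when $t_2\reply s=\fa$, so that is the case where the recurrence of $t_2$ must be eliminated (and where both conjuncts of \eqref{eq:pro} are needed: the first to make the inner conditional select $t_6$, the second to leave the state unchanged); when $t_2\reply s=\tr$ the outer conditional selects $t_1$ and the recurrence is irrelevant. The $\Longrightarrow$ direction of your proposal is fine and matches the paper.
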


\begin{proof}
By its definition we find that $\NS^{\mem}\in \NC_{\mem}$, 
which by \eqref{eq:cmem} implies $\Longrightarrow$.
For the converse, it is sufficient to show that
the axiom
\eqref{CPmem} holds in each SPA in $\NC_{\mem}$. 
Let such $\NS$ be given.
Consider an arbitrary closed
instance $t_1\lef t_2\rig(t_3\lef t_4\rig(t_5\lef t_2\rig t_6))
= t_1\lef t_2\rig(t_3\lef t_4\rig t_6)$. 
A sufficient property to conclude that
\begin{align*}
(t_1\lef t_2\rig(t_3\lef t_4\rig(t_5\lef t_2\rig t_6)))
\reply s&= (t_1\lef t_2\rig(t_3\lef t_4\rig t_6))\reply s,
\\
(t_1\lef t_2\rig(t_3\lef t_4\rig(t_5\lef t_2\rig t_6)))
\apply s&= (t_1\lef t_2\rig(t_3\lef t_4\rig t_6))\apply s
\end{align*}
is the following (read $t_2$ for $t$ and $t_4$ for $t'$): 
\begin{equation}
\label{eq:pro}
t\reply(t'\apply(t \apply s))=t\reply s
~\wedge~
t\apply (t'\apply (t\apply s))=t'\apply (t\apply s).
\end{equation}
We prove this property by structural induction on $t$. 
If $t\equiv\tr$ or
$t\equiv\fa$ or $t\equiv a\in A$ then \eqref{eq:pro} follows immediately.
If $t \equiv t_1\lef t_2\rig t_3$ we make
a case distinction:
\\[2mm]
$(i)$ Assume for some interpretation of $s$ in \NS,
$t_2\reply s=\tr$.
We derive
$t_2\reply(t'\apply(t_1\apply (t_2\apply s)))=
t_2\reply((t'\lef t_1\rig t')\apply (t_2\apply s))$ and by 
the induction hypothesis (IH) we find
$t_2\reply((t'\lef t_1\rig t')\apply (t_2\apply s))=
t_2\reply s=\tr$. We further derive
\begin{align*}
t\reply(t'\apply(t \apply s))&=t\reply(t'\apply(t_1\apply (t_2\apply s)))
\\
&=(t_1\lef t_2\rig t_3)\reply(t'\apply(t_1\apply (t_2\apply s)))\\
&=t_1\reply(t_2\apply(t'\apply(t_1\apply (t_2\apply s))))
\\
&=t_1\reply((t_2\lef t'\rig t_2)\apply(t_1\apply (t_2\apply s)))
\\
&=t_1\reply(t_2\apply s)&&\text{(by IH)}\\
&=t\reply s,
\end{align*}
and
\begin{align*}
t\apply(t'\apply(t \apply s))&=t\apply(t'\apply(t_1\apply (t_2\apply s)))
\\
&=(t_1\lef t_2\rig t_3)\apply(t'\apply(t_1\apply (t_2\apply s)))\\
&=t_1\apply(t_2\apply(t'\apply(t_1\apply (t_2\apply s))))
\\
&=t_1\apply((t_2\lef t'\rig t_2)\apply(t_1\apply (t_2\apply s)))
\\
&=(t_2\lef t'\rig t_2)\apply(t_1\apply (t_2\apply s))
&&\text{(by IH)}\\
&=t_2\apply( t'\apply(t_1\apply (t_2\apply s)))\\
&=t_2\apply( (t'\lef t_1\rig t')\apply (t_2\apply s))\\
&=(t'\lef t_1\rig t')\apply(t_2\apply s)&&\text{(by IH)}\\
&=t'\apply(t_1\apply(t_2\apply s))\\
&=t'\apply(t\apply s).
\end{align*}
$(ii)$ Assume for some interpretation of $s$ in \NS,
$t_2\reply s=\fa$. Similar.
\end{proof}

\section{Static congruence (Propositional logic)}
\label{sec:stat}
In this section we consider \emph{static congruence}
defined by the axioms of $\CP$ and the axioms
\begin{align*}
\label{CPstat}\tag{CPstat} \qquad
(x\lef y\rig z)\lef u\rig v&=
(x\lef u\rig v)\lef y\rig(z\lef u\rig v),\\
\label{CPcontr}\tag{CPcontr} \qquad
(x\lef y\rig z)\lef y\rig u&= x \lef y\rig u.
\end{align*}
We write $\CP_{\stat}$ for this set of axioms.
Note that the symmetric variants of 
the axioms~\eqref{CPstat} and \eqref{CPcontr}, say
\begin{align*}
\label{CPstat'}\tag{CPstat$'$} \qquad
x\lef y\rig (z\lef u\rig v)&=
(x\lef y\rig z)\lef u\rig(x\lef y\rig v),\\
\label{CPcontr'}\tag{CPcontr$'$} \qquad
x\lef y\rig(z\lef y\rig u) &= x \lef y\rig u,
\end{align*}
easily follow with identity  
$y\lef x\rig z=(z\lef\fa\rig y)\lef x\rig (z\lef\tr\rig y)=
z\lef (\fa\lef x\rig\tr)\rig y$
(thus an
identity derivable in \CP). 
Moreover, in $\CP_{\stat}$ it follows that
\begin{align*}
x
&=(x\lef y\rig z)\lef \fa\rig x\\
&=(x\lef\fa\rig x)\lef y\rig (z\lef \fa\rig x)\\
&=x\lef y\rig x
&&\text{(cf.\ equation~\eqref{X}).}
\end{align*}

We define \emph{static congruence} $=_{\stat}$ on $\NT_{\SigmaHMA}$ 
as the congruence generated by $\CP_{\stat}$.
Let $t,t'\in\NT_{\SigmaHMA}$.
Then under static congruence, $t$ and $t'$
can be rewritten into the following
special type of basic form: assume the atoms 
occurring in $t$ and $t'$ are $a_1,...,a_n$, and
consider the full binary tree with at level 
$i$ only occurrences of atom $a_i$ (there are $2^{i-1}$ such
occurrences), 
and at level $n+1$ only leaves 
that are either $\tr$ or $\fa$
(there are $2^n$ such leaves). 
For example, for $n=2$ we find
the $2^4$ different terms
\[(\tr/\fa\lef a_2\rig \tr/\fa)\lef a_1\rig (\tr/\fa\lef a_2\rig \tr/\fa).\]
Then the axioms in
$\CP_{\stat}$
are sufficient to rewrite both $t$ and $t'$ into
exactly one such special basic form. 

\begin{theorem}
\label{thm:2}
There exists an HMA that characterizes propositional logic, i.e. there is an
HMA $\BA^{\stat}$ such that for all 
$t,t'\in\NT_{\SigmaHMA}$,
$\CP_{\stat}\vdash t=t'\iff \BA^{\stat}\models t=t'$.
\end{theorem}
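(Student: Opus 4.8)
The plan is to follow exactly the pattern of the proofs of Theorems~\ref{thm:1}, \ref{thm:rp}, \ref{thm:cr}, \ref{thm:wmem} and~\ref{thm:3}: construct a SPA $\NS^{\stat}$ whose set of conditional expressions is $\NT_{\SigmaHMA}/_{=_{\stat}}$ and whose set of states is the function space of propositional valuations $\{\tr,\fa\}^{A}$ (no valuation history is recorded). For $f\in\{\tr,\fa\}^{A}$ and $a\in A$ I put $a\reply f=f(a)$ and $a\apply f=f$; the state constant $c$ gets an arbitrary interpretation, $s\lef f\rig s'$ is fixed by \eqref{2S1}--\eqref{2S3}, and $\reply$ and $\apply$ are then determined on compound conditional expressions by \eqref{SPA1}--\eqref{SPA6}. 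An easy induction using \eqref{SPA4}--\eqref{SPA6} together with the derivable identities $s\lef\tr\rig s=s$ and $s\lef\fa\rig s=s$ gives $t\apply f=f$ for every $t\in\NT_{\SigmaHMA}$, and then \eqref{SPA3} (noting that on the two classes $\tr$ and $\fa$ conditional composition is exactly the Boolean if-then-else, by \eqref{CP1} and \eqref{CP2}) shows that $t\reply f$ is simply the ordinary propositional value of $t$ under the valuation $f$.

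Next I would discharge the remaining SPA requirements. Well-definedness of $\reply$ and $\apply$ on $=_{\stat}$-classes: the $\apply$-part is immediate since $\apply$ is trivial, and for the $\reply$-part it suffices to check that every instance of \eqref{CP1}--\eqref{CP4}, \eqref{CPstat} and \eqref{CPcontr} preserves $\reply$-values; for \eqref{CP1}--\eqref{CP4} this is as in Theorem~\ref{thm:1}, and for \eqref{CPstat}, \eqref{CPcontr} it is a direct computation via \eqref{SPA3}, these being plain Boolean validities. Axiom \eqref{SPA7} holds by the choice $S'=\{\tr,\fa\}^{A}$. For \eqref{SPA8} I argue by contraposition: assume $t\neq_{\stat}t'$; since $\apply$ is trivial I must exhibit an $f$ with $t\reply f\neq t'\reply f$. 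Using the normalization sketched just before the theorem --- the identity $x=x\lef y\rig x$, derivable in $\CP_{\stat}$, to insert the atoms occurring in only one of $t,t'$, and \eqref{CPstat}, \eqref{CPcontr} and their symmetric variants \eqref{CPstat'}, \eqref{CPcontr'} to bring all occurrences of each atom to a common level --- I may assume $t$ and $t'$ are the ``special'' basic forms over a single fixed list $a_1,\dots,a_n$ (the full binary tree with $a_i$ at level $i$ and a $\tr/\fa$-leaf at each node of level $n+1$). If the two special basic forms agreed leaf-for-leaf they would be syntactically identical and hence $t=_{\stat}t'$; so they differ in some leaf, and the valuation $f$ following the corresponding root-to-leaf branch (choosing $f(a_i)=\tr$ at a left turn and $f(a_i)=\fa$ at a right turn) satisfies $t\reply f\neq t'\reply f$, as required.

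This shows $\NS^{\stat}$ is a SPA; defining $\BA^{\stat}$ as its $\SigmaHMA$-reduct, soundness of $\CP_{\stat}$ (the direction $\Longrightarrow$) follows from the well-definedness check, and completeness ($\Longleftarrow$) is precisely \eqref{SPA8}. I also expect $\BA^{\stat}\cong I(\SigmaHMA,\CP_{\stat})$, this time with no restriction on $|A|$, since already for a single atom $\{\tr,\fa\}^{A}$ has two elements and $\tr\lef a\rig\tr=\tr$ is derivable in $\CP_{\stat}$ (via $x=x\lef y\rig x$). The main obstacle is the normalization step inside the verification of \eqref{SPA8}: one has to be sure that $\CP_{\stat}$ genuinely allows both (i) padding a basic form with any missing atom at the appropriate levels and (ii) permuting the atom-levels into the chosen order, so that every closed term provably reduces to the \emph{unique} special basic form determined by its propositional truth table. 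This is where \eqref{CPstat}, \eqref{CPcontr}, the derived identity $x=x\lef y\rig x$ and the symmetric laws do the real work, and it is worth writing out carefully rather than merely invoking the paragraph preceding the theorem.
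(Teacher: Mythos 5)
Your proposal is correct and follows essentially the same route as the paper's proof: the same SPA $\NS^{\stat}$ on the state space $\{\tr,\fa\}^{A}$ with $a\apply f=f$, the same well-definedness and \eqref{SPA7} checks, and the same contraposition argument for \eqref{SPA8} via reduction to the unique ``full binary tree'' basic form and a leaf-distinguishing valuation. Your extra observations (the induction giving $t\apply f=f$ for all $t$, the absence of any $|A|>1$ restriction, and the flag that the normalization to the special basic form is the step deserving a written-out proof) are all accurate refinements of what the paper states more tersely.
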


\begin{proof}
Construct the $\SigmaStateful$-algebra $\NS^{\stat}$
with
$\NT_{\SigmaHMA}/_{=_{\stat}}$ as the set of
conditional expressions and the function space
$\{\tr,\fa\}^{A}$ as the set of states. For 
each state $f$
and atom $a\in A$ define 
$a\reply f=f(a)$ and $a\apply f=f$.
Similar as in the proof of Theorem~\ref{thm:1},
the state constant $c$ is given an arbitrary interpretation, and
the axioms~\eqref{2S1}--\eqref{2S3}
define the function $s\lef f\rig s'$ in $\NS^{\stat}$.
The axioms~\eqref{SPA1}--\eqref{SPA6}
fully determine
the functions $\reply$ and $\apply$, and this is well-defined:
if $t=_{\stat}t'$ then for all $f$, $t\reply f=t'\reply f$
and
$t\apply f=t'\apply f$ follow by inspection of the 
$\CP_{\stat}$
axioms.
The axiom~\eqref{SPA7} holds by construction of \RP.
In order to prove that $\NS^{\stat}$ is a SPA 
it remains to be shown that 
axiom~\eqref{SPA8} holds, i.e.,
for all $t,t'\in\NT_{\SigmaHMA}$,
\[\forall f(t\reply f=t'\reply f\wedge
t\apply f=t'\apply f)\rightarrow t=_{\stat}t'.\]
This follows by contraposition. We may assume that
both $t$ and $t'$
are in the basic form described above: if $t$ and $t'$ are
different in some leaf then the reply function $f$
leading to this leaf satisfies $t\reply f\neq t'\reply f$.

So $\NS^{\stat}$ is a SPA. 
Define the HMA $\BA^{\stat}$ as the $\SigmaHMA$-reduct 
of $\NS^{\stat}$. The above argument on the soundness of
the axioms~\eqref{CPstat} and \eqref{CPcontr}
proves $\Longrightarrow$ as stated in the theorem,
and the
validity of axiom~\eqref{SPA8} proves $\Longleftarrow$.
Moreover, $\BA^{\stat}\cong I(\SigmaHMA,\CP_{\stat})$.
\end{proof}

From the proof above it follows that for all
$t,t'\in\NT_{\SigmaHMA}$,
\begin{equation}
\label{eq:cc}
\CP_{\stat}\vdash t=t' \iff \NS^{\stat}\models t=t'.
\end{equation}

\begin{corollary}
\label{cor:stat}
Let $\NC_{\stat}$ be the class of SPAs that satisfy
for all $a\in A$ and $s\in S$,
\[a\apply s= s.\]
Then for all $t,t'\in\NT_{\SigmaHMA}$,
\[\NC_{\stat}\models t=t'\iff \CP_{\stat}\vdash t=t'.\]
\end{corollary}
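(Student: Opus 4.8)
The plan is to proceed exactly along the lines of Corollaries~\ref{cor:rp}--\ref{cor:mem}. One direction is immediate: by construction $\NS^{\stat}\in\NC_{\stat}$ (there $a\apply f=f$ for every atom $a$), so if $\NC_{\stat}\models t=t'$ then $\NS^{\stat}\models t=t'$, and \eqref{eq:cc} gives $\CP_{\stat}\vdash t=t'$. For the reverse implication it suffices, as in the proof of Corollary~\ref{cor:mem}, to verify that every closed instance of the extra axioms \eqref{CPstat} and \eqref{CPcontr} holds in an arbitrary SPA $\NS\in\NC_{\stat}$ (the closed instances of the \CP-axioms hold since $\NS$ is a proposition algebra).

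The key preparatory step I would isolate is the following auxiliary fact: \emph{in every $\NS\in\NC_{\stat}$ one has $t\apply s=s$ for each closed conditional expression $t$ and each state $s$}. I would prove this by induction on the structure of $t$. The cases $t\equiv\tr$ and $t\equiv\fa$ are \eqref{SPA4} and \eqref{SPA5}; the case $t\equiv a$ is precisely the defining axiom of $\NC_{\stat}$; and for $t\equiv t_1\lef t_2\rig t_3$ one expands $(t_1\lef t_2\rig t_3)\apply s$ by \eqref{SPA6}, applies the induction hypothesis to $t_1,t_2,t_3$ to reduce it to $s\lef t_2\reply s\rig s$, and concludes by \eqref{SPA7} together with \eqref{2S1}/\eqref{2S2}. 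The hard part is really just recognising the need for this lemma: unlike in Corollary~\ref{cor:rp}, where the central conditions of the governing axioms are atoms and the quasivariety hypothesis applies verbatim, the central conditions $y$ and $u$ in \eqref{CPstat} and \eqref{CPcontr} are term variables, so the hypothesis ``$a\apply s=s$ for atoms'' must first be bootstrapped to arbitrary closed conditional expressions, which is exactly what the lemma does.

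Granting the lemma, I would derive \eqref{CPcontr} and \eqref{CPstat} via \eqref{SPA8}, i.e.\ by checking that the $\reply$- and $\apply$-images of the two sides agree at every state $s$. For a closed instance of \eqref{CPcontr}, expanding $(x\lef y\rig z)\lef y\rig u$ with \eqref{SPA3} resp.\ \eqref{SPA6} and repeatedly using $y\apply s=s$, a case split on the two-valued $y\reply s$ makes both sides collapse --- by \eqref{CP1}/\eqref{CP2} for the $\reply$-part and \eqref{2S1}/\eqref{2S2} for the $\apply$-part --- to the $x$-component when $y\reply s=\tr$ and to the $u$-component when $y\reply s=\fa$, matching $x\lef y\rig u$. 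For a closed instance of \eqref{CPstat} the same expansions together with $y\apply s=u\apply s=s$ reduce the required identity to one of the shape $(p\lef q\rig r)\lef q'\rig r'=(p\lef q'\rig r')\lef q\rig(r\lef q'\rig r')$ in the two-valued replies $x\reply s,\dots,v\reply s$ (resp.\ in the states $x\apply s,z\apply s,v\apply s$) with $q\in\{\tr,\fa\}$, which is settled by casing on $q$ using \eqref{CP1}/\eqref{CP2} (resp.\ \eqref{2S1}/\eqref{2S2}). With both axioms verified, soundness of $\CP_{\stat}$ over $\NC_{\stat}$ follows, and with it the reverse implication.
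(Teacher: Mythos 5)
Your proposal is correct and follows essentially the same route as the paper: the easy direction via $\NS^{\stat}\in\NC_{\stat}$ and \eqref{eq:cc}, and the converse by first establishing the auxiliary identity $t\apply s=s$ for all closed $t$ by structural induction (the paper proves exactly this lemma), then verifying closed instances of \eqref{CPstat} and \eqref{CPcontr} by case distinction on the two-valued replies of the central conditions. Your remark on why the bootstrap from atoms to arbitrary closed terms is needed is a fair gloss on what the paper does implicitly.
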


\begin{proof}
By its definition, $\NS^{\stat}\in \NC_{\stat}$, 
which by \eqref{eq:cc} implies $\Longrightarrow$.
For the converse, it is sufficient to show that
the axioms
\eqref{CPstat} and \eqref{CPcontr} hold in each
SPA in $\NC_{\stat}$. 
We first prove by structural induction on 
$t\in\NT_{\SigmaHMA}$ the $\NC_{\stat}$-identity
\[t\apply s=s.\]
If $t\in\{\tr,\fa,a\mid a\in A\}$ this is clear, and if 
$t\equiv t_1\lef t_2\rig t_3$ then
\begin{align*}
t\apply s&=(t_1\lef t_2\rig t_3)\apply s\\
&=(t_1\apply(t_2\apply s))\lef t_2\reply s\rig
(t_3\apply(t_2\apply s))\\
&=(t_1\apply s)\lef t_2\reply s\rig
(t_3\apply s)&&\text{(by IH)}\\
&=s\lef t_2\reply s\rig s&&\text{(by IH)}\\
&=s.
\end{align*}
With the identity $t\apply s=s$ the soundness of the axioms
\eqref{CPstat} and \eqref{CPcontr} follows easily:
let $\NS\in\NC_{\stat}$ be given.
Consider a closed instance of~\eqref{CPstat}:
\[(t_1\lef t_2\rig t_3)\lef t_4\rig t_5=(t_1\lef t_4\rig t_5)
\lef t_2\rig (t_3\lef t_4\rig t_5).\]
Then for all states $s$,
both the left-hand side and the right-hand side 
transform $s$ under $\apply$ to $s$, so
\[((t_1\lef t_2\rig t_3)\lef t_4\rig t_5)\reply s=
(t_1\reply s\lef
t_2\reply s\rig t_3\reply s)
\lef t_4\reply s\rig t_5\reply s\]
and
\[((t_1\lef t_4\rig t_5)\lef t_2\rig (t_3\lef t_4\rig t_5))\reply s
=(t_1\reply s\lef t_4\reply s\rig t_5\reply s)
\lef t_2\reply s\rig (t_3\reply s\lef t_4\reply s\rig t_5\reply s).
\]
By case distinction on the reply values of $t_4$ and $t_2$ in \NS,
it easily follows that both these instances
yield equal values. 
The soundness of axiom \eqref{CPcontr} can be proved
in the same way.
\end{proof}

\section{Conclusions and related work}
\label{sec:Con}
A main result in our defining paper on proposition
algebra~\cite{BP10} concerns its semantics: in
that paper we define
valuation algebras (VAs) as two-sorted algebras with the
Boolean constants and valuations as their sorts. Using 
these, valuation varieties (varieties of VAs) are defined by
equational specifications. For example, the free variety
\fr\ contains all VAs, and the variety \rp\ of 
repetition-proof VAs is the subvariety of VAs that
satisfy the axiom (in the notation of this paper)
\[a\reply (a\apply s)=a\reply s\]
(cf.\ Corollary~\ref{cor:rp}). A valuation variety 
defines a valuation equivalence by identifying all
propositional statements that yield the same evaluation
result in all VAs in that variety. 
For example, $\tr$ and $\tr\lef a\rig\tr$ are
valuation equivalent in all valuation varieties we consider.
For
\[K\in\{\fr,\rp,\con,\wmem,\mem,\stat\},\]
such a valuation equivalence is denoted by $\equiv_K$,
and |
overloading notation here | the valuation
congruence $=_K$ is defined as
the largest congruence contained in $\equiv_K$.
We prove that for all $t,t'\in\NT_{\SigmaHMA}$,
\[\CP_K\vdash t=t'\iff t=_K t',\]
where $\CP_\fr$ denotes the axiom set \CP.\footnote{In
  this paper we use the notation $t=_Kt'$ as a 
  shorthand for $\CP_K\vdash t=t'$, but according
  to the above-mentioned result this overloading
  is not a problem. }
  
In this paper we provide an alternative semantics
for proposition algebra in the form of HMAs, which
has the advantage that we can define a valuation congruence
without first defining some 
valuation equivalence it is contained in. 
Our HMA-semantics provides by construction a valuation
congruence and the relation between evaluation of propositions
and transformation of valuations appears to be more 
elegant. A typical difference between semantics based on
VAs and our semantics based on HMAs is that the apply
operator $\apply$ in the latter is defined
on a more general level. We see this difference
if we compare the definition of the variety of VAs that
defines static congruence with the quasivariety of SPAs
that characterizes $\CP_\stat$:
in the former, the crucial axiom on valuations
reads as follows: for all atoms $a,b\in A$ and
valuations $s$,
\[a\reply(b\apply s)=a\reply s,\]
while according to Corollary~\ref{cor:stat}, HMA-semantics 
requires in the case of static congruence that
for all atoms $a\in A$ and
valuations $s$,
\[a\apply s=s.\footnote{This is the case
because if $\CP_\stat\vdash t=t'$,
then for each
SPA that characterizes static valuation congruence
it should be the case that
both $t\reply s=t'\reply s$ and 
$t\apply s =t'\apply s$ hold 
(and therefore $t=t'$ by 
axiom~\eqref{SPA8}); now 
observe that $t\apply s =t'\apply s$
does not follow from \CPS\ extended
with the weaker requirement
$a\reply(b\apply s)=a\reply s$ 
(for all atoms $a,b\in A$ and
valuations $s$).
}\]
A question related to the difference
between VA-based and HMA-based semantics
is to either prove or refute that the class
$\NC_{\mem}$ (see Corollary~\ref{cor:mem})
is definable by
weakening requirement~\eqref{eq:mem} on its SPAs to 
this one:
for all $a,b\in A$ and $s\in S$,
\begin{align}
\label{eq:memcr}
a\reply (a\apply s)= a\reply s~\wedge~&
a\apply (a\apply s)= a\apply s,\\
\label{eq:mem2}
a\reply(b\apply( a\apply s))=a\reply s~\wedge~&
a\apply (b\apply (a\apply s))=b\apply (a\apply s),
\end{align}
because equations~\eqref{eq:memcr} and \eqref{eq:mem2} 
exactly capture the
variety of VAs that characterizes memorizing
valuation congruence (cf.\ \cite{BP10}).
Last but not least, a semantics for proposition algebra
based on HMAs refutes
axiomatizations such as the one defined by 
$\CP + \langle\, \tr\lef x\rig \tr=\tr\,\rangle$,
which indeed is a 
peculiar axiomatization if one analyzes it in 
terms of TRSs (see Theorem~\ref{thm:oh} in 
Section~\ref{sec:2}).

Further results from~\cite{BP10} 
concern binary connectives: we prove that the
conditional connective cannot be expressed modulo 
$=_\con$ (or any finer congruence) if only binary 
connectives are allowed, but that it can be expressed
modulo $=_\mem$ (and $=_\stat$); for $=_\wmem$ we leave this question
open.
In the papers~\cite{BP10,BP10a}
we use the notation $\leftand$ (taken from~\cite{BBR95})
for left-sequential conjunction, defined by
\[x\leftand y=y\lef x\rig\fa,\]
and elaborate on the connection between sequential
binary connectives, the conditional and negation, defined by
\[\neg x=\fa\lef x\rig\tr.\] 
In~\cite{BP10a} we define various
\emph{short-circuit logics}: the fragments of proposition
algebra that remain if only $\leftand$ and $\neg$ can
be used. These logics (various choices can be made) are
put forward for modeling conditions as used in programming.
Typical laws that are valid with respect to 
each valuation congruence
are the associativity of $\leftand$, the double negation
shift, and $\fa\leftand x=\fa$ (and, as explained in the
Introduction, a typical non-validity 
is  $x\leftand \fa=\fa$).

\appendix
\section{Some proofs}
\label{app:Proofs}
\begin{la}[This is Lemma~\ref{lem:wm}, Section~\ref{sec:wmem}]
For each $t\in\NT_{\SigmaHMA}$ there exists a \wmem-basic form
$t'$ with $\CP_{\wmem}\vdash t=t'$.
\end{la}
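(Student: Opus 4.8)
The plan is to reduce an arbitrary closed term to a \wmem-basic form in two stages: first rewrite to an ordinary basic form (Lemma~\ref{lem:nf}, which uses only \CP\ and hence a fortiori $\CP_{\wmem}$), then show how to massage a basic form into a \wmem-basic form using the additional axiom schemes \eqref{CPwm1} and \eqref{CPwm2}. The heart of the matter is therefore an auxiliary claim: given \wmem-basic forms $t_1,t_2$, the conditional composition $t_1\lef a\rig t_2$ can be proved equal in $\CP_{\wmem}$ to a \wmem-basic form. Granting this claim, the lemma follows by a straightforward structural induction on $t$ in basic form: apply the induction hypothesis to obtain \wmem-basic forms for the two immediate subterms, then invoke the claim for the top-level composition.

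To prove the auxiliary claim I would induct on the combined size of $t_1$ and $t_2$. The obstruction to $t_1\lef a\rig t_2$ being \wmem-basic is precisely a violation of the restriction $a\notin pos(t_1)\cup neg(t_2)$: an occurrence of $a$ at a positive position inside $t_1$, or at a negative position inside $t_2$. Consider the leftmost/outermost offending occurrence in $t_1$; since $t_1$ is already \wmem-basic, such an occurrence sits in a context of the shape $(\cdots(t\lef a\rig t')\cdots)$ reached from the root of $t_1$ by a chain of left-descents through central conditions $b_1,\dots,b_k$ all distinct from $a$ (and, because $t_1$ is \wmem-basic, distinct along the chain in the appropriate sense). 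Using \eqref{CPwm1} — possibly its iterated form, illustrated in the displayed calculation just before Definition~\ref{def:wmembf} with the ``intermediate'' atoms $b,c,\dots$ — one rewrites away this inner $a$, strictly decreasing the relevant measure (the number of offending $a$-occurrences, or the total size). The mirror case, an $a$ at a negative position of $t_2$, is handled symmetrically by \eqref{CPwm2}. Since \eqref{CPcr1} and \eqref{CPcr2} are included in $\CP_{\wmem}$ (indeed $\CP_{\wmem}$ extends $\CP_\con$), the degenerate case where an intermediate atom coincides with $a$ is already subsumed, as remarked after the statement of \eqref{CPwm1}–\eqref{CPwm2}.

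The step I expect to be the main obstacle is making the induction/termination argument airtight: one must choose a well-founded measure on basic forms that simultaneously decreases under each application of \eqref{CPwm1}, \eqref{CPwm2} and under the rewriting needed to expose the intermediate-atom pattern, while being careful that repairing one offending occurrence does not create new ones deeper down. A convenient choice is a lexicographic measure with first component the number of pairs (positive $a$-occurrence in a subterm with central condition $a$ above it, resp. the negative analogue) and second component the size of the term; one checks that each rewrite either removes such a pair outright or strictly shrinks the term without increasing the first component. Because this bookkeeping is essentially the same as in the corresponding lemma of~\cite{BP10}, and to keep the present paper self-contained, I would simply reproduce that proof here rather than re-deriving the measure from scratch; hence the statement of the appendix lemma with a pointer back to Section~\ref{sec:wmem}.
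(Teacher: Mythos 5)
Your proposal matches the paper's proof in all essentials: reduce to a basic form via Lemma~\ref{lem:nf}, proceed by structural induction, and at a top-level composition $t_1\lef a\rig t_2$ eliminate the offending occurrence of $a$ on the positive spine of $t_1$ by (iterated) \eqref{CPwm1} together with \eqref{CPcr1}, dually for $neg(t_2)$ using \eqref{CPwm2}, ultimately deferring the bookkeeping to the proof reproduced from~\cite{BP10}. The paper packages the iteration as a single ``saturation'' step (replacing each intermediate central condition $b$ by $a\lef b\rig\fa$ before contracting and retracting), and your termination worry is in fact moot, since a \wmem-basic $t_1$ can carry at most one $a$ on its positive spine.
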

\begin{proof}
By Lemma~\ref{lem:nf} we may assume that $t$ is a 
basic form and we proceed by structural induction on $t$. 
If $t\equiv \tr$ or $t\equiv \fa$ there is nothing to prove. 
If $t\equiv t_1\lef a\rig t_2$ we may assume that $t_i$ are
\wmem-basic forms (if not, they can proved equal to \wmem-basic forms). 
We first consider the positive side of $t$. If $a\not\in
pos(t_1)$ we are done, otherwise we saturate
$t_1$ by replacing each atom $b\ne a$ that occurs in a positive
position with $(a\lef b\rig\fa)$ using axiom \eqref{CPwm1}.
In this way we can retract 
each $a$ that is in $pos(t_1)$
(also using axiom \eqref{CPcr1}) and end up with $t_1'$ that does 
not contain $a$ on positive positions. For example,
\begin{align*}
t\equiv&(((\tr\lef a\rig R)\lef b\rig S)\lef c\rig V)\lef a \rig t_2\\
&~=(((\tr\lef a\rig R)\lef (a\lef b\rig\fa)\rig S)\lef
 (a\lef c\rig\fa)\rig V)\lef a \rig t_2\\
&~=(((((\tr\lef a\rig R)\lef a\rig S)\lef b\rig S)\lef
 a\rig V)\lef c\rig V)\lef a\rig t_2\\
&~=(((\tr\lef b\rig S)\lef a\rig V)\lef c\rig V)\lef a\rig t_2\\
&~=((\tr\lef b\rig S)\lef c\rig V)\lef a \rig t_2.
\end{align*}

Following the same procedure
for the negative side of $t$ (saturation with ($\tr\lef b\rig a)$ 
for all $b\ne a$ etc.) yields a \wmem-basic form
$t_1'\lef a\rig t_2'$ with $\CP_{\wmem}\vdash t=t_1'\lef a\rig t_2'$.
\end{proof}

\begin{la}[This is Lemma~\ref{lem:mem}, Section~\ref{sec:mem}]
For each $t\in\NT_{\SigmaHMA}$ there exists a \mem-basic form
$t'$ with $\CP_{\mem}\vdash t=t'$.
\end{la}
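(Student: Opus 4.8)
The plan is to mimic the proof of Lemma~\ref{lem:wm} (the \wmem-case just recalled from the appendix) but with a crucial simplification: under $\CP_\mem$ we have the full contraction axiom~\eqref{CPmem} together with its symmetric variants~\eqref{eq:mini}--\eqref{eq:mini2}, so once an atom $a$ has been ``seen'' on a branch, any later occurrence of $a$ as a central condition on that branch can be eliminated outright. By Lemma~\ref{lem:nf} we may assume $t$ is already a basic form, and we proceed by structural induction on $t$. The cases $t\equiv\tr$ and $t\equiv\fa$ are trivial (these are \mem-basic forms over any $A'$).

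For the step case $t\equiv t_1\lef a\rig t_2$, by the induction hypothesis we may assume $t_1$ and $t_2$ are \mem-basic forms over $A$ (applying the lemma to the strictly smaller basic forms $t_1,t_2$). What remains is to show that any conditional composition $t_1\lef a\rig t_2$ of \mem-basic forms can be rewritten, using $\CP_\mem$, into a \mem-basic form over $A$, i.e.\ one in which neither $t_1$ nor $t_2$ contains $a$ as a central condition anywhere. As in the \wmem-proof, I would first push the outer $a$ inward using the \CP-distribution axiom~\eqref{CP4} until we confront occurrences of $a$ in $t_1$ or $t_2$; the key new ingredient is that whenever an $a$-labelled node sits below the outer $a$ (on either the positive or negative side), axiom~\eqref{CPmem} (for the negative side) or its symmetric variant~\eqref{eq:mini2} resp.~\eqref{eq:mini} (for the positive/mixed sides) contracts it away. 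Concretely, one saturates the positive subtree of the outer $a$ so that every internal atom $b\ne a$ occurring positively is replaced via a \CP-derivable identity with something of the form $(\ldots\lef b\rig\ldots)$ that exposes the buried $a$'s as central conditions directly under $a$, then applies~\eqref{eq:mini2}; symmetrically on the negative side using~\eqref{eq:mini}. Iterating removes all central occurrences of $a$ from both sides, yielding $t_1'\lef a\rig t_2'$ with $t_1',t_2'$ \mem-basic forms over $A\setminus\{a\}$, hence a \mem-basic form over $A$; and every step was an instance of an axiom (or derived identity) of $\CP_\mem$, so $\CP_\mem\vdash t=t_1'\lef a\rig t_2'$.

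The main obstacle I expect is bookkeeping in the saturation/contraction loop: one must argue that the rewriting of $t_1\lef a\rig t_2$ into a form free of central $a$'s actually terminates, i.e.\ that each pass strictly decreases some well-founded measure (e.g.\ the number of $a$-labelled central conditions, or more carefully a multiset ordering accounting for the temporary blow-up caused by saturation). This is exactly the delicate point that the appendix proof of Lemma~\ref{lem:wm} handles by an explicit worked example rather than a formal measure, and since the paper explicitly refers the reader to~\cite{BP10} for this lemma, I would structure the proof to parallel that argument as closely as possible — establish the single-step ``conditional composition of \mem-basic forms is provably a \mem-basic form'' claim by induction on the central condition $t_2$ (as was done for \rp\ in Lemma~\ref{lem:rp}), and then obtain the lemma by a routine structural induction on $t$.

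\begin{proof}
See~\cite{BP10}; this proof is repeated below. By Lemma~\ref{lem:nf} we may assume that $t$ is a basic form and we proceed by structural induction on $t$. If $t\equiv\tr$ or $t\equiv\fa$ there is nothing to prove. If $t\equiv t_1\lef a\rig t_2$ we may assume by induction that $t_1$ and $t_2$ are \mem-basic forms (over $A$). It then suffices to show that the conditional composition of \mem-basic forms can be proved equal, using $\CP_{\mem}$, to a \mem-basic form; this we do by structural induction on the central condition. If the central condition is in $\{\tr,\fa\}$ this is trivial by~\eqref{CP1},~\eqref{CP2}. Otherwise, writing the central condition as $t\lef a\rig t'$, we find by the inner induction hypothesis \mem-basic forms $t_4,t_5$ with
\begin{align*}
t_1\lef(t\lef a\rig t')\rig t_3
&=(t_1\lef t\rig t_3)\lef a\rig(t_1\lef t'\rig t_3)\\
&=t_4\lef a\rig t_5.
\end{align*}
Now saturate $t_4$ on its positive positions and $t_5$ on its negative positions with the \CP-derivable identities $b\lef c\rig d=d\lef(\fa\lef c\rig\tr)\rig b$ so as to expose every central occurrence of $a$ directly below the outer $a$, and eliminate these occurrences using axioms~\eqref{CPmem},~\eqref{eq:mini},~\eqref{eq:mini1} and~\eqref{eq:mini2}; iterating this procedure (which strictly decreases the number of central $a$-occurrences in $t_4\lef a\rig t_5$) yields a \mem-basic form $t_4'\lef a\rig t_5'$ with $\CP_{\mem}\vdash t_4\lef a\rig t_5=t_4'\lef a\rig t_5'$, where $t_4',t_5'$ are \mem-basic forms over $A\setminus\{a\}$. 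With this auxiliary result the lemma's statement follows by the outer structural induction on $t$.
\end{proof}
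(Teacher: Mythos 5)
There is a genuine gap at the heart of your argument. The entire content of this lemma is the step in which every occurrence of $a$ strictly below the outer conditional on $a$ is eliminated, including occurrences nested arbitrarily deep, and that is exactly the step you do not carry out. Axiom \eqref{CPmem} and its variants \eqref{eq:mini}--\eqref{eq:mini2} each remove an occurrence of the central condition only at a fixed shallow position, so some inductive mechanism is needed to reach deeper occurrences. Your proposed ``saturation'' with the identity $b\lef c\rig d=d\lef(\fa\lef c\rig\tr)\rig b$ cannot play this role: that identity merely swaps the two outer branches of a subterm while replacing its central condition by the non-atomic $\fa\lef c\rig\tr$ (so you even leave the class of basic forms), and it does not move a buried $a$ into a position where \eqref{CPmem} or \eqref{eq:mini2} applies. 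The saturation in the proof of Lemma~\ref{lem:wm} is of a different nature --- there each positive atom $b\ne a$ is replaced by $a\lef b\rig\fa$, i.e.\ a fresh guard $a$ is \emph{inserted} above $b$ --- and you give no analogue of it; the termination claim (``strictly decreases the number of central $a$-occurrences'') is asserted, not proved. A further structural oddity: after Lemma~\ref{lem:nf} the central condition of every subterm is an atom, so your inner induction on the central condition is vacuous (the \eqref{CP4}-distribution case never arises), and all of the work is concentrated in the unproved elimination step.

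The paper's proof avoids all of this with a substitution lemma: it shows $\CP_{\mem}\vdash t_1\lef a\rig t_2=[\tr/a]t_1\lef a\rig t_2$ by induction on the basic form $t_1$, using \eqref{eq:mini5} when the head atom of $t_1$ is $a$, and, when it is $b\ne a$, using \eqref{eq:mini1} and \eqref{eq:mini2} to wrap the two subterms of $t_1$ as $t_1'\lef a\rig\tr$ and $t_1''\lef a\rig\tr$ so that the induction hypothesis applies to them, then unwrapping; symmetrically $t_1\lef a\rig t_2=t_1\lef a\rig[\fa/a]t_2$ via \eqref{eq:mini3}, \eqref{CPmem} and \eqref{eq:mini}. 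Simplifying with \eqref{CP1} and \eqref{CP2} yields $a$-free basic forms, and only then is the outer induction hypothesis applied to them, producing mem-basic forms over $A\setminus\{a\}$. To repair your write-up, replace the saturation paragraph by this substitution argument (or supply a genuinely terminating saturation that inserts explicit $a$-guards in the style of Lemma~\ref{lem:wm}).
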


\begin{proof}
First observe that the axioms of $\CP_{\mem}$ imply 
the following simple consequences:
\begin{align}
\label{eq:mini3}
x\lef y\rig(v\lef y\rig w)&=x\lef y\rig w
&&\text{(take $u=\fa$ in axiom \eqref{CPmem})},
\\
\label{eq:mini5}
(x\lef y\rig z)\lef y\rig w&= x \lef y\rig w
&&\text{(take $u=\tr$ in equation~\eqref{eq:mini2})}.
\end{align}
By Lemma~\ref{lem:nf} we may assume that $t$ is a 
basic form and we proceed by structural induction on $t$. 
If $t\equiv\tr$ or $t\equiv\fa$ there is nothing to prove. 

Assume $t\equiv t_1\lef a\rig t_2$. 
We write $[\tr/a]t_1$ for the term that results when 
$\tr$ is substituted for 
$a$ in $t_1$. We first show that
\[\CP_{\mem}\vdash t_1\lef a\rig t_2=[\tr/a]t_1\lef a\rig t_2\]
by induction on $t_1$: if $t_1$ equals $\tr$ or $\fa$ this is clear.
If $t_1\equiv t_1'\lef a\rig t_1''$ then 
$\CP\vdash [\tr/a]t_1=[\tr/a]t_1'$ and we derive
\begin{align*}
t_1\lef a\rig t_2&=(t_1'\lef a\rig t_1'')\lef a\rig t_2\\
&=([\tr/a]t_1'\lef a\rig t_1'')
\lef a\rig t_2&&\text{by IH}\\
&=[\tr/a]t_1'\lef a\rig t_2&&\text{by \eqref{eq:mini5}}\\
&=[\tr/a]t_1\lef a\rig t_2,
\end{align*}
and if $t_1\equiv t_1'\lef b\rig t_1''$ with $b\neq a$ then $\CP\vdash
[\tr/a]t_1=[\tr/a]t_1'\lef b\rig [\tr/a]t_1''$ and we derive
\begin{align*}
t_1\lef a\rig t_2&=(t_1'\lef b\rig t_1'')\lef a\rig t_2\\
&=((t_1'\lef a\rig \tr)
\lef b\rig(t_1''\lef a\rig\tr))
\lef a\rig t_2&&\text{by \eqref{eq:mini1} and \eqref{eq:mini2}}\\
&=(([\tr/a]t_1'\lef a\rig \tr)
\lef b\rig([\tr/a]t_1''\lef a\rig\tr))
\lef a\rig P_2&&\text{by IH}\\
&=([\tr/a]t_1'\lef b
\rig [\tr/a]t_1'')
\lef a\rig t_2&&\text{by \eqref{eq:mini1} and \eqref{eq:mini2}}\\
&=[\tr/a]t_1\lef a\rig t_2.
\end{align*}
In a similar way, but now using \eqref{eq:mini3}, axiom 
\eqref{CPmem} and \eqref{eq:mini} instead, we find 
$\CP_{\mem}\vdash t_1\lef a\rig t_2=t_1\lef a\rig [\fa/a]t_2$, and thus
\[\CP_{\mem}\vdash t_1\lef a\rig t_2=[\tr/a]t_1\lef a\rig [\fa/a]t_2.\]
With axioms \eqref{CP1} and \eqref{CP2} 
we find basic forms $Q_i$ in which $a$ does not occur with 
$\CP_{\mem}\vdash Q_1=[\tr/a]P_1$ and $\CP_{\mem}\vdash 
Q_2=[\fa/a]P_2$. 

By induction it follows that there are \mem-basic
forms $R_1$ and $R_2$ with $\CP_\mem\vdash R_i=Q_i$, and hence
$\CP_\mem\vdash P=R_1\lef a \rig R_2$ and $R_1\lef a \rig R_2$ is a
\mem-basic form. 
\end{proof}

Before proving Lemma~\ref{lem:mem2} (Section~\ref{sec:mem}), 
we first formulate another lemma:
\begin{lemma}
\label{lem:mem1}
For all $a\in A$, $f\in M$, 
$t,t'\in\NT_{\SigmaHMA}$,
and $\rho\in (A\setminus\{a\})^{core}\cup\{\epsilon\}$,
\begin{equation}
\label{eq:pr}
(t'\apply(t\apply (a\apply f)))(\rho a)=
(t\apply (a\apply f))(\rho' a)
\end{equation}
for some $\rho'\in (A\setminus\{a\})^{core}
\cup\{\epsilon\}$.
\end{lemma}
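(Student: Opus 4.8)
The plan is to isolate the single combinatorial fact about the state set $M=\{\tr,\fa\}^{A^{core}}$ that drives the statement. Call a state $g\in M$ \emph{$a$-stable with value $v$} if $g(\sigma a)=v$ for every string $\sigma a\in A^{core}$ (in particular $g(a)=v$); equivalently, $g$ is constant on those strings of $A^{core}$ that end in $a$. Directly from the definition of $\apply$ given just before Lemma~\ref{lem:mem2}, every state of the form $a\apply f$ (with $f\in M$) is $a$-stable with value $f(a)$, since the first clause of that definition gives $(a\apply f)(\sigma a)=f(a)$ for all $\sigma a\in A^{core}$.

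The key auxiliary fact I would then prove is: if $g\in M$ is $a$-stable with value $v$, then for every $u\in\NT_{\SigmaHMA}$ the state $u\apply g$ is again $a$-stable with value $v$. This goes by structural induction on $u$ using only axioms~\eqref{SPA4}--\eqref{SPA6} together with two-valuedness~\eqref{SPA7}; note that no appeal to well-definedness modulo $=_{\mem}$ is made, which matters because that fact is not yet available at this stage. The cases $u\in\{\tr,\fa\}$ are immediate from~\eqref{SPA4} and~\eqref{SPA5}. For an atom $u\equiv a$, the first clause of the definition of $a\apply g$ gives $(a\apply g)(\sigma a)=g(a)=v$. For an atom $u\equiv b$ with $b\neq a$, the string $\sigma a$ does not end in $b$, so the second clause applies and $(b\apply g)(\sigma a)=g\bigl(b((\sigma a)-b)\bigr)$; here $(\sigma a)-b$ still ends in $a$ and $b((\sigma a)-b)\in A^{core}$, so this value is $v$ by $a$-stability of $g$, and likewise $(b\apply g)(a)=g(ba)=v$. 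For $u\equiv u_1\lef u_2\rig u_3$, axioms~\eqref{SPA6} and~\eqref{SPA7} show that $u\apply g$ equals either $u_1\apply(u_2\apply g)$ or $u_3\apply(u_2\apply g)$; applying the induction hypothesis first to $u_2$ and then to $u_1$ (resp.\ $u_3$) gives the claim.

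Granting this, Lemma~\ref{lem:mem1} follows at once. Starting from the $a$-stable state $a\apply f$ (value $f(a)$), the auxiliary fact gives that $t\apply(a\apply f)$ is $a$-stable with value $f(a)$, and then $t'\apply(t\apply(a\apply f))$ is $a$-stable with value $f(a)$ as well. Since $\rho a\in A^{core}$ ends in $a$, this yields $\bigl(t'\apply(t\apply(a\apply f))\bigr)(\rho a)=f(a)=\bigl(t\apply(a\apply f)\bigr)(a)$, so the statement holds with $\rho'\equiv\epsilon$ (indeed any $\rho'\in(A\setminus\{a\})^{core}\cup\{\epsilon\}$ works).

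The only delicate point is the atom case $u\equiv b$, $b\neq a$, of the induction: one must verify that deleting $b$ from a string $\sigma a\in A^{core}$ ending in $a$ neither disturbs the trailing $a$ nor leaves $A^{core}$, so that $a$-stability of $g$ may legitimately be invoked on the shifted argument $b((\sigma a)-b)$. Everything else is routine bookkeeping with the definitions of $\reply$ and $\apply$.
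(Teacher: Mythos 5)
Your proof is correct, but it takes a genuinely different route from the paper's. The paper proves the lemma by structural induction on $t'$ directly on the stated equation, with the induction hypothesis quantified over all $t$: in the case $t'\equiv t_1\lef t_2\rig t_3$ it rewrites $t_1\apply(t_2\apply(t\apply(a\apply f)))$ as $t_1\apply((t_2\lef t\rig t_2)\apply(a\apply f))$ and applies the IH twice, once to $t_1$ with $t_2\lef t\rig t_2$ playing the role of $t$ and once to $t_2$; the price is the existential bookkeeping over the strings $\rho'$, $\rho''$. You instead isolate an invariant --- $a$-stability, i.e.\ constancy of a state on all strings of $A^{core}$ ending in $a$ --- show that $a\apply f$ has it with value $f(a)$, and prove by structural induction on a single term $u$ that every $u\apply(\cdot)$ preserves it (correctly handling the only nontrivial case, $u\equiv b\neq a$, where $b((\sigma a)-b)$ still ends in $a$ and stays in $A^{core}$). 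Your conclusion is strictly stronger than the lemma as stated: both sides equal the constant $f(a)$, and $\rho'=\epsilon$ always works, whereas the paper only asserts existence of some $\rho'$. What the paper's formulation buys is a statement whose shape matches exactly how it is consumed in the proof of Lemma~\ref{lem:mem2} (where it is instantiated with $t=\tr$ and the clause $(a\apply f)(\rho' a)=f(a)$ is invoked separately); what yours buys is a cleaner separation of the combinatorial content from the induction, no existential quantifier to thread through the conditional case, and a reusable preservation lemma. Both arguments rest only on \eqref{SPA4}--\eqref{SPA7} and the concrete definition of $a\apply f$ on $M$, so neither needs well-definedness modulo $=_{\mem}$, as you note.
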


\begin{proof}
By structural induction on $t'$.

If $t'\in\{\tr,\fa\}$ then \eqref{eq:pr} follows immediately.

If $t'\equiv a$ then $(a\apply(t\apply (a\apply f)))(\rho a)=
(t\apply (a\apply f))(a)$. 
\\\indent
Note that this case also covers $A=\{a\}$.

If $t'\equiv b\not\equiv a$ then $(b\apply(t\apply (a\apply f)))(\rho a)=
(t\apply (a\apply f))(b(\rho-b)a)$.

If $t'\equiv t_1\lef t_2\rig t_3$ we make a case distinction:
\begin{quote}
$(i)$ $t_2\reply (t\apply (a\apply f))=\tr$. Then 
\begin{align*}
(t'\apply(t\apply (a\apply f)))(\rho a)&=
(t_1\apply(t_2\apply(t\apply (a\apply f))))(\rho a)\\
&=
(t_1\apply((t_2\lef t\rig t_2)\apply (a\apply f)))(\rho a)\\
&=
((t_2\lef t\rig t_2)\apply (a\apply f))(\rho' a)&&\text{(by IH)}\\
&=
(t_2\apply(t\apply (a\apply f)))(\rho' a)\\
&=
(t\apply (a\apply f))(\rho'' a).&&\text{(by IH)}
\end{align*}

$(ii)$ $t_2\reply (t\apply (a\apply f))=\fa$. Similar.
\end{quote}
\end{proof}

\begin{la}[This is Lemma~\ref{lem:mem2}, Section~\ref{sec:mem}]
For all $f\in M$ and
$t,t'\in\NT_{\SigmaHMA}$,
\begin{equation*}
\tag{\ref{eq:prof}}
t\reply(t'\apply(t \apply f))=t\reply f~\wedge~ 
t\apply(t'\apply(t \apply f))=t'\apply(t\apply f).
\end{equation*}
\end{la}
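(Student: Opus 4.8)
The plan is to prove equation~\eqref{eq:prof} by structural induction on $t$, using Lemma~\ref{lem:mem1} as the essential tool for the base-level combinatorial bookkeeping on $M$. The base cases are $t\equiv\tr$, $t\equiv\fa$, and $t\equiv a\in A$. For $t\in\{\tr,\fa\}$ both conjuncts are immediate from \eqref{SPA1}, \eqref{SPA2}, \eqref{SPA4}, \eqref{SPA5}. For $t\equiv a$, we have $a\reply f=f(a)$ and, from the explicit definition of $a\apply{}$ on $M$, $a\apply(a\apply f)=a\apply f$; consequently $a\reply(t'\apply(a\apply f))$ should be shown to equal $f(a)$ and $a\apply(t'\apply(a\apply f))=t'\apply(a\apply f)$. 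Here is exactly where Lemma~\ref{lem:mem1} enters: applied with $\rho=\epsilon$, it gives $(t'\apply(a\apply f))(a)=(a\apply f)(\rho'a)=f(a)$ for a suitable $\rho'\in(A\setminus\{a\})^{core}\cup\{\epsilon\}$, since $(a\apply f)(\rho'a)=f(a)$ by the definition of $a\apply{}$. This handles the reply conjunct; the apply conjunct, $a\apply(t'\apply(a\apply f))=t'\apply(a\apply f)$, follows because $a\reply(t'\apply(a\apply f))=f(a)$ is fixed and so applying $a$ to the state $t'\apply(a\apply f)$ does not change it — again using the $A^{core}$-absorption in the definition of $a\apply{}$, and Lemma~\ref{lem:mem1} to evaluate the relevant coordinates.

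For the induction step, let $t\equiv t_1\lef t_2\rig t_3$ and split on the reply value of $t_2$. Consider case $(i)$, $t_2\reply f=\tr$ (case $(ii)$ is symmetric). Then $t\reply f=t_1\reply(t_2\apply f)$ and $t\apply f=t_1\apply(t_2\apply f)$ by \eqref{SPA3} and \eqref{SPA6}. The strategy is to rewrite $t'\apply(t\apply f)=t'\apply(t_1\apply(t_2\apply f))$ and repeatedly insert factors of the form $(s\lef u\rig s)$, which act as plain $s\apply{}$ after applying $u$, using the observation recorded just before Lemma~\ref{lem:mem2} that $(t'\lef t\rig t')\apply f=t'\apply(t\apply f)$, together with axiom~\eqref{SPA7}. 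Concretely, I would mimic the computation given for equation~\eqref{eq:pro} in the proof of Corollary~\ref{cor:mem}: write $t'\apply(t_1\apply(t_2\apply f))=(t'\lef t_1\rig t')\apply(t_2\apply f)$, then use the induction hypothesis for $t_2$ (with $t'\lef t_1\rig t'$ in the role of the ``$t'$'' parameter) to collapse $t_2\apply((t'\lef t_1\rig t')\apply(t_2\apply f))$ down to $(t'\lef t_1\rig t')\apply(t_2\apply f)$ and to preserve $t_2\reply{}$; then peel off $t_1$ and apply the induction hypothesis for $t_1$. Chaining these rewrites yields $t\reply(t'\apply(t\apply f))=t_1\reply(t_2\apply f)=t\reply f$ and $t\apply(t'\apply(t\apply f))=t'\apply(t_1\apply(t_2\apply f))=t'\apply(t\apply f)$.

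The main obstacle is organising the induction-step rewriting cleanly: the induction hypothesis is applied to the \emph{subterms} $t_1$ and $t_2$ but with an arbitrary second argument, and one must apply it several times in the right order, each time recognising a subexpression of the shape $u\apply(v\apply(u\apply{}\;\cdot))$ where $u$ is a proper subterm of $t$. It is crucial that the statement of \eqref{eq:prof} quantifies $t'$ universally and over \emph{all} of $M$, so that these nested instantiations are legitimate; I would state the induction hypothesis in that fully general form from the outset. The only genuinely new combinatorial content lies in the base case $t\equiv a$, which is precisely why Lemma~\ref{lem:mem1} is proved separately — it records that applying any $t'$ to a state of the form $t\apply(a\apply f)$ leaves every coordinate ending in $a$ equal to some coordinate of $t\apply(a\apply f)$ that also ends in $a$ (with the intervening letters drawn from $A\setminus\{a\}$ and each occurring at most once), which is exactly what is needed to see that a trailing $a$ ``remembers'' its value. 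Once that lemma is in hand, the base case is a short computation and the induction step is a bookkeeping exercise parallel to the one already carried out for Corollary~\ref{cor:mem}.
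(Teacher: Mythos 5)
Your overall architecture --- structural induction on $t$ with the second argument $t'$ kept universally quantified in the induction hypothesis, Lemma~\ref{lem:mem1} for the atomic base case, and the chain of rewrites via $(t'\lef t_1\rig t')\apply s=t'\apply(t_1\apply s)$ in the inductive step --- is exactly the paper's, and your treatment of the inductive step and of the reply conjunct for $t\equiv a$ is sound. The gap is in the apply conjunct of the base case $t\equiv a$. You claim that $a\apply(t'\apply(a\apply f))=t'\apply(a\apply f)$ follows because the reply $a\reply(t'\apply(a\apply f))=f(a)$ is fixed, with Lemma~\ref{lem:mem1} evaluating ``the relevant coordinates''. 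But writing $g=t'\apply(a\apply f)$, the identity $a\apply g=g$ requires two separate facts: that $g(\rho a)=g(a)$ for every $\rho$ (coordinates ending in $a$), which Lemma~\ref{lem:mem1} does give you, and that $g(a(\sigma-a))=g(\sigma)$ for every $\sigma\in A^{core}$ \emph{not} ending in $a$, about which Lemma~\ref{lem:mem1} says nothing --- it only concerns arguments of the form $\rho a$. The second condition is not a consequence of the first: an arbitrary $h\in M$ with $h(\rho a)=h(a)$ for all $\rho$ need not satisfy $a\apply h=h$ (one can still have $h(b)\ne h(ab)$, whence $(a\apply h)(b)\ne h(b)$), so one must prove that states of the particular form $t'\apply(a\apply f)$ do satisfy it.

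The paper closes exactly this gap with a second, nested structural induction on $t'$, whose only substantial case is $t'\equiv b$ with $b\not\equiv a$; there one verifies $(a\apply(b\apply(a\apply f)))(\rho)=(b\apply(a\apply f))(\rho)$ by a three-way case split on whether $\rho$ ends in $a$, ends in $b$, or neither, unfolding the definition of $\apply$ on $M=\{\tr,\fa\}^{A^{core}}$ until both sides reduce to the same value $f(ab(\cdots))$. So the base case is not ``a short computation once Lemma~\ref{lem:mem1} is in hand''; it carries its own induction and most of the coordinate bookkeeping of the whole proof. You would need to supply that argument (or an equivalent one) for your proposal to be complete.
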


\begin{proof}
We prove this property by structural induction on $t$.
\\[1mm]
If $t\in\{\tr,\fa\}$ then \eqref{eq:prof} follows immediately.
\\[1mm]
If $t\equiv a\in A$ then apply Lemma~\ref{lem:mem1} with
$t=\tr$ and derive
$(t'\apply a\apply f)(\rho a)=(a\apply f)(\rho' a)=f(a)$,
thus
$a\reply(t'\apply(a \apply f))=
(t'\apply(a \apply f))(a)=f(a)=a\reply f$.
Furthermore, 
$a\apply(t'\apply(a \apply f))=
t'\apply(a\apply f)$ follows by structural induction on $t'$:
\begin{quote}
$t'\equiv\tr$:
$
(a\apply(a\apply f))(\sigma)=
\begin{cases}
(a\apply f)(a)=f(a)&\text{if $\sigma=a$ or $\sigma=\rho a$},\\
(a\apply f)(a(\sigma-a))=f(a(\sigma-a))&
\text{otherwise},
\end{cases}
$
\\
thus $a\apply (a\apply f)=a\apply f$,

$t'\equiv\fa$: similar,

$t'\equiv a$: similar,

$t'\equiv b\not\equiv a$ then consider both functions applied
to $\rho\in A^{core}$: 

$(i)$ if $\rho$ ends with $a$
then by definition both 
functions yield $f(a)$, 

$(ii)$ if $\rho$ ends with $b$ then
$(a\apply(b\apply (a\apply f)))(\rho)= (b\apply (a\apply f))(a(\rho-a))
=(a\apply f)(b)=f(ab)$
and $(b\apply (a\apply f))(\rho)=(a\apply f)(b)=f(ab)$,

$(iii)$ in the remaining case
$\rho$ does not end with either $a$ or $b$, so
\\
$(a\apply(b\apply(a \apply f)))(\rho)=
(b\apply(a \apply f))(a(\rho-a))=
(a \apply f)(ba((\rho-a)-b))=f(ab((\rho-a)-b))
$
and
$(b\apply(a \apply f))(\rho)=
(a \apply f)(b(\rho-b))=
f(ab((\rho-b)-a))
$, 
\\
so both functions are the same,

$t'\equiv t'_1\lef t'_2\rig t'_3$ and we make a case distinction:

$(i)$ if $t'_2\reply(a\apply f)=\tr$ then we find by IH that
\begin{align*}
t'\apply(a\apply f)&=
t'_1\apply(t'_2\apply(a\apply f))\\
&=
t'_1\apply(a\apply(t'_2\apply(a\apply f)))\\
&=
a\apply(t'_1\apply(a\apply(t'_2\apply(a\apply f))))
\\
&=a\apply(t'_1\apply(t'_2\apply(a\apply f))),
\end{align*}
$(ii)$ $t'_2\reply(a\apply f)=\fa$. Similar.
\end{quote}
If $t \equiv t_1\lef t_2\rig t_3$ we make
a case distinction:
\\[1mm]
$(i)$ $t_2\reply f=\tr$. By IH we find
$t_2\reply(t'\apply(t_1\apply (t_2\apply f)))=
t_2\reply((t'\lef t_1\rig t')\apply (t_2\apply f))=
t_2\reply f=\tr$ and we derive
\begin{align*}
t\reply(t'\apply(t \apply f))&=t\reply(t'\apply(t_1\apply (t_2\apply f)))
\\
&=(t_1\lef t_2\rig t_3)\reply(t'\apply(t_1\apply (t_2\apply f)))\\
&=t_1\reply(t_2\apply(t'\apply(t_1\apply (t_2\apply f))))
\\
&=t_1\reply((t_2\lef t'\rig t_2)\apply(t_1\apply (t_2\apply f)))
\\
&=t_1\reply(t_2\apply f)&&\text{(by IH)}\\
&=t\reply f,
\end{align*}
and
\begin{align*}
t\apply(t'\apply(t \apply f))&=t\apply(t'\apply(t_1\apply (t_2\apply f)))
\\
&=(t_1\lef t_2\rig t_3)\apply(t'\apply(t_1\apply (t_2\apply f)))\\
&=t_1\apply(t_2\apply(t'\apply(t_1\apply (t_2\apply f))))
\\
&=t_1\apply((t_2\lef t'\rig t_2)\apply(t_1\apply (t_2\apply f)))
\\
&=t'\apply(t_1\apply(t_2\apply f))&&\text{(by IH)}\\
&=t'\apply(t\apply f).
\end{align*}
$(ii)$ $t_2\reply f=\fa$. Similar.
\end{proof}


\begin{thebibliography}{99}

\bibitem{BBR95}
J.A. Bergstra, I. Bethke, and P.H. Rodenburg.
A propositional logic with 4 values: true, false, divergent and 
meaningless. 
\newblock \emph{Journal of Applied Non-Classical Logics}, 
5(2):199--218, 1995.

\bibitem{BM09}
J.A. Bergstra and C.A. Middelburg.
Instruction sequence processing operators.
Available from 
\url{http://www.science.uva.nl/research/prog/publications.html}, 
and from \url{http://arxiv.org/: ArXiv:0910.5564v2} [cs.LO],
2009.

\bibitem{BP98}
J.A. Bergstra and A. Ponse. 
Kleene's three-valued logic and process algebra.
\newblock \emph{Information Processing Letters}, 
67(2):95--103, 1998. 

\bibitem{BP10}
J.A. Bergstra and A. Ponse. 
Proposition Algebra. 
To appear in \emph{Transactions on Computational Logic}. 
Version submitted by the authors available 
as \url{http://tocl.acm.org/accepted/405ponse.pdf}, July 2010.
 Prior version: Proposition Algebra with Projective Limits, 
 available at arXiv:0807.3648, September 2008. 

\bibitem{BP10a}
J.A. Bergstra and A. Ponse. 
Short-circuit logic. 
Available at arXiv:1010.3674v2 [cs.LO], November 2010. 

\bibitem{HHH87}
I.J. Hayes, H. Jifeng, 
C.A.R. Hoare, C.C.~Morgan,
A.W.~Roscoe, J.W.~Sanders, I.H.~Sorensen, 
J.M.~Spivey,  and B.A.~Sufrin.
\newblock Laws of programming.
\newblock {\em Communications of the ACM}, 3(8):672--686, 1987.

\bibitem{Hoa85a}
C.A.R. Hoare.
\newblock {\em Communicating Sequential Processes}.
\newblock Prentice-Hall, Englewood Cliffs, 1985.

\bibitem{Hoa85}
C.A.R. Hoare. A couple of novelties in the propositional calculus. 
\newblock \emph{Zeitschrift fur Mathematische Logik und 
Grundlagen der Mathematik}, 31(2):173-178, 1985.

\bibitem{McC63}
J. McCarthy. A basis for a mathematical theory of computation.
\newblock In P. Braffort and D.~Hirshberg (eds.), 
\emph{Computer Programming and Formal Systems}, 
North-Holland, pages 33--70, 1963.

\bibitem{Chris}
B.C. Regenboog.
\newblock Reactive valuations.
\newblock MSc. thesis Logic, University of Amsterdam. 
December 2010.

\bibitem{Terese}
Terese. \emph{Term Rewriting Systems}.
\newblock
Cambridge Tracts in Theoretical Computer Science, 
Vol. 55, Cambridge University Press, 2003.

\end{thebibliography}
\end{document}